\tikzset{
	BC/.style = {decorate,
		decoration={calligraphic brace, amplitude=5pt, raise=1mm},
		very thick, pen colour={black}
	},
}
\newtheorem{theorem}{Theorem}
\newtheorem{corollary}[theorem]{Corollary}
\newtheorem{lemma}[theorem]{Lemma}
\newtheorem{proposition}[theorem]{Proposition}
\theoremstyle{definition}
\newtheorem{example}[theorem]{Example}
\DeclareMathOperator{\LandauO}{\mathrm{O}}
\DeclareMathOperator{\Tr}{Tr}
\DeclareMathOperator{\Ad}{Ad}
\DeclareMathOperator{\diag}{diag}
\newcommand{\fro}{\mathrm{F}}
\newcommand{\bounded}[1]{\mathcal{B}(#1)} 
\DeclareMathOperator{\Var}{Var}
\DeclareMathOperator{\U}{U}
\DeclareMathOperator{\SU}{SU}
\DeclareMathOperator{\GL}{GL}
\newcommand{\CC}{\mathbb{C}}
\newcommand{\RR}{\mathbb{R}}
\newcommand{\ZZ}{\mathbb{Z}}
\newcommand{\NN}{\mathbb{N}}
\newcommand{\1}{\mathds{1}}
\newcommand{\EE}{\mathbb{E}}
\newcommand{\R}{\mathbb{R}}
\newcommand{\C}{\mathbb{C}}
\newcommand{\mc}[1]{\mathcal{#1}}
\newcommand{\mcH}{\mc{H}}
\renewcommand{\H}{\mcH}
\newcommand{\argdot}{{\,\cdot\,}}
\renewcommand{\vec}[1]{\boldsymbol{#1}}
\newcommand{\ad}{\dagger}
\DeclarePairedDelimiterX{\abs}[1]{\lvert}{\rvert}{%
  \ifblank{#1}{\,\cdot\,}{#1}
}   
\DeclarePairedDelimiterX\norm[1]\lVert\rVert{%
  \ifblank{#1}{\,\cdot\,}{#1}
}   
\DeclarePairedDelimiterX{\iiiNorm}[1]{\lvert}{\rvert}{%
  \delimsize\lvert\delimsize\lvert#1\delimsize\rvert\delimsize\rvert%
}
\DeclarePairedDelimiterXPP\snorm[1]{}\lVert\rVert{_\infty}{\ifblank{#1}{\,\cdot\,}{#1}}   
\DeclarePairedDelimiterXPP\twonorm[1]{}\lVert\rVert{_2}{\ifblank{#1}{\,\cdot\,}{#1}}   
\DeclarePairedDelimiterXPP\trnorm[1]{}\lVert\rVert{_1}{\ifblank{#1}{\,\cdot\,}{#1}}   
\DeclarePairedDelimiterXPP\fnorm[1]{}\lVert\rVert{_{\fro}}{\ifblank{#1}{\,\cdot\,}{#1}}   
\DeclarePairedDelimiterXPP\dnorm[1]{}\lVert\rVert{_\diamond}{\ifblank{#1}{\,\cdot\,}{#1}}   
\DeclarePairedDelimiterXPP\cbnorm[1]{}\lVert\rVert{_\mathrm{cb}}{\ifblank{#1}{\,\cdot\,}{#1}}   
\DeclarePairedDelimiterXPP\onenorm[1]{}\lVert\rVert{_{1\rightarrow 1}}{\ifblank{#1}{\,\cdot\,}{#1}}   
\DeclarePairedDelimiterXPP\ddnorm[1]{}\lVert\rVert{_{\diamond\rightarrow \diamond}}{\ifblank{#1}{\,\cdot\,}{#1}}   
\DeclarePairedDelimiterXPP\ssnorm[1]{}\lVert\rVert{_{\infty\rightarrow\infty}}{\ifblank{#1}{\,\cdot\,}{#1}}   
\DeclarePairedDelimiterX\Set[1]\{\}{%
  
  #1
}
\DeclarePairedDelimiterX\innerp[2]{\langle}{\rangle}{%
  \ifblank{#1}{\,\cdot\,}{#1} , \ifblank{#2}{\,\cdot\,}{#2}%
}
\DeclarePairedDelimiter{\ket}{\vert}{\rangle}
\newcommand{\ketb}[1]{\ket[\big]{#1}}
\DeclarePairedDelimiterX\braket[2]{\langle}{\rangle}%
  {#1\kern0.15ex\delimsize\vert\kern0.15ex\mathopen{}#2}
\DeclarePairedDelimiterX\ketbra[2]{\vert}{\vert}%
  {#1\kern0.15ex\delimsize\rangle\delimsize\langle\kern0.15ex\mathopen{}#2}
\DeclarePairedDelimiterX\sandwich[3]{\langle}{\rangle}%
  {#1\,\delimsize\vert\kern0.15ex\mathopen{}#2\kern0.15ex\delimsize\vert\kern0.15ex\mathopen{}#3}
\DeclarePairedDelimiterX\obraket[2]{(}{)}%
  {#1\kern0.15ex\delimsize\vert\kern0.15ex\mathopen{}#2}
\DeclarePairedDelimiterX\oketbra[2]{\vert}{\vert}%
  {#1\kern0.15ex\delimsize)\delimsize(\kern0.15ex\mathopen{}#2}
\DeclarePairedDelimiterX\osandwich[3]{(}{)}%
  {#1\,\delimsize\vert\kern0.15ex\mathopen{}#2\kern0.15ex\delimsize\vert\kern0.15ex\mathopen{}#3}
\newcommand{\myleft}{\mathopen{}\mathclose\bgroup\left}
\newcommand{\myright}{\aftergroup\egroup\right}
\newcommand{\nmodes}{m} 
\newcommand{\ninput}{\vec{n}_0} 
\newcommand{\ninputdual}{\bar{\vec{n}}_0} 
\newcommand{\ninputGT}{N_0} 
\newcommand{\ninputGTdual}{\bar N_0} 
\newcommand{\seqlength}{l}
\newcommand{\sequences}{\mathbb{L}}
\newcommand{\numsamples}{\mathbb{T}}
\NewDocumentCommand\Sp{mg}{ 
	\ensuremath{\mathrm{Sp}\IfNoValueTF{#1}{}{(#1)}}%
}
\newcommand{\fock}{\mathcal F} 
\newcommand{\GT}[1]{\mathrm{GT}({#1})} 
\newcommand{\ssyt}[1]{\mathrm{SSYT}({#1})} 
\newcommand{\ytzero}[1]{\mathrm{SSYT}^{(\vec 0)}({#1})} 
\newcommand{\cg}{C} 
\DeclareMathOperator{\per}{\mathrm{Per}} 
\newcommand{\hhu}{%
  Heinrich Heine University D{\"u}sseldorf, 
  Faculty of Mathematics and Natural Sciences,
  D{\"u}sseldorf, 
  Germany
}
\newcommand{\tuhh}{%
	Hamburg University of Technology,
    Institute for Quantum Inspired and Quantum Optimization, 
    Blohmstraße 15,
    Hamburg,
    Germany
}
\newcommand{\qusoft}{%
    QuSoft,
    Amsterdam, The Netherlands
}
\newcommand{\uva}{%
	Institute for Logic, Language and Computation,
    University of Amsterdam,
    Amsterdam, The Netherlands
}
\newcommand{\col}{%
Institute for Theoretical Physics, University of Cologne, Cologne, Germany.
}
\begin{document}

\title{Bosonic randomized benchmarking with passive transformations}

\author{Mirko Arienzo}
\email{mirko.arienzo@tuhh.de}
\affiliation{\tuhh}
\author{Dmitry Grinko}
\affiliation{\uva}
\affiliation{\qusoft}
\author{Martin Kliesch}
\affiliation{\tuhh}
\author{Markus Heinrich}
\affiliation{\hhu}
\affiliation{\col}

\begin{abstract}
	Randomized benchmarking (RB) is the most commonly employed protocol for the characterization of unitary operations in quantum circuits due to its reasonable experimental requirements and robustness against state preparation and measurement (SPAM) errors.
	So far, the protocol has been limited to discrete or fermionic systems, whereas extensions to bosonic systems have been unclear for a long time due to challenges arising from the underlying bosonic Hilbert space. 
	In this work, we close the gap for bosonic systems and develop an RB protocol to benchmark passive Gaussian transformations on any particle number subspace,
	which we call \emph{passive bosonic RB}.
	The protocol is built on top of
	the recently developed filtered RB framework \cite{helsenGeneralFrameworkRandomized2020,heinrichRandomizedBenchmarkingRandom2023} and is designed to isolate the multitude of exponential decays arising for passive bosonic transformations.
	We give explicit formulas and a Julia implementation for the necessary post-processing of the experimental data.
	We also analyze the sampling complexity of passive bosonic RB by deriving analytical expressions for the variance.
	They show a mild scaling with the number of modes, suggesting that passive bosonic RB is experimentally feasible for a moderate number of modes.
	We focus on experimental settings involving Fock states and particle number resolving measurements, but also discuss Gaussian settings, deriving first results for heterodyne measurements.
\end{abstract}

\maketitle

\section{Introduction}

The characterization of quantum devices, and, in particular, of the involved unitary operations, is a fundamental task in quantum information processing \cite{Kliesch2020TheoryOfQuantum, Eisert2020QuantumCertificationAnd}.
While being a well-studied field for discrete variable systems, a similar standing for \ac{CV} systems has not yet been achieved.
In fact, there are only very few works apart from \ac{CV} tomography and the first rigorous guarantees for the latter were proven only recently \cite{wuQuantumEnhancedLearningContinuousVariable2023,meleLearningQuantumStates2024}.
The required number of measurements is experimentally extremely challenging
\cite{lvovskyContinuousvariableOpticalQuantum2009, lopezProgressScalableTomography2010, schmiegelowSelectiveEfficientQuantum2011, benderskySelectiveEfficientQuantum2013, namikiSchmidtnumberBenchmarksContinuousvariable2016,altepeterAncillaAssistedQuantumProcess2003,baiTestOneTest2018}
and as a result, full quantum tomography is often not feasible for \ac{CV} systems.
Moreover, tomographic protocols typically suffer from \ac{SPAM} errors, as in the discrete variable case.

Bosonic quantum systems play a major role in the design of quantum computing platforms.
Most prominently, this includes photonic quantum computing as a popular proposal for scalable quantum computers \cite{knillSchemeEfficientQuantum2001, koashiProbabilisticManipulationEntangled2001, kokLinearOpticalQuantum2007, rudolph2016i, tanResurgenceLinearOptics2018, alexanderManufacturablePlatformPhotonic2024}.
The biggest advantages of this model rely on the implementation of particle sources, detectors, and linear optical circuits on the same integrated chips \cite{politiSilicaonSiliconWaveguideQuantum2008, sprengersWaveguideSuperconductingSinglephoton2011, silverstoneOnchipQuantumInterference2014, meanyLaserWrittenCircuits2015}, and access to mixed schemes for quantum error correction \cite{bourassaBlueprintScalablePhotonic2021}.
Bosonic systems also offer interesting non-universal models of computation to test quantum supremacy, such as (Gaussian) boson sampling \cite{aaronsonComputationalComplexityLinear2010,hamiltonGaussianBosonSampling2017, chakhmakhchyanBosonSamplingGaussian2017}, 
which have been implemented successfully in real experiments recently \cite{zhongQuantumComputationalAdvantage2020,zhongPhaseProgrammableGaussianBoson2021,madsenQuantumComputationalAdvantage2022}.

Most challenges in the characterization of bosonic systems can be attributed to the particularities of the bosonic Hilbert space \cite{lvovskyContinuousvariableOpticalQuantum2009, sharmaCharacterizingPerformanceContinuousvariable2020}.
For instance, protocols that rely on scrambling techniques via unitary designs are challenging to adapt, since Gaussian unitaries only form a unitary $1$-design \cite{blume-kohoutCuriousNonexistenceGaussian2011, zhuangScramblingComplexityPhase2019}, and unitary $2$-designs cannot exist for \ac{CV} systems, unless rigged Hilbert spaces are considered \cite{iosueContinuousvariableQuantumState2022}.
Such issues constrain characterization protocols to very specific settings, where the implemented unitary operators (which may include non-Gaussian single-mode unitaries) are only benchmarked w.r.t.\ a restricted set of input states \cite{fariasCertificationContinuousvariableGates2021}. 

For discrete variable systems, \ac{RB} \cite{EmeAliZyc05, Levi07EfficientErrorCharacterization, DanCleEme09, EmeSilMou07, KniLeiRei08,MagGamEme11, Magesan2012,proctor2017WhatRandomizedBenchmarking, wallman2018randomized, merkelRandomizedBenchmarkingConvolution2019} is the most widespread family of protocols for the estimation of average gate fidelities.
Its popularity is due to its robustness against \ac{SPAM} errors and its rather low demands on the measurement effort \cite{HarHinFer19}.
The standard \ac{RB} protocol is as follows:
To a fixed initial state, apply a sequence of Haar-random (Clifford) unitaries, followed by a final \emph{inversion gate} cancelling the action of the entire sequence.
Then, the success probability of restoring the initial state decays exponentially with the length of the sequence, and the decay rate is a proxy for the average gate fidelity of the gate set.

Recent efforts led to general guarantees for \ac{RB} protocols with general finite or compact groups \cite{francaApproximateRandomizedBenchmarking2018,helsenNewClassEfficient2019, helsenGeneralFrameworkRandomized2020, helsenEstimatingGatesetProperties2021, kongFrameworkRandomizedBenchmarking2021,heinrichRandomizedBenchmarkingRandom2023} instead of only unitary $2$-designs.
Due to the non-existence of the latter, this generality is clearly important when considering \ac{RB} of \ac{CV} systems.
However, there is still a fundamental problem:
The \ac{RB} signal generally consists of a linear combination of exponential decays in correspondence with the relevant \acp{irrep} of the used group \cite{francaApproximateRandomizedBenchmarking2018,helsenNewClassEfficient2019,helsenGeneralFrameworkRandomized2020}. 
Estimating these decay rates is already difficult in practice if more than a few irreps are involved \cite{helsenGeneralFrameworkRandomized2020, heinrichRandomizedBenchmarkingRandom2023} and may become impossible in the \ac{CV} setting.

Very recently, a bosonic randomized benchmarking protocol using random displacements has been put forward \cite{valahu_benchmarking_2024}.
Due to the Abelian character of the (projective) Heisenberg-Weyl group and the considered noise models, the complicated behaviour involving many decays is avoided in this setting, allowing for a simple analysis.

However, for more complicated -- in particular non-Abelian -- groups, one cannot hope for such a simple analysis.
To resolve the problems coming with many irreps, filtered \ac{RB} has been recently proposed \cite{helsenGeneralFrameworkRandomized2020, heinrichRandomizedBenchmarkingRandom2023}, building on insights from character \ac{RB} \cite{helsenNewClassEfficient2019}.
This protocol omits the inversion gate and instead performs a suitable post-processing of the data.
During the post-processing, contributions associated to  individual irreps can be isolated in a \ac{SPAM}-robust way, allowing to handle compact groups with many irreps.

In this work, we introduce a \ac{RB} protocol 
for the benchmarking of passive bosonic (Gaussian) transformations, called \emph{passive bosonic \ac{RB}}, or short \emph{passive \ac{RB}}.
The passive \ac{RB} protocol can be understood as a bosonic incarnation of the filtered \ac{RB} framework \cite{heinrichRandomizedBenchmarkingRandom2023}, and we build on its results to give a comprehensive analysis of passive \ac{RB}, including its sampling complexity, under the assumption of gate-dependent, stationary, and Markovian noise.

In contrast to the discrete variable systems explicitly considered in Ref.~\cite{heinrichRandomizedBenchmarkingRandom2023}, reasonable guarantees for a filtered \ac{RB} protocol in bosonic systems are harder to obtain.
This is because the action of passive transformation on the full bosonic Hilbert space is considerably more complicated than, e.g., the action of the qubit Clifford group, and thus a well-behaved protocol requires a careful identification of relevant and feasible experimental settings.

In the passive \ac{RB} protocol, we navigate these complications by considering experiments where the input state is a number state and \ac{PNR} measurements are performed, resembling boson sampling experiments \cite{aaronsonComputationalComplexityLinear2010}.
For practical reasons, we propose to use collision-free states with at most one particle per mode.
We then show that the passive \ac{RB} successfully benchmarks the average quality of passive transformations on a fixed number of particles, thereby separating the effect of particle number-preserving noise from particle loss using different post-processing of the same experimental data.
In general, we prove that there are as many decay rates as we have particles in the experiments -- however it is sufficient to only estimate a constant number of them in practice.
By representation-theoretic means, we derive explicit and concise formulas for the functions used in the post-processing and provide a Julia implementation.
As the protocol involves non-Gaussian elements, the classical post-processing requires the computation of matrix permanents, and it is therefore generally inefficient \cite{aaronsonComputationalComplexityLinear2010}.
Specifically, we expect the post-processing to be feasible for $\approx 30$ particles and modes, thus allowing the protocol to target an experimentally interesting regime.
The inefficiency of the post-processing should not be unexpected, as a similar behaviour can be found in linear cross-entropy benchmarking \cite{liu_benchmarking_2022,heinrichRandomizedBenchmarkingRandom2023}.
There, the post-processing involves the simulation of random circuit sampling, which is also known to be computationally hard \cite{boixoCharacterizingQuantumSupremacy2018,aruteQuantumSupremacyUsing2019,liuRedefiningQuantumSupremacy2021,hangleiter_computational_2023}.
We simulate the passive \ac{RB} protocol for a small number of modes, demonstrating that it works as intended.
Finally, we analyze the sampling complexity of our protocol by computing the relevant variances. 
We evaluate the obtained expressions numerically and find a very mild scaling with the number of modes $m$ which seems be logarithmic.

The effectively finite-dimensional setting is introduced for a good reason:
Probing the full bosonic Hilbert space involves infinitely many relevant irreps and thus also decay rates which, a priori, render a meaningful analysis impossible.
Moreover, such irreps would also appear with non-trivial multiplicities, making the protocol numerically challenging from a practical point of view, as it would require fitting matrix exponential decays \cite{helsenGeneralFrameworkRandomized2020,heinrichRandomizedBenchmarkingRandom2023}.
As a first variation of the passive \ac{RB} setup, we thus consider scenarios in which the input state is still a number state, but the measurement involves (balanced) heterodyne detectors.

The remainder of this work is structured as follows.
In \cref{sec:results} we highlight our main results:
\cref{sec:protocol} spells out our passive \ac{RB} protocol in full detail.
In \cref{sec:guarantees} we analyze our protocol, discuss the role of input states, and prove rigorous guarantees including bounds on its sampling complexity.
\cref{sec:discussion} is devoted to a general discussion of the passive \ac{RB} protocol, open problems, and possible extensions.
The technical details and proofs are deferred to the technical part, \cref{sec:passive_RB_Fock}.
Finally, \cref{sec:passive_rb_Gaussian} contains an analysis of passive \ac{RB} with balanced heterodyne measurements.

\section{Main results}
\label{sec:results}

\subsection{Notation}

We consider a bosonic system of $\nmodes \in \NN_{\geq 2}$ modes described by annihilation and creation operators $a_k$ and $a_k^\ad$ ($k=1,\dots,\nmodes$), respectively, satisfying the \acp{CCR} 
\begin{equation*}
	[a_k, a_l^\ad] = \delta_{kl} \, , \quad [a_k, a_l] = [a_k^\ad, a_l^\ad] = 0 \, , \quad k, l = 1, \dots, \nmodes \, , 
\end{equation*}
where $\delta_{k,l}$ is the Kronecker delta.
These operators act on
the Fock-Hilbert space $\fock_\nmodes \coloneqq \bigoplus_{n=0}^{\infty} \H_n^\nmodes$, where 
$\H_n^\nmodes$ is the subspace of $n$ bosons distributed over $\nmodes$ modes of dimension  $\dim \H_n^\nmodes = \binom{n+m-1}{n}$.
It is spanned by the \emph{Fock} or \emph{number states}
\begin{equation}
	\ket{\vec n} \equiv \ket{n_1, \dots, n_\nmodes} 
	\coloneqq
	\prod_{k=1}^\nmodes \frac{1}{\sqrt{n_k!}} a_k^{\ad n_k} \ket{\mathbf{0}}
	\, ,
\end{equation}
where $|\vec n|\coloneqq\sum_{i=1}^m n_i = n$ and $\ket{\mathbf{0}}\equiv \ket{0, \dots, 0}$ denotes the vacuum state of $\nmodes$ decoupled one-dimensional harmonic oscillators \cite{adessoContinuousVariableQuantum2014}.
We shall also consider coherent states, defined as $\ket{\alpha} = e^{-\abs{\alpha}^2/2} \sum_{n=0}^\infty \frac{\alpha^n}{\sqrt{n!}} \ket{n}$ for a single mode, with a straightforward extension to the multimode setting.
The set of passive transformations is the group of unitary operators on $\fock_\nmodes$ that leave the total number of particles invariant.
These are exactly the unitaries that induce a transformation of the bosonic operators as $a_k \mapsto \sum_{l=1}^\nmodes U_{lk} a_l$
for a unitary matrix $U=(U_{lk})_{l,k=1}^\nmodes$. 
Hence, the group of passive transformations can be identified with the unitary group $\U(\nmodes)$ \cite{arvindRealSymplecticGroups1995}.
Practically, these can also be thought as multimode interferometers, which can be decomposed in quadratically many two-modes interferometers and phase shift transformations only \cite{reckExperimentalRealizationAny1994, carolanUniversalLinearOptics2015, clementsOptimalDesignUniversal2017, deguiseSimpleFactorizationUnitary2018, tanResurgenceLinearOptics2018}.

The mapping $\rho:\,\U(\nmodes)\rightarrow \U(\fock_\nmodes)$ of $\U(\nmodes)$ to the group of passive transformations on $\fock_\nmodes$ is an example of a (unitary) \emph{representation}.
Generally, we require that a representation is compatible with group operations, i.e.~$\rho(gh)=\rho(g)\rho(h)$ and $\rho(g^{-1})=\rho(g)^{-1}$.
The above representation on $\fock_\nmodes$ is \emph{reducible}, i.e.~there are subspaces $V\subset\fock_\nmodes$ which are invariant under $\rho$: $\rho(g)(V)=V$ for all $g$.
The restriction $\rho|_V$ to such a subspace $V$ forms a representation in its own right, a so-called \emph{subrepresentation} of $\rho$.
$V$ is called irreducible if it does not contain another non-trivial invariant subspace.
We then call $\rho|_V$ an \emph{irreducible subrepresentation} (irrep) of $\rho$.
Since any $\rho(g)$ is particle number-preserving, one can show that the irreducible subspaces of $\fock_\nmodes$ are exactly given by the number spaces $\H_n^\nmodes$ \cite{anielloExploringRepresentationTheory2006}.

\subsection{The passive randomized benchmarking protocol}
\label{sec:protocol}

\subsubsection{Description of the protocol}
The passive \ac{RB} protocol follows the general procedure of the filtered \ac{RB} framework \cite{heinrichRandomizedBenchmarkingRandom2023}, adapted to the group of passive transformations acting on bosonic systems.
Additionally, we use post-selection to isolate particle loss rates, see Sec.~\ref{sec:details-protocol} and \ref{sec:particle_loss} for details.

The protocol consists of two phases:
During the \emph{data collection phase}, a fixed number state $\rho = \ketbra{\ninput}{\ninput}$ is used as an input to passive transformations and \acf{PNR} measurements
$\{\ketbra{\vec n}{\vec n}\}_{\vec n \in \NN^\nmodes}$ are performed on the output.
The collected data then undergoes a suitable \emph{post-processing phase}.
Motivated from boson sampling experiments, we propose to use a \emph{collision-free} input state $\ninput = (1,1,1,\dots,1,0,\dots,0)$ with $n\leq \nmodes$ particles \cite{aaronsonComputationalComplexityLinear2010}, see also the discussion in \cref{sec:input-state} below.

The passive \ac{RB} protocol can be summarized as follows.

\begin{enumerate}[label=(\Roman*)]
\item \textbf{Data collection.}
For different sequence lengths $\seqlength \in \sequences$, repeat the following steps $\numsamples$ times, 
\begin{enumerate}[label=(\roman*)]
	\item\label{enum:bosonic_rb_2} Prepare the state $\rho=\ketbra{\ninput}{\ninput}$.
	\item\label{enum:bosonic_rb_3} Apply passive transformations $g_1,\dots,g_\seqlength$ drawn i.i.d.~from the Haar probability measure on $\U(\nmodes)$.
	\item\label{enum:bosonic_rb_4} Measure the output state using \ac{PNR} detectors and store the outcome together with the sampled unitaries.
\end{enumerate}
\item \textbf{Post-processing.}
Fix a suitable \ac{irrep} $\lambda$ of $\U(\nmodes)$ and use its later-to-be-defined \emph{filter function} $f_\lambda(\vec n,g|\rho)$.
Assuming the data $\{ (\vec n^{(i)}, g_1^{(i)}, \dots, g_\seqlength^{(i)}) \}_{i = 1}^{\numsamples}$ has been gathered, compute the following mean estimator
\begin{equation}
	\label{eq:filtered_data_estimator}
	\hat F_\lambda(\seqlength) = \frac{1}{\numsamples'} \sum_{i=1}^{\numsamples} f_\lambda(\vec n^{(i)}, g_1^{(i)} \cdots g_\seqlength^{(i)} | \rho) \, ,
\end{equation}
where $\numsamples'\leq \numsamples$ is given in \cref{sec:details-protocol} below.
We refer to the data series $(\seqlength, \hat F_\lambda(\seqlength))_{\seqlength\in\sequences}$ as the (filtered) \emph{RB signal}.
Finally, perform an exponential fit according to the model $\hat F_\lambda(\seqlength) = A_\lambda r_\lambda^\seqlength$ to extract the \emph{decay rate} $r_\lambda$. 
Repeat for different \acp{irrep} $\lambda$ as needed.
\end{enumerate}

The definition of the filter function is given in \cref{sec:details-protocol}.
We also provide a numerical implementation of the protocol, including the filter function, on GitHub \cite{gitHubPassiveRB}.

For this work, we restrict our attention to number state inputs and \ac{PNR} measurements and justify this choice in more detail in \cref{sec:details-protocol}.
At the end of the main part, \cref{sec:extensions-protocol}, we discuss extensions of the passive \ac{RB} protocol to Gaussian states or measurements.
We present first results, using balanced heterodyne measurements, in \cref{sec:passive_rb_Gaussian}.

We would like to emphasize that Haar-random sampling from $\U(\nmodes)$ can be substituted by any distribution which converges sufficiently fast to the Haar measure \cite{heinrichRandomizedBenchmarkingRandom2023}.
For the main part of this paper, we resort to the Haar measure to simplify the presentation.
We refer the reader to \cref{sec:filtered-RB-background} in the technical part for more details and comments on the use of non-uniform distributions.

Finally, the procedure described above is sometimes referred to as \emph{single-shot estimation}, in the sense that only a single shot (or measurement) is taken per sequence.
In practice, it is often advantageous to use \emph{multi-shot estimation} instead, where multiple shots per sequence are recorded.
Then, the estimator \eqref{eq:filtered_data_estimator} changes in an obvious way, but converges to the same expectation value as \eqref{eq:filtered_data_estimator}.
Except for the sampling complexity discussion in \cref{sec:sampling-complexity} all results of this work also apply to the multi-shot estimator.

\subsubsection{Decay rates and fidelities}
The decay rates can be combined into an average performance measure of passive transformations on the $n$-particle subspace using the formula, cf.~Refs.~\cite[Cor.~10]{francaApproximateRandomizedBenchmarking2018} and \cite[Eq.~(242)]{helsenGeneralFrameworkRandomized2020}
\begin{equation}
\label{eq:rb-fidelity}
 F = (\dim\H_n^\nmodes)^{-2} \sum_{\lambda} d_{\lambda} r_\lambda \,,
\end{equation}
where the sum runs over all relevant irreps $\lambda$ with dimensions $d_\lambda$, see \cref{sec:guarantees}.
As $F$ corresponds to a weighted average, it is typically sufficient to only consider the largest weights which coincide with largest irreps.
This means that it may not necessary to run the post-processing phase (II) of the protocol for \emph{every} irrep $\lambda$.
We justify this statement and make it more precise later in \cref{sec:input-state}.

Although commonly done, we remind the reader that caution is advised if $F$ is interpreted as the average \emph{entanglement fidelity} of passive transformations.
The intuition behind this reasoning comes from the analysis of \emph{gate-independent errors}, for which it is straightforward to show that $F$ indeed coincides with the entanglement fidelity of the ``noise in-between gates'' \cite{francaApproximateRandomizedBenchmarking2018,helsenGeneralFrameworkRandomized2020}.
It should be clear, however, that this noise model is highly unrealistic and the relation to gate fidelities becomes more involved if gate-dependent errors are considered.
This is due to the inherent \emph{gauge freedom} of \ac{RB}, a property which \ac{RB} shares with other protocols that are agnostic against \ac{SPAM} errors, such as gate-set tomography \cite{stark_self-consistent_2014,merkel_self-consistent_2013,nielsen_gate_2021,brieger_compressive_2023}.
This gauge freedom has sparked a vivid discussion of the interpretation of \ac{RB} experiments, see~Refs.~\cite{proctor2017WhatRandomizedBenchmarking,helsenGeneralFrameworkRandomized2020} for more details.
In particular, whether such a physical interpretation is justified or not cannot be deduced from RB results alone.
If additional information about, e.g., the physical origin of noise processes is available, the interpretation of $F$ as fidelity may be justified -- we however cannot draw such a conclusion under the assumptions made in this paper.
Instead, we take the point of view of Ref.~\cite{helsenGeneralFrameworkRandomized2020} in that RB decays rates, and thus $F$ in Eq.~\eqref{eq:rb-fidelity}, should be regarded as quantities in their own right which provide a benchmark for the average quality of the used unitaries.
Besides the mentioned interpretational issues, let us note that, for discrete variable systems, a suitable affine transformation is typically applied to \cref{eq:rb-fidelity} to produce a proxy for the better-known \emph{average gate fidelity}.
This relies on the well-known relation $F_\mathrm{avg} = \frac{d F + 1}{d+1}$ for discrete variable systems in $d$ dimensions.
In the here considered setting, such a transformation would result in an expression akin to the average gate fidelity, where the average is performed over input states in the $n$-particle subspace.

\subsubsection{Details of the protocol}
\label{sec:details-protocol}
Choosing a number state $\ket{\ninput}$ as input to the protocol has the advantage that --\emph{ideally}-- the dynamics should happen on the finite-dimensional $n$-particle subspace $\H_n^\nmodes \subset \fock_\nmodes$ only, where $n=|\ninput|$.
On $\H_n^\nmodes$, the passive transformations $\U(\nmodes)$ act via the totally symmetric irrep $\tau_n^\nmodes$.
To describe noise, we transition to the density operator formalism, in which this action is described by conjugation, $\omega_n^\nmodes(g)(\rho) \coloneqq \tau_n^\nmodes(g)\rho\,\tau_n^\nmodes(g)^\dagger$.
If we assume -- for the sake of the argument -- that the noise is particle number-preserving, we can model the noisy transformations by quantum channels $\phi_n^\nmodes(g)$ on the space of bounded operators $\bounded{\H_n^\nmodes}$, provided that the noise is time-stationary and Markovian.
Regarding the $\phi_n^\nmodes(g)$ as (small) perturbations of $\omega_n^\nmodes(g)$, the filtered RB framework \cite{heinrichRandomizedBenchmarkingRandom2023} then predicts \ac{RB} decays in one-to-one correspondence with the \acp{irrep} of the \emph{reference representation} $\omega_n^\nmodes$.
As we show in \cref{sec:ref_repr}, the latter decomposes into exactly $n+1$ distinct irreps $\lambda_k$,\footnote{Here, $\lambda_0=\vec{1}$ is the trivial irrep and $\lambda_1=\Ad$ is the adjoint irrep of $\U(\nmodes)$.} and this is the number of decays to expect.

Arguably, \emph{particle loss} is one of the major noise sources in, e.g., photonic systems, hence assuming particle-number preserving noise is unrealistic.
This can nevertheless be accounted for in our framework, as we are performing \ac{PNR} measurements:
By post-selecting on particle number-preserving events, we obtain effective dynamics described by completely positive, \emph{trace non-increasing} maps $\phi_n^\nmodes(g)$ on $\bounded{\H_n^\nmodes}$ and we can again rely on the results in Ref.~\cite{heinrichRandomizedBenchmarkingRandom2023}.
Note that this reasoning also applies to noisy \ac{PNR} measurements by modelling them as a sub-normalized \ac{POVM}.
This approach is justified as long as the product of the photon loss probability
and
the one of photon \emph{gain} 
(e.g.~through thermal noise in detectors) is very small such that the error introduced by ignoring photon gain is comparable to the shot noise.
We expect this to be the case in modern photonic chips.

Finally, we introduce the \emph{filter function} which isolates the exponential decays for each irrep $\lambda = \lambda_k$ in the post-processing of the collected data.
Let $P_\lambda$ be the projector on the carrier space of $\lambda$ and let $d_\lambda$ be its dimension.
Then, given the measurement channel $\mathcal M(A) \coloneqq \sum_{n\in\NN^\nmodes} \sandwich{\vec n}{A}{\vec n} \ketbra{\vec n}{\vec n}$, we define the filter function as \cite{heinrichRandomizedBenchmarkingRandom2023}
\begin{equation}
\label{eq:filter_function}
\begin{aligned}
	f_\lambda (\vec n, g | \rho) &\coloneqq s_\lambda^{-1} \sandwich{\vec n}{\omega_n^\nmodes(g) \circ P_\lambda(\rho)}{\vec n} \, , \\
	s_\lambda &\coloneqq d_\lambda^{-1} \Tr[ P_\lambda \mathcal M ] \in \RR_{\geq 0} \, .
\end{aligned}	
\end{equation}
We will often simply write $f_\lambda (\vec n, g | \rho) \equiv f_\lambda (\vec n, g) $ once the input state $\rho$ is fixed.
We give a more concise formula for $f_\lambda$ in \cref{sec:guarantees} as \cref{thm:general_filter_informal}.
In particular, we show that $s_\lambda > 0$ as long as the number of modes $\nmodes$ is strictly larger than 1.
As the filter function is automatically zero if $|\vec n| \neq n$ due to $P_\lambda$, we can formally describe the post-selection on particle number-preserving events by defining 
\begin{equation}
 \numsamples_\textrm{pr} \coloneqq | \{ i \in [\numsamples] \,| \, |\vec n^{(i)}| = n \}| \,,
\end{equation}
and using the mean estimator \cref{eq:filtered_data_estimator} with the choice $\numsamples'=\numsamples_\textrm{pr}$.

\subsubsection{Estimation of particle loss rates}
\label{sec:particle_loss}
The same post-processing procedure can also be used to capture the estimation of particle loss rates as follows.
Define the filter function to be the indicator function\footnote{Note that this conicides with a suitably rescaled filter function \eqref{eq:filter_function} for the trivial irrep $\lambda_0$.}
\begin{equation}
	\label{eq:filter_indicator}
	f(\vec n) \coloneqq
	\begin{cases}
 		1 & \text{if } |\vec n| = n, \\
  		0 & \text{else},
 	\end{cases}
\end{equation}
and set $\numsamples'=\numsamples$.
Thus, the associated mean estimator $\hat F(\seqlength)$, cf.~\cref{eq:filtered_data_estimator}, simply yields the ratio $\numsamples_\textrm{pr}/\numsamples$.
To see that this leads to the wanted result, let us model particle loss by a beam splitter with transmittivity $\sqrt{p}$ in every mode (i.e.~$p$ is the probability of \emph{not} loosing a particle).
In addition, we assume transmittivities $\sqrt{p_\textrm{SP}}$ and $\sqrt{p_\textrm{M}}$ associated with state preparation and measurement, respectively.
Hence, the probability that a sequence of $\seqlength$ passive transformations preserves the particle number is $p_\textrm{SP}^n p_\textrm{M}^n p^{nl} = \EE[\numsamples_\textrm{pr}/\numsamples] = \EE[\hat F(\seqlength)]$.
Therefore, we can extract the transmittivity $\sqrt{p}$ from the data $(\seqlength, \hat F(\seqlength))_{\seqlength\in\sequences}$ using an exponential fit as before.

Note that the original filtered \ac{RB} protocol does not perform the post-selection in \cref{sec:details-protocol} above (i.e.~$\numsamples'=\numsamples$ in \cref{eq:filtered_data_estimator}).
This leads to a mixing of the decay rate of particle-preserving noise with the particle loss rate, resulting in a combined decay $(p^n r_\lambda)^\seqlength$.
By performing post-selection, we can consider these decays individually.

\subsection{Analysis and guarantees}
\label{sec:guarantees}

The filtered \ac{RB} framework \cite{heinrichRandomizedBenchmarkingRandom2023} implies a number of guarantees for the passive \ac{RB} protocol, which we specialize and extend in the following.
We rely on the \emph{implementation map} noise model which we justified using the post-selection argument in \cref{sec:protocol}.
In this model, the noise is modeled by replacing the representation $\omega_n^\nmodes$ with an implementation map $\phi_n^\nmodes$ on $\U(\nmodes)$, which takes values in the set of completely positive, trace non-increasing superoperators on $\H_n^\nmodes$.
This model allows for highly gate-dependent noise, which, however, needs to be stationary and Markovian.

In the following, we use the short-hand notations $\omega\equiv\omega_n^\nmodes$, $\tau\equiv\tau_n^\nmodes$, and $\phi\equiv\phi_n^\nmodes$ whenever the parameters $n$ and $\nmodes$ are clear from the context.

\subsubsection{The RB signal}

We now briefly sketch the arguments why the \ac{RB} signal $\hat F_\lambda(\seqlength)$, cf.~\cref{eq:filtered_data_estimator}, is well-approximated by an exponential decay \cite{heinrichRandomizedBenchmarkingRandom2023}.
In the implementation map model, a straightforward calculation shows that the expected \ac{RB} signal $F_\lambda(\seqlength) \coloneqq \EE[\hat F_\lambda(\seqlength)]$ is the contraction of a linear operator $\tilde{T}_\lambda^\seqlength$, and can thus be expanded into a linear combination of its (complex) eigenvalues, $F_\lambda(\seqlength) = \sum_j A_{\lambda,j} z_{\lambda,j}^\seqlength$.
To understand the eigenvalues $z_{\lambda,i}$ of $\tilde{T}_\lambda$, we first consider the noiseless case:
Then, $\tilde{T}_\lambda$ reduces to $T_\lambda(\mathcal{X}) = P_\lambda \int_{\U(\nmodes)} \omega(g)^\dagger\mathcal{X}\omega(g)\,dg$, also known as a \emph{channel twirl} in the quantum information literature, composed with the irrep projector $P_\lambda$.
As we show in the technical part, \cref{sec:ref_repr}, $T_\lambda$ has a single non-zero eigenvalue given by 1.
Inserting this into the expansion of $F_\lambda(\seqlength)$, we immediately see that it is constant in $\seqlength$, and in fact given by \cite{heinrichRandomizedBenchmarkingRandom2023}
\begin{equation}
	\label{eq:filtered_data_ideal}
		F_\lambda(\seqlength)
		=
		\Tr\left[ \rho P_\lambda(\rho) \right] \,.
\end{equation}
In the presence of noise, we instead find the expression $\tilde{T}_\lambda(\mathcal{X}) = P_\lambda \int_{\U(\nmodes)} \omega(g)^\dagger\mathcal{X}\phi(g)\,dg$.
Given that the noise is sufficiently weak, in the sense that $\tilde{T}_\lambda-T_\lambda$ is small, matrix perturbation theory implies that $\tilde{T}_\lambda$ has a single dominant real eigenvalue $r_\lambda \coloneqq z_{\lambda,0}$, which is separated from the remaining spectrum by a large gap and thus quickly dominates the signal.

\begin{proposition}[{\cite[Thm.~8]{heinrichRandomizedBenchmarkingRandom2023}}, informal]
\label{prop:signal-form}
 Suppose that the noise is sufficiently weak.
 Then, we have
 \begin{equation}
 \label{eq:signal-form-informal}
  F_\lambda(\seqlength) \approx A_\lambda r_\lambda^\seqlength \,,
 \end{equation}
 up to an additive error $\alpha \geq 0$ which is suppressed exponentially in $\seqlength$, and $A_\lambda\in\R$, $0<r_\lambda \leq 1$.
 Moreover, $r_\lambda$ does \emph{not} depend on the initial state and measurement, thus \ac{SPAM} errors only affect $A_\lambda$.
\end{proposition}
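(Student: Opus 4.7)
The plan is to follow the general filtered RB argument of Ref.~\cite{heinrichRandomizedBenchmarkingRandom2023}, specialized to the reference representation $\omega \equiv \omega_n^\nmodes$ and its implementation map $\phi$. First, I would rewrite the expected signal as a single matrix contraction. Using the tower property and the i.i.d.~Haar-random choice of $g_1,\dots,g_\seqlength$, together with the invariance of the Haar measure under reparametrization $g\mapsto g_1\cdots g_\seqlength$, one obtains
\begin{equation}
\label{eq:signal-as-trace-plan}
F_\lambda(\seqlength)
= s_\lambda^{-1}\,\Tr\bigl[\mcM \circ \tilde T_\lambda^{\,\seqlength}\circ P_\lambda(\rho)\bigr],
\qquad
\tilde T_\lambda(\mcX)\coloneqq P_\lambda \int_{\U(\nmodes)} \omega(g)^\dagger\,\mcX\,\phi(g)\,\rmd g,
\end{equation}
where $\mcM$ denotes the \ac{PNR} measurement channel. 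Hence $F_\lambda(\seqlength)$ is linear in $\tilde T_\lambda^{\,\seqlength}$, so its generic form is a sum $\sum_j A_{\lambda,j} z_{\lambda,j}^{\,\seqlength}$ over the eigenvalues $z_{\lambda,j}$ of $\tilde T_\lambda$, with amplitudes $A_{\lambda,j}$ determined by the left and right spectral projectors contracted against $\rho$ and $\mcM$.

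Next, I would analyze the spectrum of $\tilde T_\lambda$ by perturbing around the noiseless operator $T_\lambda$ obtained by substituting $\phi(g)\to\omega(g)$. Because $\omega$ decomposes into pairwise inequivalent irreps $\lambda_0,\dots,\lambda_n$ (\cref{sec:ref_repr}), Schur's lemma applied to the twirl in the right slot of \eqref{eq:signal-as-trace-plan}, combined with the irrep projector $P_\lambda$ in the left slot, forces $T_\lambda$ to act as the projector onto the one-dimensional span of $P_\lambda/\sqrt{d_\lambda}$ (in the Hilbert–Schmidt geometry on $\bounded{\H_n^\nmodes}$). Consequently $T_\lambda$ has exactly one non-zero eigenvalue, equal to $1$, with a spectral gap of size $1$ to the remaining zero eigenvalues.

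I would then invoke matrix perturbation theory (Bauer–Fike, or the resolvent-based argument in \cite{heinrichRandomizedBenchmarkingRandom2023}): provided the perturbation $\delta T_\lambda \coloneqq \tilde T_\lambda - T_\lambda$ is small compared to the unit gap in some induced operator norm, $\tilde T_\lambda$ possesses a unique eigenvalue $r_\lambda$ near $1$ with $\abs{r_\lambda -1}=\mathcal O(\norm{\delta T_\lambda})$, while the remaining eigenvalues satisfy $\abs{z_{\lambda,j}}\leq 1-\Delta$ for some gap $\Delta>0$ and $j\neq 0$. Reality of $r_\lambda$ follows from the fact that $\tilde T_\lambda$ is the twirl of the completely positive map $\mcX\mapsto \omega(g)^\dagger\mcX\phi(g)$, which preserves Hermiticity of $P_\lambda$-supported inputs, and the bound $0<r_\lambda\leq 1$ comes from the trace non-increasing property of $\phi(g)$ together with $P_\lambda$ being a projector. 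Splitting $F_\lambda(\seqlength) = A_\lambda r_\lambda^\seqlength + \sum_{j\neq 0} A_{\lambda,j} z_{\lambda,j}^\seqlength$ and bounding the tail by $\alpha \leq (\sum_{j\neq 0}\abs{A_{\lambda,j}})(1-\Delta)^\seqlength$ gives the stated exponentially suppressed error.

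Finally, SPAM independence is immediate from \eqref{eq:signal-as-trace-plan}: the eigenvalues $z_{\lambda,j}$, and in particular $r_\lambda$, are intrinsic to $\tilde T_\lambda$ and therefore depend only on the noisy implementation map $\phi$, not on $\rho$ or $\mcM$; all dependence on the initial state and measurement is absorbed into the amplitudes $A_{\lambda,j}$ via the contraction with the corresponding spectral projectors. The main obstacle I anticipate is the quantitative perturbation step: one needs to make precise what ``sufficiently weak noise'' means by choosing a norm (e.g.\ the diamond norm pulled back through $P_\lambda$) in which both the gap of $T_\lambda$ and the size of $\delta T_\lambda$ can be controlled uniformly in $\nmodes$ and $n$, so that the spectral gap $\Delta$ and hence the suppression rate of $\alpha$ do not degrade. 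I would handle this by importing the precise perturbation statement from Ref.~\cite{heinrichRandomizedBenchmarkingRandom2023} wholesale, only checking that the hypotheses (reducibility of $\omega$ into a finite sum of inequivalent irreps, existence of the irrep projector $P_\lambda$, positivity of $s_\lambda$) are satisfied in the bosonic setting — which is exactly the content of the technical results in \cref{sec:passive_RB_Fock}.
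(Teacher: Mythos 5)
Your proposal follows essentially the same route as the paper (and the underlying Ref.~\cite{heinrichRandomizedBenchmarkingRandom2023} it imports): write $F_\lambda(\seqlength)$ as a contraction of $\tilde T_\lambda^{\seqlength}$, use Schur's lemma and the multiplicity-free decomposition of $\omega_n^\nmodes$ to show the noiseless $T_\lambda$ has a single unit eigenvalue, then apply spectral perturbation theory to isolate a dominant real eigenvalue $r_\lambda$ with an exponentially suppressed tail, with SPAM entering only through the amplitudes. This matches the paper's argument, including the decision to take the quantitative perturbation step from the filtered-RB framework wholesale after verifying its hypotheses (inequivalent real-type irreps, $s_\lambda>0$) in the bosonic setting.
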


In the filtered \ac{RB} framework \cite{heinrichRandomizedBenchmarkingRandom2023}, noise is measured by the operator norm $\delta_\lambda\coloneq\snorm{\tilde{T}_\lambda-T_\lambda}$ and ``weak noise'' is such that $\delta_\lambda \leq 1/5$.
Then, Ref.~\cite{heinrichRandomizedBenchmarkingRandom2023} gives more precise conditions on the error suppression in the expected signal.
In particular, it is sufficient to choose the sequence length as
\begin{equation}
\label{eq:seq-length-bound}
 \seqlength \geq 
 \frac{\log \frac{d_\lambda}{s_\lambda} + 2 \log \frac1\alpha + 4}{2\log \frac{1}{2\delta_\lambda}} \,.
\end{equation}
where $d_\lambda$ is the dimension of the irrep and $s_\lambda$ is as in \cref{eq:filter_function}.
In the technical part of this work, \cref{sec:passive_RB_Fock}, \cref{prop:dimension_lambda_k} and \cref{thm:frame_operator_pnr}, we prove the following explicit formulae for $\lambda=\lambda_k$:
\begin{align}
	d_{\lambda_k} &=  \frac{2k+\nmodes-1}{\nmodes-1} \binom{k+m-2}{k}^2 \, , \\
	s_{\lambda_k} &= \frac{\nmodes-1}{2k+\nmodes-1}  \binom{k + \nmodes - 2}{k}^{-1} \, .
\end{align}
In particular, we have (for $\nmodes > 1$):
\begin{equation}
 \frac{d_{\lambda_k}}{s_{\lambda_k}} 
 = \left(\frac{2k+\nmodes-1}{\nmodes-1}\right)^2 \binom{k+m-2}{k}^3 \,.
\end{equation}
Hence, it is sufficient to take sequences of length $O( k \log\frac{k+\nmodes-2}{k} )$ which is $O(\nmodes \log\nmodes)$ in the \emph{non-collision regime} $n\leq \nmodes$ (as $k\leq n$).
However, we expect that these bounds are not tight.
In fact, \cref{eq:seq-length-bound} is not tight in the first place \cite{heinrichRandomizedBenchmarkingRandom2023}, nevertheless improving the bound is hard in the gate-dependent noise setting.
Based on the experience with discrete variable \ac{RB}, we expect that already very short sequences are sufficient and conjecture that the dependence on $d_{\lambda_k}/s_{\lambda_k}$ in \cref{eq:seq-length-bound} can in fact be dropped, meaning that constant-length sequences are sufficient.

The signal form is concretely shown in \cref{fig:decays_plot}, where we show the results of a simulation of the protocol with $\ket{\ninput} = \ket{\vec 1}_4 \equiv\ket{1111}$, assuming lossy gates for different values of the transmittivity.
By performing exponential fits, we retrieve values of the transmittivity $\sqrt{p}$ using $\numsamples=10000$ samples, matching the used transmittivities up to an error of $10^{-3}$.
We also simulated an experiment where the transmittivity is chosen at random for each transformation, resulting in decay rates that match the average over the allowed random decay rates.
We provide additional details on numerical experiments in \cref{sec:numerics}.

\begin{figure}[tb]
	 	\centering
	 	\includegraphics[width=1\linewidth]{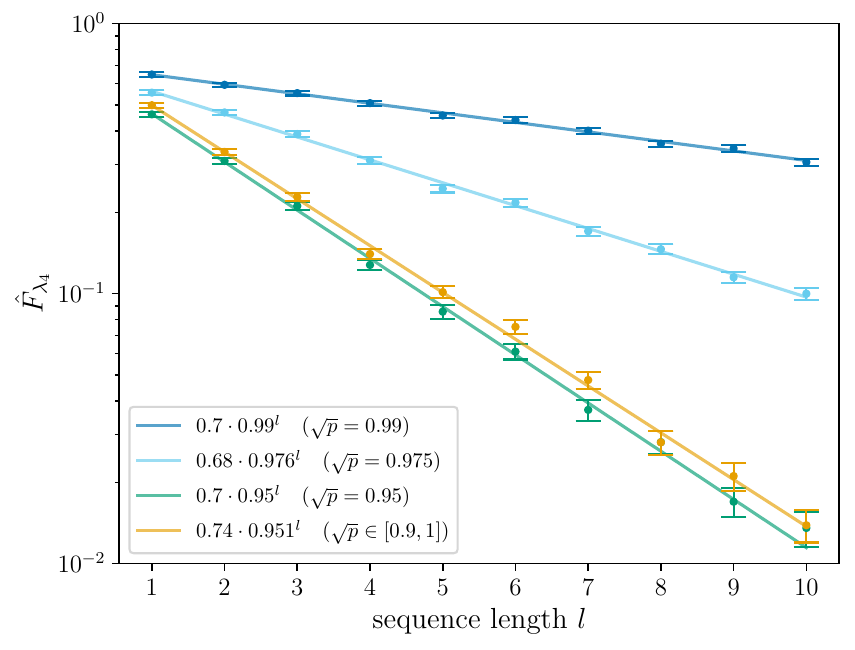}
	 	\caption{
			Passive \ac{RB} signal \eqref{eq:filtered_data_estimator} using $n=\nmodes=4$ and input state $\ket{\vec 1_4} \equiv \ket{1111}$ in the presence of particle loss noise,
			using $\numsamples=10000$ samples.
			The last (yellow) data set corresponds to simulated gate-dependent noise where the transmittivities $\sqrt{p}$ are drawn uniformly at random from $[0.9, 1]$ for every transformation in the sequence.
			More details can be found in \cref{sec:numerics}.
	 	}
	 	\label{fig:decays_plot}
\end{figure}

\subsubsection{Choice of input state}
\label{sec:input-state}

For standard \ac{RB} on discrete variable systems, the input state is typically chosen to be the all-zeros state $\ket{0^n}$, and, in fact, the choice of input state plays a minor role in this case.
The underlying reason is that we typically have a single non-trivial irrep (i.e., the adjoint \ac{irrep}) with respect to which all states are equivalent.

In the finite-dimensional bosonic setting, we show in \cref{lem:decomposition_conjugate_action} in \cref{sec:ref_repr} that $\omega_n^\nmodes$ decomposes into $n+1$ \acp{irrep} and thus there is a certain freedom in choosing the input state $\rho=\ketbra{\ninput}{\ninput}$.
In principle, the passive \ac{RB} protocol is agnostic of the input state.
However, its choice influences the overall magnitude of the \ac{RB} signal since the latter scales with the overlap of the state with the irrep of interest \cite{heinrichRandomizedBenchmarkingRandom2023}.
As the variance of $F_\lambda$ exhibits the same scaling behaviour, the mean estimator $\hat F_\lambda(l$) also converges more rapidly for smaller overlaps and hence the overall scale set by the overlap does not influence the sampling complexity of the protocol, see also \cref{sec:sampling-complexity}.
Nevertheless, it may happen that the input state has vanishing overlap with an irrep, in which case the \ac{RB} signal is identically zero and no information about decay rates can be extracted.
Moreover, very small overlaps, and thus signals, may lead the numerical issues in the post-processing.

Our proposal of using a collision-free state $\ninput = \vec{1}_n = (1,1,1,\dots,1,0,\dots,0)$ is motivated by practical considerations as generating higher Fock states can be a challenging task \cite{cooperExperimentalGenerationMultiphoton2013, lvovskyProductionApplicationsNonGaussian2020, walschaersNonGaussianQuantumStates2021}.
In \cref{fig:overlaps_n=m} and \cref{fig:overlaps_m_fixed}, we show its overlaps with the relevant irreps for $n=\nmodes$ and for $n\leq\nmodes$ with $\nmodes$ fixed, respectively.
We see that the largest overlap is always attained with the largest irrep $k=n$ and typically decreases with $k$, leading to small overlaps for $k \ll n$ if $\nmodes$ is not too small.
Generally, one can show that the collision-free state for $n=\nmodes$ particles always has zero overlap with the $\lambda_1$ (adjoint) irrep.
This behaviour can be avoided by optimizing $\ninput$, however likely leading to an input state that is hard to prepare in experiments.

\begin{figure}
 	\centering
 	\begin{subfigure}[ht]{.95\linewidth}
 		\includegraphics[width=1\linewidth]{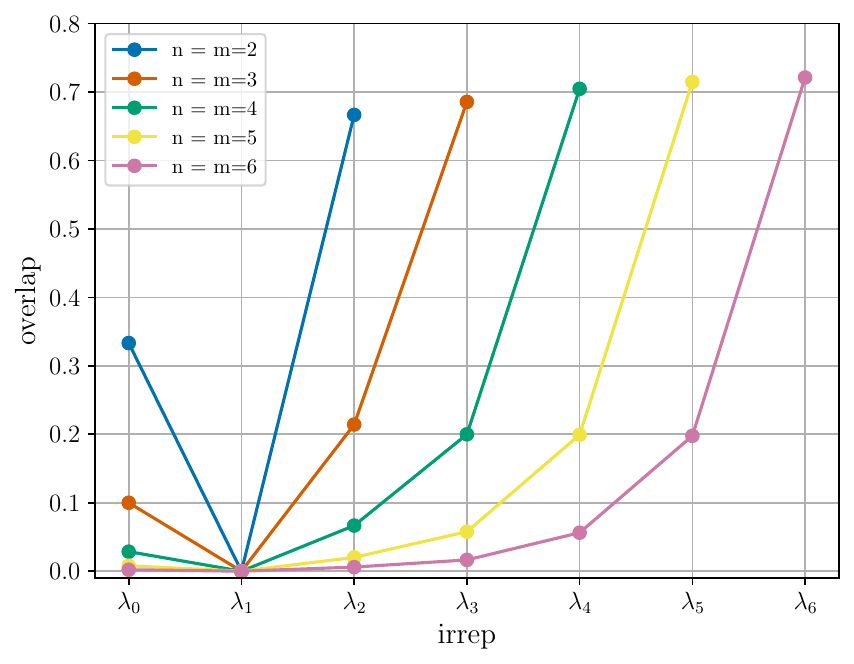}
 		\caption{For $n=\nmodes$, we observe no overlap with the adjoint \ac{irrep} $\lambda_1$.
 		For each $\nmodes$, the overlap is maximized by the largest \ac{irrep}.}
 		\label{fig:overlaps_n=m}	
 	\end{subfigure}
 	\begin{subfigure}[ht]{.95\linewidth}
 		\includegraphics[width=1\linewidth]{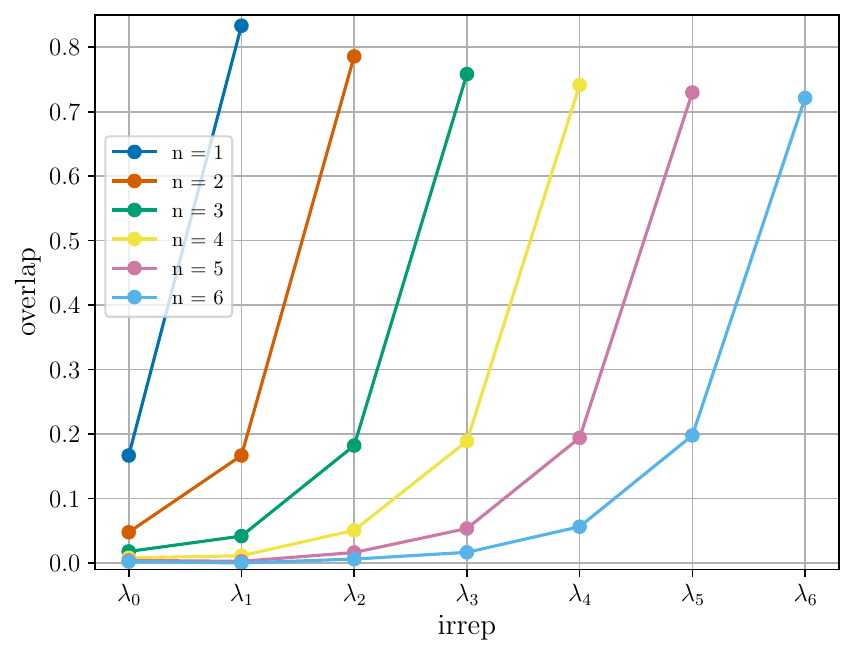}
 		\caption{Overlaps for $n\leq \nmodes = 6$.
 		As before, the maximum overlap is obtained for the largest irrep $\lambda_n$.
 		Note that the overlap with $\lambda_1$ is trivial only for $n = \nmodes$. }
 		\label{fig:overlaps_m_fixed}	
 	\end{subfigure}
 	\caption{
 		Overlaps of the collision-free $n$-particle state $\vec 1_n=(1, \dots, 1, 0, \dots, 0)$ with the \acp{irrep} $\lambda_k, k = 0, \dots, \nmodes$.
 		Lines are shown to enhance readability of data points and their relationships. 
	}
	\label{fig:overlaps}
\end{figure}

As a matter of fact, the problem of vanishing and very small overlaps of the collision-free state is not severe:
Recall that the overall benchmarking quantity $F$, cf.~\cref{eq:rb-fidelity}, is given as a weighted average of the decay rates $r_{\lambda_k}$, where the weights are given by the irrep dimension $d_{\lambda_k}$.
Since irreps with small $k$ also have small dimension, their contribution to the weighted average may be neglected.
More precisely, if we only take into account the $u$ largest irreps, their combined weight can be readily computed as
\begin{equation}
 (\dim\H_n^m)^{-2}\sum_{i=0}^{u-1} d_{\lambda_{n-i}}
 =
 1 - \left[ \prod_{i=1}^{u} \frac{n-i+1}{n+m-i}  \right]^2\,.
\end{equation}
In particular, if $n$ is fixed and $m\rightarrow\infty$, this converges to 1 for any $u$, and if $n=m\rightarrow\infty$, the weight converges to $1 - 4^{-u}$.
As the convergence is fast in both cases, taking only a constant number of irreps (the largest ones) into account gives an exponentially good approximation to $F$, already for a moderate number of modes.
For instance, for $n=5$ particles in $\nmodes=10$ modes, taking the 3 largest irreps covers 99.9\% of the total weight.
This also implies that it is sufficient to perform the post-processing phase (II) in the passive RB protocol, \cref{sec:protocol}, only a constant number of times.

\subsubsection{Evaluation of the filter function}


At the heart of the the passive RB protocol, \cref{sec:protocol}, lies the evaluation of the filter function \eqref{eq:filter_function} in the post-processing phase.
We briefly sketch the central steps of its computation leading to \cref{thm:general_filter_informal} in the following and refer to the technical part, \cref{sec:preliminaries} and \cref{sec:passive_RB_Fock}, for more details and proofs.
First, note that $\omega \simeq \tau \otimes \bar \tau$ can be explicitly decomposed into \acp{irrep} using a suitable generalization of the Clebsch-Gordan decomposition for the sum of two angular momenta in quantum mechanics \cite{sternbergGroupTheoryPhysics1994,alexNumericalAlgorithmExplicit2011}.
This yields the following expansion of Fock states
\begin{equation}
	\label{eq:cg_basis_vector}
	\ketbra{\vec n}{\vec n}
	\simeq
	\ket{\vec n, \vec n}\,
	= 
	\sum_{\lambda}
	\sum_{M_\lambda} \tilde C_{\vec n, \bar{\vec n}}^{M_\lambda} \ket{M_\lambda} \, ,
\end{equation}
where $\lambda$ runs over the \acp{irrep} of $\omega$ and the $\ket{M_\lambda}$ form a suitable basis of $\lambda$.
The $\tilde C_{\vec n, \bar{\vec n}}^{M_\lambda}$ correspond to (generalized) $\SU(\nmodes)$ Clebsch-Gordan coefficients, up to a phase.
Then, the projection onto a specific \ac{irrep} $\lambda$ acts by selecting only the corresponding terms in \cref{eq:cg_basis_vector}.
Re-expressing the $\ket{M_\lambda}$ in terms of Fock states and inserting the expansion into \cref{eq:filter_function} produces so-called \emph{permanents}, which are defined for an arbitrary $r \times r$ matrix $A$ as
\begin{equation}
\label{eq:def_permanent}
 \per(A) \coloneqq \sum_{\pi\in S_r} \prod_{i=1}^r A_{i\,\pi(i)} \, ,
\end{equation}
where $S_r$ denotes the symmetric group over $r$ symbols.
Let us define $A_{\vec n, \vec m}$ to be the matrix obtained from $A$ by taking $m_j$ copies of the $j$-th column of $U$ and then taking $n_i$ copies of the $i$-th row of the resulting matrix.
Then, the appearance of permanents is due to the well-known fact that $\sqrt{\vec n!}\sqrt{\vec m!}\,\sandwich{\vec n}{\tau(g)}{\vec m} = \per(g_{\vec n, \vec m})$ \cite{scheelPermanentsLinearOptical2004}, where we used the multi-index notation $\vec n! \coloneqq n_1! \dots n_\nmodes!$.

Finally, we have the following concise formula that can be used for numerical computation:

\begin{theorem}[PNR filter function -- informal]
	\label{thm:general_filter_informal}
	The filter function \eqref{eq:filter_function} is given as
	\begin{equation}
		\label{eq:filter_function_informal}
		f_{\lambda_k}(\vec n, g) 
		=
		\frac{1}{s_{\lambda_k}}
		\sum_{M} \tilde{C}_{\ninput, \ninputdual}^{M} \sum_{\vec n_1 \in \H_n^m} \tilde{C}_{\vec n_1, \bar{\vec n}_1}^{M} 
		\frac{\abs{ \per(g_{\vec n, \vec n_1}) }^2}{\vec n! \vec n_1!} \,.
	\end{equation}
	Here, $M$ labels a so-called \emph{Gelfand-Tsetlin} basis of the irrep $\lambda_k$ and the coefficients $\tilde C$ coincide with Clebsch-Gordan coefficients for $\SU(\nmodes)$ up to a phase.
\end{theorem}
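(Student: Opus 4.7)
The plan is to start from the definition
\[
f_{\lambda_k}(\vec n, g) = s_{\lambda_k}^{-1} \sandwich{\vec n}{\tau(g)\, P_{\lambda_k}(\ketbra{\ninput}{\ninput})\, \tau(g)^\dagger}{\vec n}
\]
and compute $P_{\lambda_k}(\ketbra{\ninput}{\ninput})$ explicitly via the isomorphism $\omega \simeq \tau\otimes\bar\tau$. The central observation is a weight-counting one: under vectorization, $\ketbra{\ninput}{\ninput}$ corresponds to $\ket{\ninput,\ninputdual}\in \H_n^\nmodes \otimes \overline{\H_n^\nmodes}$, which has zero weight under the maximal torus of $\U(\nmodes)$. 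Since $P_{\lambda_k}$ is $\U(\nmodes)$-equivariant, it preserves weights, so $P_{\lambda_k}(\ketbra{\ninput}{\ninput})$ lies in the zero-weight subspace of $\tau\otimes\bar\tau$. This subspace is spanned precisely by the diagonal operators $\{\ketbra{\vec n_1}{\vec n_1}:|\vec n_1|=n\}$, which is the structural fact that collapses the naively-expected double sum (over $\vec a, \vec b$) into the single sum over $\vec n_1$ appearing in the claim.

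Given this, I would use the generalized Clebsch-Gordan decomposition of $\tau\otimes\bar\tau$ developed earlier in the paper to expand $\ket{\ninput,\ninputdual} = \sum_{\lambda,M}\tilde{C}^M_{\ninput,\ninputdual}\ket{M}$, apply $P_{\lambda_k}$ to retain only the summands with $\lambda=\lambda_k$, and transform back to the tensor-product basis via the inverse change of basis. Combined with the zero-weight restriction this yields
\[
P_{\lambda_k}(\ketbra{\ninput}{\ninput}) = \sum_M \tilde{C}^M_{\ninput,\ninputdual} \sum_{\vec n_1\in \H_n^\nmodes} \overline{\tilde{C}^M_{\vec n_1,\bar{\vec n}_1}}\, \ketbra{\vec n_1}{\vec n_1},
\]
up to the standard convention that $\U(\nmodes)$ Clebsch-Gordan coefficients in the Gelfand-Tsetlin basis can be chosen real, which removes the complex conjugate on the second factor. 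Sandwiching with $\tau(g)$ then produces $|\sandwich{\vec n}{\tau(g)}{\vec n_1}|^2$ for each term, and the well-known permanent identity $\sandwich{\vec n}{\tau(g)}{\vec n_1} = \per(g_{\vec n,\vec n_1})/\sqrt{\vec n!\,\vec n_1!}$ immediately delivers the stated expression for $f_{\lambda_k}(\vec n,g)$.

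The main technical obstacle is not any single algebraic step but the consistent tracking of Clebsch-Gordan conventions: the precise meaning of the dual-basis labels $\ninputdual$ and $\bar{\vec n}_1$, the phases absorbed into the tilde on $\tilde C$, and the reality choice that eliminates the complex conjugate. These conventions are fixed by the paper's earlier development of the CG decomposition and by standard $\U(\nmodes)$ Gelfand-Tsetlin theory, but they must be applied coherently at each step. Once in place, the derivation is essentially mechanical, and the zero-weight/equivariance argument is what ultimately compresses the expression into the single-permanent form suitable for numerical evaluation.
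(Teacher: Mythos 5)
Your proposal is correct and follows essentially the same route as the paper's proof of \cref{thm:filter_function_PNR}: expand $\ket{\ninputGT,\ninputGTdual}$ in the coupled Gelfand-Tsetlin basis, project onto $\lambda_k$, transform back to the product basis, and invoke the permanent identity $\sandwich{\vec n}{\tau(g)}{\vec n_1}=\per(g_{\vec n,\vec n_1})/\sqrt{\vec n!\,\vec n_1!}$. Your zero-weight/equivariance argument for collapsing the double sum to the diagonal $\vec n_2=\vec n_1$ is just a structural restatement of the Clebsch-Gordan selection rule combined with the one-dimensionality of weight spaces in the symmetric irrep, which is exactly how the paper justifies the same step.
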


We give more details on the Gelfand-Tsetlin basis, the computation of Clebsch-Gordan coefficients, and the occurring phases in the technical part, \cref{sec:preliminaries}, and formally prove \cref{thm:general_filter_informal} in \cref{sec:filter_function_PNR}.
We also give an alternative formula for $f_{\lambda_k}$ in \cref{cor:filter-function-matrix-coefficients} using matrix coefficients of the irrep $\lambda_k$.
These also correspond to permanents, however, of a different dimension \cite{Dhand_2015}.

Generally speaking, the computation of filter functions of the form \eqref{eq:filter_function} requires to simulate the entire experiment. 
As the proposed protocol involves non-Gaussian elements, the computation of filter functions is expected to be generally inefficient.
This is manifest in the occurrence of permanents in our filter functions.
The computational complexity of evaluating permanents is central to the complexity-theoretic arguments for boson sampling.
It is known that even approximating permanents is computationally hard \cite{aaronsonComputationalComplexityLinear2010,hangleiter_computational_2023}.
Nevertheless, these quantities can be computed efficiently in some scenarios \cite{chabaudContinuousVariableSamplingPhotonAdded2017, marshallSimulationQuantumOptics2023, bourassaFastSimulationBosonic2021, rahimi-keshariSufficientConditionsEfficient2016, heurtelStrongSimulationLinear2023,rogaClassicalSimulationBoson2020}.

This finding should not come as a surprise, as we find a similar behaviour in the discrete setting -- namely linear cross-entropy benchmarking \cite{liu_benchmarking_2022,heinrichRandomizedBenchmarkingRandom2023}.
There, the post-processing requires the simulation of random circuit sampling which is known to be computationally hard, analogous to boson sampling \cite{boixoCharacterizingQuantumSupremacy2018, aruteQuantumSupremacyUsing2019, liuRedefiningQuantumSupremacy2021,hangleiter_computational_2023}.

In practice, Clebsch-Gordan coefficients can be computed in polynomial time in the dimension of $\lambda$ \cite{alexNumericalAlgorithmExplicit2011}.
This can be significantly sped up by taking advantage of the symmetries of the weight diagrams under the action of the Weyl group of $\SU(\nmodes)$ \cite{alexSUClebschGordanCoefficients2012}.
Specifically, we expect this to be feasible for $\nmodes \approx 20-30$ modes.
In such regime, the computation of the permanents appearing in \cref{eq:filter_function_informal} is also still possible, as we consider $n \leq \nmodes$ particles and, by Ryser's formula, it can be computed in time $\LandauO(n \cdot 2^n)$ \cite{ryserCombinatorialMathematics,nijenhuis2014combinatorial}.

\subsubsection{Sampling complexity}
\label{sec:sampling-complexity}

In the following, we discuss the sample complexity of passive \ac{RB}, i.e.~the number of samples needed to guarantee that the estimator $\hat F_\lambda(\seqlength)$ is $\epsilon$-close to its expected value $F_\lambda(\seqlength)$ with high probability.
Recall from \cref{eq:filtered_data_estimator} that $\hat F_\lambda(\seqlength)$ is a mean estimator for the filter function $f_\lambda$.
Since the latter is only poorly bounded, we compute the variance $\Var[\hat F_\lambda(\seqlength)] = \Var[f_\lambda]/L$ (here the variance is still taken over length-$\seqlength$ sequences).
Then, we can use Chebyshev's inequality to ensure $\abs{\hat F_\lambda(\seqlength) - F_\lambda(\seqlength)} < \epsilon$ with probability $1-\delta$ given $L \geq \epsilon^{-2}\delta^{-1}\Var[f_\lambda]$ samples.

In general, analyzing the variance $\Var[f_\lambda]$ -- more precisely the second moment $\EE[f_\lambda^2]$ --  can be quite cumbersome, as the underlying probability distribution is given by Born probabilities involving the noisy input state, the noisy transformations, and the noisy measurements.
In the filtered \ac{RB} framework \cite{heinrichRandomizedBenchmarkingRandom2023} it is however shown that -- under reasonable assumptions on the noise\footnote{This is necessary as specially engineered noise can drastically change the behavior of the \ac{RB} signal, for instance by relabeling the measurement outcomes. Similar assumptions can be found throughout the \ac{RB} literature \cite{flammiaEfficientEstimationPauli2020,HarHinFer19}.} -- this problem can be reduced to analyzing the second moment $\EE[f_\lambda^2]_\mathrm{ideal}$ in the ideal, noiseless case.
In other words, the presence of noise cannot decrease the efficiency of filtered \ac{RB}.
Using this assumption, we have the following result bounding the variance of passive \ac{RB}:

\begin{theorem}[Variance bound -- informal]
	\label{thm:variance_bound_informal}
	The variance of the passive \ac{RB} protocol is bounded as
	\begin{equation}
	\label{eq:2nd_moment_informal}
		\Var[f_{\lambda_k}]
		\leq
		\EE[f_{\lambda_k}^2]_\mathrm{ideal}
		=
		s_{\lambda_k}^{-2}
		\sum_{\vec n\in\H_n^m} \tilde{g}_k(\vec n, \ninput) \, ,
	\end{equation}
	where $\tilde g_{k}(\vec n, \ninput)$ is a function of Clebsch-Gordan coefficients of the representations $\omega$ and $\lambda_k^{\otimes 2}$.
\end{theorem}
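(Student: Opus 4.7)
The plan is to establish the stated bound in two stages. The first inequality is essentially free: $\Var[f_{\lambda_k}] \leq \EE[f_{\lambda_k}^2]$ holds trivially for the noisy distribution. To pass from the noisy second moment to the ideal one, I would invoke the general variance reduction from the filtered \ac{RB} framework of Ref.~\cite{heinrichRandomizedBenchmarkingRandom2023}, which, under the stated (mild) assumption that the noise does not boost the filter variance, yields $\EE[f_{\lambda_k}^2]_\mathrm{noisy}\leq\EE[f_{\lambda_k}^2]_\mathrm{ideal}$. This step replaces the noisy implementation map $\phi$ by the reference representation $\omega$ and turns the problem into a purely representation-theoretic Haar integral.

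For the exact evaluation of $\EE[f_{\lambda_k}^2]_\mathrm{ideal}$, I would unfold the definitions: in the noiseless model the probability of outcome $\vec n$ after a length-$\seqlength$ sequence is $\sandwich{\vec n}{\omega(g_1\cdots g_\seqlength)(\rho)}{\vec n}$, and by translation invariance of the Haar measure the composed gate $g\coloneqq g_1\cdots g_\seqlength$ is itself Haar-distributed. This gives
\begin{equation*}
\EE[f_{\lambda_k}^2]_\mathrm{ideal}
= s_{\lambda_k}^{-2}\sum_{\vec n\in\H_n^\nmodes}\int_{\U(\nmodes)} \mathrm{d}g\,\sandwich{\vec n}{\omega(g)(\rho)}{\vec n}\,\sandwich{\vec n}{\omega(g)P_{\lambda_k}(\rho)}{\vec n}^{2},
\end{equation*}
an integral that is cubic in $\omega(g)$. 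The three matrix elements can then be expanded in the generalized Clebsch-Gordan basis of \cref{eq:cg_basis_vector}: the projector $P_{\lambda_k}$ restricts two of the three factors to the irrep $\lambda_k$, so that after expansion the integrand lives in the triple product $\lambda_k\otimes\lambda_k\otimes\omega$. Schur orthogonality then collapses the Haar integral to a finite sum over the trivial \acp{irrep} appearing in this triple product, weighted by contractions of Clebsch-Gordan coefficients for $\omega$ and $\lambda_k^{\otimes 2}$.

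The main obstacle I expect is the bookkeeping of these Clebsch-Gordan data. Because two of the three factors carry the same irrep $\lambda_k$, one has to control the decomposition of $\lambda_k\otimes\lambda_k$ into $\U(\nmodes)$-irreps together with its non-trivial multiplicities, and then couple the resulting intertwiner, via a further Clebsch-Gordan contraction, to the third factor in $\omega$ which carries the input data $\rho=\ketbra{\ninput}{\ninput}$. Once all internal Gelfand-Tsetlin labels are summed, the remaining expression depends only on $\vec n$ and $\ninput$ and defines the function $\tilde g_k(\vec n,\ninput)$, yielding the claimed identity. A subsequent, more routine step can then exploit the specific structure of the collision-free input $\ninput=\vec 1_n$ to reduce the formula to a form suitable for numerical evaluation, as carried out later in the paper.
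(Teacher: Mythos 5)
Your proposal follows essentially the same route as the paper: the reduction to the ideal second moment via the filtered-RB framework, the rewriting of $\EE[f_{\lambda_k}^2]_\mathrm{ideal}$ as a Haar integral cubic in $\omega(g)$ with two factors projected onto $\lambda_k$, the expansion in (generalized) Clebsch-Gordan bases, the need to control the multiplicities in $\lambda_k^{\otimes 2}$ (the paper's \cref{lem:lambda_two_copies_decomp}), and the collapse via Schur orthogonality to the irreps common to $\lambda_k^{\otimes 2}$ and $\omega$ — your "trivial irreps in the triple product" phrasing is an equivalent formulation of that last step. This matches the paper's proof of \cref{thm:second_moment_filter_function_pnr} up to bookkeeping of the dual Gelfand-Tsetlin phases.
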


A formal version of \cref{thm:variance_bound_informal} is given as \cref{thm:second_moment_filter_function_pnr} in the technical part and also proven there.
It relies on expressing the second moment $\EE[f_{\lambda_k}^2]_\mathrm{ideal}$ as a suitable integral of the representations $\lambda_k^{\otimes 2}$ and $\omega_n^\nmodes$ over $\SU(\nmodes)$
\cite{heinrichRandomizedBenchmarkingRandom2023, merkelRandomizedBenchmarkingConvolution2019}.
Schur's lemma then implies that the non-trivial contributions to this integral are given by the \acp{irrep} of $\lambda_k^{\otimes 2}$ which are also contained in $\omega$.
We determine these irreps which allows us to finally write $\EE[f_{\lambda_k}^2]_\mathrm{ideal}$ in terms of suitable Clebsch-Gordan coefficients.

Arguably, \cref{eq:2nd_moment_informal} is not very explicit and it is thus not clear how $\EE[f_{\lambda_k}^2]$ scales with $k$, $n$, and $\nmodes$.
Finding a meaningful bound for the expression in \cref{eq:2nd_moment_informal} turns out to be difficult and we are left with the trivial bound
\begin{equation}
	\label{eq:2nd_moment_bound}
	 \EE[f_{\lambda_k}^2] \leq s_{\lambda_k}^{-2} = \left(1+\frac{2k}{\nmodes-1}\right) \binom{k+\nmodes-2}{k}^2\,,
\end{equation}
which suggests an exponential scaling like $2^{4\nmodes}$ in the worst case $k=n=\nmodes$.
This bound is however very loose:
A numerical evaluation of the second moment (see below) reveals that the upper bound \eqref{eq:2nd_moment_bound} already overestimates the second moment for $m=2$ by a factor of 10 and by a factor of $10^4$ for $m=5$.

\begin{figure}
	\centering
	\begin{subfigure}[ht]{.95\linewidth}
		\includegraphics[width=1\linewidth]{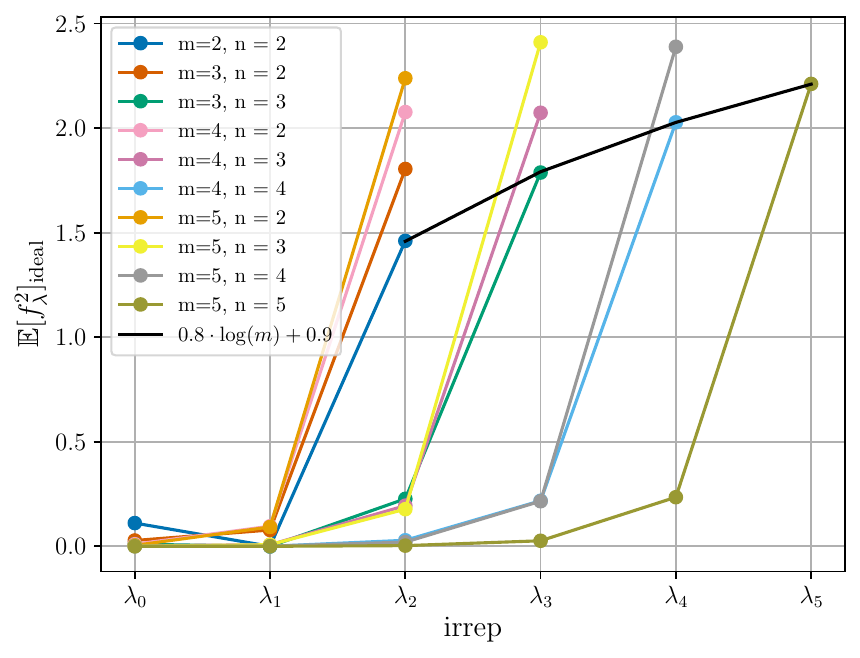}
		\caption{Empirical second moment with $\numsamples = 10000$ samples for different values of $n\leq \nmodes$ and $\nmodes\leq 6$.
 		The largest second moment (for fixed $\nmodes$) is always attained at the largest irrep $k=n$.
 		The scaling of the second moment for $\nmodes=n=k$ with $\nmodes$ seems to be logarithmic, however there is insufficient data the make a definite conclusion.}
		\label{fig:second_moment_empirical_fit}	
	\end{subfigure}
	\begin{subfigure}[ht]{.95\linewidth}
		\includegraphics[width=1\linewidth]{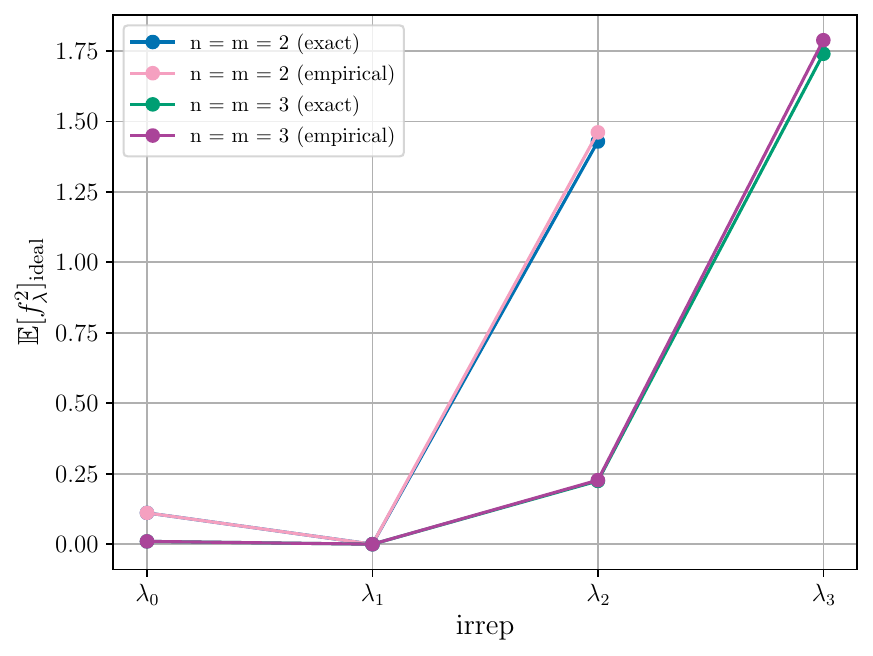}
		\caption{Empirical second moment with $\numsamples = 10000$ samples compared to the exact second moment \eqref{eq:2nd_moment_informal} for $n=\nmodes\leq 3$. 
 		We observe that this number of samples is already sufficient to give reasonable approximations of the second moment. }
		\label{fig:second_moment_empirical_vs_exact}	
	\end{subfigure}
	\caption{
		Second moments of the filter function $f_{\lambda_k}$ for different number of modes $\nmodes$, particles $n$, and irreps $\lambda_k$.
		Lines are shown to enhance readability of data points and their relationships.
	}
\end{figure}

The shear number of Clebsch-Gordan coefficients that need to be evaluated for the formula \eqref{eq:2nd_moment_informal} however limits a numerical study to very low values of $k$, $n$, and $\nmodes$ using standard algorithms \cite{alexNumericalAlgorithmExplicit2011}.\footnote{We expect \cref{eq:2nd_moment_informal} can be evaluated for a moderate number of modes exploiting the symmetries of the weight diagrams under the action of the Weyl group \cite{alexSUClebschGordanCoefficients2012}.
}
To increase the available data a bit, we furthermore estimate the variance empirically using $\numsamples = 10000$ samples from the noiseless outcome distributions.
From \cref{fig:second_moment_empirical_vs_exact}, we can observe that this  number of samples is already enough to give reasonable estimates of $\EE[f_{\lambda_k}^2]_{\mathrm{ideal}}$.
In \cref{fig:second_moment_empirical_fit}, we show the empirical second moments for $\nmodes \leq 5$ using a collision-free input state $\vec{1}_n$ with $n\leq\nmodes$ particles (cf.~\cref{sec:numerics} for further details on the sampling procedure).
Similar to the behaviour for the ideal first moments (overlaps) in \cref{fig:overlaps}, the second moment for a fixed $\nmodes$ is maximized at the largest irrep $k=n$.
In addition, we analyzed the scaling scaling of the empirical second moments for $m=n=k$ with $m$.
Although we have too few data points to make a conclusive statement, we find that the scaling is best compatible with a logarithmic scaling in $\nmodes$ (or a polynomial of very small degree $\approx \nmodes^{1/20}$).
This would be a double-exponential improvement over the trivial bound \eqref{eq:2nd_moment_bound}.

\section{Discussion}
\label{sec:discussion}

\subsection{Summary and open questions}

\emph{Passive \ac{RB}} is the first \ac{RB} protocol for the certification of passive bosonic transformations. 
Naively adopting discrete variable \ac{RB} to this setting would result in a complicated and challenging-to-analyze signal and thus additional care in the protocol detail is necessary.
Using techniques from filtered \ac{RB} \cite{heinrichRandomizedBenchmarkingRandom2023}, in combination with careful choices of initial state and measurement, we find an experimental setting in which a meaningful benchmark is indeed possible.

Using a collison-free $n$-particle state $\ket{1,\dots,1,0,\dots,0}$ as input and \ac{PNR} measurements, we effectively produce a finite-dimensional setting in which the \ac{RB} signal consists of only $(n+1)$ decays, which can be isolated in post-processing using \emph{filtering}.
In fact, we show that it is sufficient to only estimate a constant number of these decays to produce a meaningful benchmark for passive transformations on $n$ particles.
Moreover, the same procedure can be used to estimate particle loss rates.

In addition, we study the sampling complexity of passive \ac{RB} by computing the necessary variances.
The resulting expressions, involving many Clebsch-Gordan coefficients, are difficult to analyze directly.
We evaluate this expression numerically and additionally estimate the variance empirically for a small number of modes.
The obtained data suggests that the variance scales only very mildly with the number of modes $\nmodes$ (in the non-collision regime $n\leq\nmodes$), being best matched by a logarithmic function.
A definite conclusion can only be drawn from a larger data set.
However, this observation allows us to be cautiously optimistic that meaningful passive \ac{RB} experiments are possible with a low to moderate number of samples.

So far, a bottleneck of the protocol is the evaluation of the filter function \eqref{eq:filter_function_informal} during post-processing, which essentially involves the simulation of the experiment.
This can be reduced to the computation of permanents which are known to be computationally hard.
A similar problem is faced in linear cross-entropy benchmarking which requires the -- equally hard -- simulation of random circuit sampling.
Nevertheless, the latter can still be done for a moderate and experimentally interesting number of qubits.
We expect this is also possible for passive \ac{RB} by speeding up the computation of Clebsch-Gordan coefficients using symmetries \cite{alexSUClebschGordanCoefficients2012}, and reducing the permanents to the relevant irreps (cf.~\cref{cor:filter-function-matrix-coefficients}).
In principle, there are also explicit formulas available that could lead to a speed-up \cite{vilenkinRepresentationLieGroups1992Vol3,vilenkinRepresentationLieGroupsSpecialVol1,vilenkinRepresentationLieGroupsSpecialVol2}.
We provide a Julia implementation of the post-processing \cite{gitHubPassiveRB} and will study runtime improvements in future work.

The analysis of the behavior of passive \ac{RB} in the presence of \emph{particle distinguishability} is still an open problem, which we believe can be tackled by suitable extensions of the filtered \ac{RB} framework.
We leave this problem for future work.

We note that -- in principle -- character \ac{RB} is also able to isolate the individual decay rates associated to irreps \cite{helsenNewClassEfficient2019}.
The main difference to filtered \ac{RB} is that (i) an additional random $g\in\SU(\nmodes)$ is inserted before measuring, and (ii) the experimental data is post-processed using the \emph{character} $\chi_\lambda(g)$ of the irrep of interest evaluated at $g$ and the dimension of $\lambda$.
The randomization over $g$ then effectively implements the projector $P_\lambda$ (as in the filter function \eqref{eq:filter_function}).
Although the computation of the character is efficient (in $n$ and $\nmodes$), the sampling complexity is much worse than for passive \ac{RB}.
This is because sampling out the projector $P_\lambda$ converges very slowly:
Using the results of this work, it is not hard to show that the variance of character \ac{RB} is $\Omega(2^{2n})$ for irreps with $k\geq n/2$. 
Thus, both the total computational and experimental effort is exponential, while for passive \ac{RB}, it is likely only the former.
Another advantage of passive \ac{RB} is that our scheme can also be used with non-uniform sampling, cf.~\cref{sec:non-unifrom-sampling}, while this is not clear for character \ac{RB}.

\subsection{Extensions of the protocol}
\label{sec:extensions-protocol}

Arguably, the most straightforward --and experimentally simplest-- protocol would also involve a Gaussian input state and measurement.
As already noted in the introduction, this would however correspond to a classical linear optics experiment and one should not expect to capture errors which would occur in a quantum setting.
Hence, it is more instructive to consider a variation of the passive \ac{RB} protocol, in which either the input state or the measurement is made Gaussian.

However, one needs to be careful when leaving the effectively finite-dimensional setting which we explored in this work.
Indeed, an `inherently infinite-dimensional' \ac{RB} protocol is ill-behaved for the following reason:
Any passive transformation $U\in\U(m)$ is represented as a unitary operator $\tau^\nmodes(U)$ on the full Fock space $\fock_\nmodes \coloneqq \bigoplus_{n = 0}^\infty \H_n^\nmodes$.
Since $\U(\nmodes)$ is compact, $\tau^\nmodes$ is completely reducible, and decomposes into infinitely many finite-dimensional \acp{irrep} acting on the boson number subspaces \cite{anielloExploringRepresentationTheory2006}. 
Then, the conjugation representation $\omega \equiv \tau^\nmodes (\argdot) \tau^{\nmodes \ad}$ decomposes into infinitely many irreps, too.
This means we may find infinitely many decay rates (each associated to one \ac{irrep}) and it is unclear how to truncate those with a regularization argument, as the irreps lack a clear physical interpretation.
On top of that, the decomposition of $\omega$ is not multiplicity-free,\footnote{Note that $\omega$ restricted to $\mathcal{B}(\H_n^\nmodes)$ is equal to $\omega_n^\nmodes$ and --as mentioned in \cref{sec:details-protocol}-- each $\lambda_k$ appears in all $\omega_n^\nmodes$ for $n\geq k$.} which may complicate the post-processing and affect its numerical stability \cite{helsenGeneralFrameworkRandomized2020}.

As a first step, we consider a passive \ac{RB} protocol with a number input state and a balanced heterodyne measurement at the end of each mode.
We give some partial results for this setting in \cref{sec:passive_rb_Gaussian}.
In contrast to the \ac{PNR} setting, it is however not clear how to handle particle loss in such a protocol.
We thus leave a more thorough study of passive \ac{RB} protocols with Gaussian input states or Gaussian measurements for future work.

Lastly, we comment on the extension to other groups.
Broadly speaking, \emph{active} Gaussian transformations play a fundamental role in \ac{CV} systems,
for instance in \ac{GKP} encoding of qubits in bosonic states \cite{albertBosonicCodingIntroduction2022}, or in the preparation of input states for Gaussian boson sampling experiments \cite{bentivegnaExperimentalScattershotBoson2015,hamiltonGaussianBosonSampling2017,chakhmakhchyanBosonSamplingGaussian2017}.
However, randomized benchmarking of general Gaussian transformations faces many challenging issues, most strikingly the fact that this group -- the symplectic group $\Sp{2\nmodes,\RR}$ on $\nmodes$ modes-- 
is \emph{non-compact}.
As a consequence, a probability Haar measure cannot exist, and thus it is not even clear how to `randomize' in this context.
Hence, the filtered \ac{RB} framework focuses on compact groups \cite{heinrichRandomizedBenchmarkingRandom2023}.
Some of the ideas in Ref.~\cite{heinrichRandomizedBenchmarkingRandom2023} nevertheless carry over to the non-compact case if one considers suitable random walks on $\Sp{2\nmodes, \RR}$.
By the central limit theorem for reductive Lie groups \cite{benoistRandomWalksReductive2016, breuillardRandomWalksLie}, such random walks converge to a Gaussian distribution on $\Sp{2\nmodes, \RR}$.
However, this does not clarify --to the best of our knowledge-- the behaviour and convergence of the moment/twirling operator.
Since $\Sp{2\nmodes, \RR}$ has Kazhdan's property (T) \cite{bekkaKazhdanProperty2008}, 
the characterization of the relevant representation theory -- capturing the action of active transformations onto density operators-- plays a central role.
This involves the well-known metaplectic (or oscillator) representation of $\Sp{2\nmodes, \RR}$ and can -- in principle -- be approached using Howe duality \cite{howePerspectivesInvariantTheory1995,howe:oscillatorsemigroup}.
Due to infinite-dimensional \acp{irrep} \cite{folland:harmonics:phasespace} and open convergence questions, it is however far from obvious whether a meaningful \ac{RB} protocol can be developed for active transformations.
We leave the resolution of these questions for future work.

\subsection{Parallel work} 

In parallel to this work, \textcite{wilkins_2024} independently developed a randomized benchmarking protocol for non-universal, bosonic (or fermionic) analog simulators, called \emph{randomized analog benchmarking}, which is also based on filtered \ac{RB} \cite{heinrichRandomizedBenchmarkingRandom2023}.
Instead of sampling Haar-randomly, they use randomized sequences generated by a family of tunable Hamiltonians.
In their ``non-interacting'' case, their random sequences converge to the Haar measure on $\U(\nmodes)$ and thus the relevant representation theory (and thus post-processing) is the same as in this work.
Their work also provides a detailed numerical study of the protocol and the behaviour under different noise models.
Finally, the authors also discuss an ``interacting'' case (with on-site/in-mode interactions) which leads to a different group than $\U(\nmodes)$ for which all non-trivial irreps are joined into a single one, similar as for discrete \ac{RB} with unitary 2-designs.

\begin{acknowledgements}
	We would like to thank Ingo Roth, Jonas Haferkamp, Jonas Helsen, René Sondenheimer, Jan von Delft, Francesco Di Colandrea, and Mattia Walschaers for helpful discussions.
	We would like to express our gratitude to Zoltán Zimborás for bringing the new version of Piquasso \cite{kolarovszki2024piquassophotonicquantumcomputer} to our attention.
	Furthermore, we are grateful to the anonymous referees of the TQC and AQIS conferences for helpful comments on the manuscript.
	
	This work has been funded by the Deutsche Forschungsgemeinschaft (DFG, German Research Foundation) -- grant numbers 441423094 and 547595784, by the German Federal Ministry of Education and Research (BMBF) within the funding program ``quantum technologies -- from basic research to market'' via the joint project MIQRO (grant number 13N15522), and by the Fujitsu Germany GmbH and the Dataport as part of the endowed professorship ``Quantum Inspired and Quantum Optimization.''
	DG was supported by an NWO Vidi grant (Project No. VI.Vidi.192.109).
	MH has conducted part of this research while visiting QuSoft and Centrum Wiskunde \& Informatica, Amsterdam, and would like to express his gratitude for their support and hospitality.
	Publishing fees supported by Funding Programme Open Access Publishing of Hamburg University of Technology (TUHH).
\end{acknowledgements}

\clearpage

\onecolumngrid

\begin{center}
\textbf{\large Bosonic randomized benchmarking with passive transformations}\\[1em]
\textit{\large -- Technical part --}
\end{center}

\setcounter{equation}{0}
\setcounter{figure}{0}
\setcounter{table}{0}
\setcounter{section}{0}
\setcounter{theorem}{0}
\makeatletter
\renewcommand{\theequation}{T\arabic{equation}}
\renewcommand{\thefigure}{T\arabic{figure}}
\renewcommand{\thesection}{T\Roman{section}}
\renewcommand{\thetheorem}{T\arabic{theorem}}

\section{Technical Background}
\label{sec:preliminaries}

In this section, we review the main technical tools used in the proofs of our main results, shown in \cref{sec:passive_RB_Fock}.
First, we review \acp{irrep} of $\SU(\nmodes)$ and the Clebsch-Gordan decomposition in terms of \acl{GT} patterns, then we briefly review additional technical details concerning filtered \acl{RB}.

\subsection{Representations of \texorpdfstring{$\SU(\nmodes)$}{SU(m)}}
\label{sec:rep-theory}

As $\SU(\nmodes)$ is a compact Lie group, we can use \emph{weight theory} to understand its representation theory, see e.g.~Ref.~\cite{hallLieGroupsLie2015,fultonRepresentationTheory2004,goodmanSymmetryRepresentationsInvariants2009}.
To this end, we consider the diagonal subgroup of $\SU(\nmodes)$, the elements of which can be written as
\begin{equation}
  \diag( e^{i\theta_1}, \dots, e^{i\theta_{\nmodes-1}}, e^{- i \sum_{j=1}^{\nmodes-1} \theta_j })\,.
\end{equation}
Hence, the diagonal subgroup is isomorphic to the $(\nmodes-1)$-torus $T^{\nmodes-1}\simeq (S^1)^{\times(\nmodes-1)}$.
Since $T^{\nmodes-1}$ is Abelian, the restriction of any irreducible representation $\rho$ of $\SU(\nmodes)$ to $T^{\nmodes-1}$ decomposes into one-dimensional, irreducible subrepresentations, given by characters of $T^{\nmodes-1}$:
\begin{equation}
	\label{eq:weight_from_repr}
	\rho|_{T^{\nmodes-1}} \simeq \bigoplus_{w\in\ZZ^{\nmodes-1}} \chi_w \otimes \1_{m_w}\,.
\end{equation}
Here, $\chi_w(\theta) = e^{i w^\top \theta}$, $w\in\ZZ^{\nmodes-1}$, are characters of $T^{\nmodes-1}$ and $m_w$ are the \emph{(inner) multiplicities}.
Then, the \emph{weights} of $\rho$ are given by the vectors $w$ for which $m_w\neq 0$.
The \emph{weight space} of $w$ is the isotypic component of $w$, i.e.~the carrier space of $\chi_w \otimes \1_{m_w}$, and has dimension $m_v$.
We can order weights lexicographically, i.e.~$v>w$ if and only if $v_i > w_i$ for all $i$.
We then say that the weight $v$ is higher than $w$.
Importantly, the \emph{theorem of the highest weight} \cite[Thms.~9.4 and 9.5]{hallLieGroupsLie2015} states that every irrep has a unique highest weight and is uniquely determined by it.
Hence, we can characterize the irreps of $\SU(\nmodes)$ by studying the structure of weights.

A convenient way of doing that is given by \emph{Young diagrams}:
Let $n \geq 0$ be a non-negative integer and let $\lambda = (\lambda_1, \dots, \lambda_\nmodes)$ be a partition of $n$, i.e. $\lambda_1 \geq \lambda_2 \geq \dots \geq \lambda_\nmodes \geq 0$ with $\sum_{i=0}^\nmodes \lambda_i = n$.
Any such partition is identified with a Young diagram, namely a collection of boxes arranged in left-justified rows with a weakly decreasing number of boxes in each row, where the $i$-th row contains $\lambda_i$ boxes, e.g.
\begin{equation}
	\ydiagram{5,3,2}
\end{equation}
corresponds to $\lambda=(5,3,2)$.
Young diagrams uniquely determine \acp{irrep} of $\SU(\nmodes)$ up to constant shifts, namely, $(\lambda_1, \dots, \lambda_\nmodes)$ and $(\lambda_1+c, \dots, \lambda_\nmodes+c)$ identify the same \ac{irrep} for any integer $c$.
Therefore, as the rank of $\SU(\nmodes)$ is $\nmodes-1$, by convention we assume $\lambda_\nmodes=0$ without loss of generality.
In the following we will not distinguish between Young diagrams and the corresponding \acp{irrep}, unless otherwise specified.

For a given \ac{irrep} $\lambda$, the \emph{dual (or contragredient)} representation $\lambda^*$ is defined as $\lambda^*(g) \coloneqq \lambda(g^{-1})^T$ for each $g \in \SU(\nmodes)$, acting on the vector space dual of $\lambda$'s carrier space.
Notably, $\lambda^*$ is also irreducible, see \cite[Prop~4.22]{hallLieGroupsLie2015}.
In a fixed orthonormal basis, we also have $\lambda^* \cong \bar \lambda$, where $\bar \lambda(g) \coloneqq \overline{\lambda(g)}$ for each $g \in \SU(\nmodes)$ denotes the complex conjugate representation of $\lambda$ (w.r.t.~the fixed basis).
For any irrep $\lambda = (\lambda_1, \dots, \lambda_\nmodes)$, this implies $\lambda^*$ is identified by the \emph{dual} Young diagram $\bar \lambda \coloneqq (\lambda_1 - \lambda_\nmodes, \lambda_1 - \lambda_{\nmodes-1}, \dots, \lambda_1 - \lambda_2, 0)$.
More practically, $\bar \lambda$ is constructed by completing $\lambda$ to a $\nmodes \times \lambda_1$ rectangular Young diagram, and rotating the diagram formed by the newly added boxes by $180^\circ$.
For instance, 
\begin{equation}
	\ytableausetup{centertableaux, boxsize=1.2em}
	\ydiagram[*(green)]
	{5+0,3+2,2+3,1+4,0+5}
	*[*(yellow)]{5,5,5,5,5}
	\longrightarrow\quad
	\ydiagram{5,3,2,1} \, , \quad \ydiagram{5,4,3,2}
\end{equation}
are dual Young diagrams in $\SU(5)$.

A \emph{semi-standard Young tableaux} is a filling of a Young diagram with entries taken from any totally ordered set (here, $\NN$) such that the entries are weakly increasing across each row and strictly increasing down each column.
For instance,
\begin{equation}
	\label{eq:ytableaux_examples}
	\ytableaushort{1 1 2 2 3, 2 2 3, 3 4}*{5, 3, 2} \, , \quad
	\ytableaushort{1 1 1 1 1, 2 2 3, 3 4}*{5, 3, 2}
\end{equation}
are semi-standard Young tableaux of shape $\lambda = (5,3,2,0)$.
For a fixed \ac{irrep} $\lambda$ of $\SU(\nmodes)$, the semi-standard Young tableaux of shape $\lambda$ label an orthonormal basis of $\lambda$, sometimes referred as the \emph{Weyl basis}.
For instance, the Young tableaux
\begin{equation}
	\label{eq:Weyl_basis_example}
	\begin{aligned}
		\ytableausetup{boxsize=1.2em}
		\ytableaushort{1 1, 2}*{2, 1} \, , \quad \ytableaushort{1 1, 3}*{2, 1} \, , \quad \ytableaushort{1 2, 2}*{2, 1} \, , \quad \ytableaushort{1 2, 3}*{2, 1} \, , \\
		\ytableaushort{1 3, 2}*{2, 1} \, , \quad \ytableaushort{1 3, 3}*{2, 1} \, , \quad \ytableaushort{2 2, 3}*{2, 1} \, , \quad \ytableaushort{2 3, 3}*{2, 1} \, .
	\end{aligned}
\end{equation}
identify an orthonormal basis for the $\SU(3)$ adjoint \ac{irrep} $\lambda=(2,1,0)$.

\subsubsection{Gelfand-Tsetlin patterns.}
\label{sec:GTpattern}

A more convenient way of labelling basis vectors for any irrep $\lambda = (\lambda_1, \dots, \lambda_\nmodes)$ of $\SU(\nmodes)$ is by \emph{\ac{GT} patterns}.
A \ac{GT} pattern $M$ of shape $\lambda$ and length $\nmodes$ is represented by a triangular table with $\nmodes$ rows, the $j$-th row containing $j$ integers (counting from the bottom to the top)
\begin{equation}
    \label{eq:GTpattern}
    M =
	\begin{psmallmatrix}
        M_{1,\nmodes} && M_{2,\nmodes} && \dots && M_{\nmodes-1, \nmodes} && M_{\nmodes,\nmodes} \\
        &M_{1,\nmodes-1} && M_{2,\nmodes-1} & \dots & M_{\nmodes-2,\nmodes-1} && M_{\nmodes-1,\nmodes-1}&\\
        &&\ddots&&\vdots&&\reflectbox{$\ddots$}&&\\
        &&& M_{1,2}&&M_{2,2}&&&\\
        &&&& M_{1,1}&&&&
    \end{psmallmatrix} \, ,
\end{equation}
and $M_{i,\nmodes} = \lambda_i$ for every $i \in [\nmodes]$ (and, in particular, $M_{\nmodes, \nmodes}=0$ by convention).
The entries satisfy the \emph{interlacing} or \emph{inbetweenness} condition:
\begin{equation} 
	M_{i,j+1} \geq M_{i,j} \geq M_{i+1,j+1}
\end{equation}
for every $i \in [\nmodes-1]$ and $j \in [\nmodes-1]$.
We remark that by convention the indices in a \ac{GT} pattern are swapped, namely the first one is the column index (which is increasing from left to right, as usual), while the second one is the row index. 
We denote the set of \ac{GT} patterns of shape $\lambda$ by $\GT{\lambda}$.

An orthonormal basis for $\lambda$ --referred as the \acl{GT} basis-- is given by state vectors $\{\ket{M}\}$, where $M$ is a valid \ac{GT} pattern 
with top row $\vec M_1 \equiv (M_{1,1}, \dots, M_{1,\nmodes}) = \lambda$.
Hence, the dimension of $\lambda$ is equal to the number of such states, for which the following formula holds:
\begin{equation}
	\label{eq:dim_irrep_SU(N)}
	\dim \lambda = \prod_{1 \leq i \leq j \leq \nmodes} \left( 1 + \frac{M_{i,\nmodes} - M_{j,\nmodes}}{j-i} \right) \, .
\end{equation}
In terms of the \ac{GT} basis, the highest weight vector of $\lambda$ is identified by the pattern maximizing the inbetweenness conditions, namely
\begin{equation}
	\label{eq:GTpattern_HWV}
	M_\mathrm{max} =
	\begin{psmallmatrix}
		M_{1,\nmodes} && M_{2,\nmodes} && \dots && M_{\nmodes-1, \nmodes} && M_{\nmodes,\nmodes} \\
		&M_{1,\nmodes} && M_{2,\nmodes} & \dots & M_{\nmodes-2,\nmodes} && M_{\nmodes-1,\nmodes-1}&\\
		&&\ddots&&\vdots&&\reflectbox{$\ddots$}&&\\
		&&& M_{1,\nmodes}&&M_{2,\nmodes}&&&\\
		&&&& M_{1,\nmodes}&&&&
	\end{psmallmatrix}
\end{equation}
(likewise, the lowest weight vector of $\lambda$ is obtained by minimizing the inbetweenness conditions).

\ac{GT} patterns are in one-to-one correspondence with semi-standard Young tableaux.
In fact, for a given Young tableau $T$ of shape $\lambda$, the shape of the corresponding \ac{GT} pattern $M$ is the same shape as $T$ and the entry $M_{j,k}$ of $M$ (remember the different indexing in \eqref{eq:GTpattern} compared to matrix indices) is given by the number of entries in the $j$-th row of $T$ which are less or equal than $k$.
Conversely, given a \ac{GT} pattern $M$ of shape $\lambda$, the shape of the corresponding Young tableau $T$ is determined by the first row of $M$ and $M_{j,k} - M_{j,k-1}$ is the number of $k$'s in the $j$'th row of $T$.
Throughout this to work, we assume that all illegal coefficients are set to $0$.
For instance, the Young tableaux in \cref{eq:ytableaux_examples} corresponds to the \ac{GT} patterns
\begin{equation}
	\begin{psmallmatrix}
		5 && 3 && 2 && 0 \\
		& 5 && 3 && 2 & \\
		&& 4 && 2 & \\
		&&& 2 &&&
	\end{psmallmatrix}
	\, , \quad
	\begin{psmallmatrix}
		5 && 3 && 2 && 0 \\
		& 5 && 3 && 1 & \\
		&& 5 && 2 & \\
		&&& 5 &&&
	\end{psmallmatrix} \, ,
\end{equation}
and the Weyl basis \cref{eq:Weyl_basis_example} corresponds to the \ac{GT} basis
\begin{equation}
	\begin{aligned}
		\begin{psmallmatrix}
			2 && 1 && 0 \\
			& 2 && 1 & \\
			&& 2 && 
		\end{psmallmatrix} \, , \quad
		\begin{psmallmatrix}
			2 && 1 && 0 \\
			& 2 && 0 & \\
			&& 2 && 
		\end{psmallmatrix} \, , \quad
		\begin{psmallmatrix}
			2 && 1 && 0 \\
			& 2 && 1 & \\
			&& 1 && 
		\end{psmallmatrix} \, , \quad
		\begin{psmallmatrix}
			2 && 1 && 0 \\
			& 2 && 0 & \\
			&& 1 && 
		\end{psmallmatrix} \, , \\
		\begin{psmallmatrix}
			2 && 1 && 0 \\
			& 1 && 1 & \\
			&& 1 && 
		\end{psmallmatrix} \, , \quad
		\begin{psmallmatrix}
			2 && 1 && 0 \\
			& 1 && 0 & \\
			&& 1 && 
		\end{psmallmatrix} \, , \quad
		\begin{psmallmatrix}
			2 && 1 && 0 \\
			& 2 && 0 & \\
			&& 0 && 
		\end{psmallmatrix} \, , \quad
		\begin{psmallmatrix}
			2 && 1 && 0 \\
			& 1 && 0 & \\
			&& 0 && 
		\end{psmallmatrix} \, .
	\end{aligned}
\end{equation}

For a given \ac{GT} pattern $M$, the \emph{weight} of $\ket{M}$ is the $(\nmodes-1)$-tuple $w_M \coloneqq (w_{1}^{(M)}, \dots, w_{\nmodes-1}^{(M)})$ introduced in \cref{eq:weight_from_repr}.
Each entry $w_i^{(M)}$ can be determined by $M$ as follows:
\begin{equation}
	\label{eq:weight_element}
	w_j^{(M)} = 2 \sum_{i=1}^j M_{i,j} - \sum_{i=1}^{j-1} M_{i,j-1} - \sum_{i=1}^{j+1} M_{i,j+1} \, .
\end{equation}
We remark weights generalizes the notion of magnetic quantum number $m$ for $\SU(2)$ in the quantum theory of angular momentum to arbitrary many modes.
In the latter expression, we assume the convention that $\sum_{i=1}^{j-1} M_{i,j-1} = 0$ if $j = 1$.

All states $\ket{M}$ of the same weight $w$ form a basis of the weight space of $w$.
Note that these are typically not one-dimensional, except for $\SU(2)$ \cite{biedenharnRepresentationsSemisimpleLie2004, bairdRepresentationsSemisimpleLie2004, bairdRepresentationsSemisimpleLie1964}.
For instance, the \ac{GT} patterns
\begin{equation}
	M =
	\begin{psmallmatrix}
		3 && 2 && 0 \\
		& 3 && 0 & \\
		&& 1 &&
	\end{psmallmatrix} \, , \quad
	\tilde M =
	\begin{psmallmatrix}
		3 && 2 && 0 \\
		& 2 && 1 & \\
		&& 1 &&
	\end{psmallmatrix}
\end{equation}
are such that $w^{M} = w^{\tilde M} = \left(-1, 0\right)$.
Thus the dimension of the weight space of $w$ (the inner multiplicity) corresponds to the number of \ac{GT} states (or, equivalently, to the number of semi-standard Young tableaux) with weight $w$.
These amount to Kostka's numbers, and can be computed e.g. with recursive algorithms \cite{delaneyInnerRestrictionMultiplicity1969}.

It is worth reminding here a related definition of a \emph{weight} for $\U(\nmodes)$ (or $\GL(\nmodes)$) representations. 
\acl{GT} patterns also label basis vectors of the corresponding irreps of such groups.
However, the condition $M_{m,m} = 0$ is dropped.
In this case, the weight of the GT pattern $M$ is defined as $w_{\U(\nmodes)}^M \coloneqq (w_{\U(\nmodes), 1}^M, \dots, w_{\U(\nmodes), \nmodes}^M)$ with
\begin{equation}
	w_{\U(\nmodes), j}^M \coloneqq \sum_{i=1}^j M_{i,j} - \sum_{i=1}^{j-1} M_{i,j-1} \, ,
\end{equation}
therefore
\begin{equation}
	w_j^{(M)} =  w_{\U(\nmodes), j}^M - w_{\U(\nmodes),j+1}^M \, .
\end{equation}
An equivalent definition that will be used in this work is the following:
The \emph{tableau weight} is the $\nmodes$-tuple $w^T = (w_1^T, \dots, w_\nmodes^T)$ where $w_i^T$ is the total number of $i$ entries in $T$.
If $T$ is the semi-standard Young tableau associated with $M$, we also have
\begin{equation}
	w_j^T \coloneqq \sum_{i=1}^j M_{i,j} - \sum_{i=1}^{j-1} M_{i,j-1} \, .
\end{equation}
The weights $w_M$ and $w^T$ are clearly related in a similar way:
\begin{equation}
	w_j^{(M)} =  w_j^T - w_{j+1}^T  \, .
\end{equation}
We notice en passant that the tableau weight is always a sequence of non-negative integers, which sometimes is more convenient for combinatorics.
For instance, the tableaux $T$ and $\tilde T$ associated with the \ac{GT} patterns $M$ and $\tilde M$ from the former examples are
\begin{equation*}
	T = \ytableaushort{1 2 2, 3 3}*{3, 2} \, , \quad 
	\tilde T = \ytableaushort{1 2 3, 2 3}*{3, 2}
\end{equation*}
and we have $w^{T} = w^{\tilde T} = (1, 2, 2)$.

\subsubsection{Dual GT patterns.}
For a given \ac{GT} pattern $M$ of shape $\lambda$, we define the \emph{dual} \ac{GT} pattern $\bar M$ of shape $\bar \lambda$ as the pattern with entries satisfying the relation
\begin{equation}
	\label{eq:GTpattern_dual}
	\bar M_{i,l} \coloneqq M_{1,\nmodes} - M_{l-i+1,l} \, ,
\end{equation}
i.e.~$\bar M = M_{1,\nmodes} + \tilde M$, where the sum shall be interpreted as the element-wise constant shift of $\tilde M$ by $M_{1, \nmodes}$, and $\tilde M$ is given by
\begin{equation}
	\label{eq:GTpattern_dual_no_shift}
	\tilde M =
	\begin{psmallmatrix}
		-M_{\nmodes,\nmodes} && -M_{\nmodes-1,\nmodes} && \dots && -M_{2, \nmodes} && -M_{1,\nmodes} \\
		&-M_{\nmodes-1,\nmodes-1} && -M_{\nmodes-2,\nmodes-1} & \dots & -M_{2,\nmodes-1} && M_{1,\nmodes-1}&\\
		&&\ddots&&\vdots&&\reflectbox{$\ddots$}&&\\
		&&& -M_{2,2}&&-M_{1,2}&&&\\
		&&&& -M_{1,1}&&&&
	\end{psmallmatrix} \,.
\end{equation}
By construction, $\bar M$ defines a basis state for the dual irrep $\lambda^* \cong \bar \lambda$.
The conjugate operation is also such that each state $\ket{M}$ of $\lambda$ is associated with a unique conjugate state $\ket{\bar M}$ of $\bar \lambda$.
Specifically, the conjugation operation is such that \cite{bairdRepresentationsSemisimpleLie1964}
\begin{equation}
	\label{eq:conjugation_relative_phase}
	\ket{M} = (-1)^{\varphi(M)} \ket{\bar M} \, ,
\end{equation}
for a suitable phase function that can be determined as follows:
For a \ac{GT} pattern $M$, define the function
\begin{equation}
	s_M(k) = \sum_{j=1}^{k} \sum_{i=1}^{j} M_{i,j} \, ,
\end{equation}
which corresponds to the sum of the labels of $M$ in the first $k$ rows (counting from bottom to top).
Then \cite{bairdRepresentationsSemisimpleLie1964},
\begin{equation}
	\label{eq:phase_dual_GTpattern}
	\varphi(M) = s_M(\nmodes-1) - s_{M_\mathrm{max}}(\nmodes-1) \, , 
\end{equation}
where $M_\mathrm{max}$ is defined in \cref{eq:GTpattern_HWV}.
We remark that the latter holds up to a redefinition of every GT basis state by a global phase.
In this work, we consider the Condon-Shortley convention that fixes the global phase in such a way that usual relations for $\SU(2)$ are retrieved.

\subsubsection{Symmetric irreps in \texorpdfstring{$\SU(\nmodes)$}{SU(m)}.}
\label{sec:symmetric_irrep}

In this section, we summarize a few basic facts concerning symmetric irreps of $\SU(\nmodes)$, as they are of central importance throughout this work.
By construction, the space of $n$ particles over $\nmodes$ modes is maximally symmetric under permutations over the modes.
This implies that the action of $g \in \SU(\nmodes)$ on such space is described by the \ac{irrep}
\begin{equation}
	\label{eq:young_diagram_completely_symmetric}
	\tau_n^\nmodes
	\equiv
	(n, \underbrace{0, \dots, 0}_{\nmodes-1})
	=
	\begin{tikzpicture}
		[scale=.5,baseline={([yshift=-.5ex]current bounding box.center)}]
		\matrix (m) [matrix of math nodes,
		nodes={draw, minimum size=4mm, anchor=center},
		column sep=-\pgflinewidth,
		row sep=-\pgflinewidth
		]
		{
			\ & |[draw=none]|\dots & \ \\
		};
		\draw[BC] (m-1-3.south) -- node[below=2mm] {$n$} (m-1-1.south);
	\end{tikzpicture}
	,
\end{equation}
where the number of boxes has the interpretation of the number of particles in the system.
Formally, the Young diagram on the r.h.s. labels the maximally symmetric \ac{irrep} in $\SU(\nmodes)$.
Notably, the weights of the maximally symmetric irreps uniquely identify \ac{GT} basis elements, as it can be easily checked via the associated tableaux weights.

In $\H_n^\nmodes$, a common orthonormal basis is the \emph{Fock basis}, given by Fock states $\{\ket{\vec n} \mid \vec n \in \NN^\nmodes, \, \sum_{i=1}^\nmodes n_i = n\}$.
We remark that the \ac{GT} basis, as well as the Weyl basis, labels the same set of orthonormal vectors as the Fock basis.
In fact, for symmetric \acp{irrep}, $\vec n$ is exactly the tableau weight of the corresponding Young tableau, i.e. $n_i$ is the number of boxes filled with $i$ for each $i \in [\nmodes]$, e.g.
\begin{equation}
	\ket{3, 2, 1} =	\ketb{ \, \ytableaushort{1 1 1 2 2 3}*{6} \, } \, ,
\end{equation}
from which follows the correspondence with the \ac{GT} basis.
In particular, for any Fock state $\ket{\vec n} = \ket{n_1, \dots, n_\nmodes}$, the corresponding \ac{GT} pattern --that will be denoted by $N$ throughout this work-- is given as follows:
\begin{equation}
	\label{eq:GTpattern_Fock_state}
	N =
	\begin{pmatrix}
		n && 0 && \dots && 0 && 0 \\
		&\sum_{i=1}^{\nmodes-1} n_i && 0 & \dots & 0 && 0&\\
		&&\ddots&&\vdots&&\reflectbox{$\ddots$}&&\\
		&&& n_1 + n_2&&0&&&\\
		&&&& n_1&&&&
	\end{pmatrix} \, .
\end{equation}

The complex conjugate representation (dual representation) on $\H_n^\nmodes \cong (\H_n^\nmodes)^*$ is identified by the Young diagram
\begin{equation}
	\label{eq:young_diagram_dual_completely_symmetric}
	\bar \tau_n^\nmodes
	=
	\begin{tikzpicture}
		[scale=.5,baseline={([yshift=-.5ex]current bounding box.center)}]
		\matrix (m) [matrix of math nodes,
		nodes={draw, minimum size=4mm, anchor=center},
		column sep=-\pgflinewidth,
		row sep=-\pgflinewidth
		]
		{
			\ & |[draw=none]|\dots & \ \\
			|[draw=none]|\vdots & |[draw=none]|\dots & |[draw=none]|\vdots \\
			\ & |[draw=none]|\dots & \ \\
		};
		\draw[BC] (m-3-3.south) -- node[below=2mm] {$n$} (m-3-1.south);
		\draw[BC] (m-3-1.west) -- node[left=2mm] {$\nmodes-1$} (m-1-1.west);
	\end{tikzpicture}
	\, .
\end{equation}
Note that although $\bar \tau_n^\nmodes$ acts on $\H_n^\nmodes$, it is not symmetric (unless $m=2$).
We can construct the dual \ac{GT} basis labeled by \ac{GT} patterns $\bar N$ of the form (cf.~\cref{eq:GTpattern_dual,eq:GTpattern_Fock_state})
\begin{equation}
	\label{eq:GTpattern_Fock_state_dual}
	\bar N =
	\begin{pmatrix}
		n && n && \dots && n && 0 \\
		&n && n & \dots & n && n_\nmodes&\\
		&&\ddots&&\vdots&&\reflectbox{$\ddots$}&&\\
		&&& n && \sum_{i=3}^\nmodes n_i &&&\\
		&&&& \sum_{i=2}^\nmodes n_i &&&&
	\end{pmatrix} \, .
\end{equation}
By \cref{eq:conjugation_relative_phase}, the dual GT basis again coincides with the Fock basis of $\H_n^\nmodes$, but modified with the phase function \cref{eq:phase_dual_GTpattern}.

Finally, unlike the general case of arbitrary irreps of $\SU(\nmodes)$, the weight spaces of symmetric \acp{irrep} are clearly one-dimensional:
Each Fock state corresponds to one and only one Young tableau, as there are no degrees of freedom for box-labeling.
Weight spaces of dual symmetric \acp{irrep} have the same property by duality.

\subsubsection{Clebsch-Gordan coefficients.}
\label{sec:clebsch-gordan}

A crucial step for filtered \ac{RB} is the decomposition of the reference representation into \acp{irrep}.
In this section, we recap the role of the Clebsch-Gordan series for $\SU(\nmodes)$ in the decomposition of any tensor product representation, which will be employed in the remaining of this work.
This topic has been investigated extensively over the years due to its relevance in particle physics, so we refer to standard references such as \cite{sternbergGroupTheoryPhysics1994, cornwellGroupTheoryPhysics1984vol1, cornwellGroupTheoryPhysics1984vol2} for further details.

For two given irreps $\pi_1, \, \pi_2$ of $\SU(\nmodes)$, we consider the (completely reducible) tensor product representation $\pi_1 \otimes \pi_2\colon \SU(\nmodes) \rightarrow \U(\H_{\pi_1} \otimes \H_{\pi_2})$.
In the following, we will identify the carrier space $\H_{\pi}$ with $\pi$ for any \ac{irrep} of $\SU(\nmodes)$.
By the compact version of Maschke's theorem \cite[Thm. 5.2]{follandCourseAbstractHarmonic2015}, we have
\begin{equation}
\label{eq:maschke}
	\pi_1 \otimes \pi_2 \simeq \bigoplus_{\lambda} \lambda^{\oplus m_\lambda} \, ,
\end{equation}
where $m_\lambda$ is the multiplicity of $\lambda$ in $\pi_1 \otimes \pi_2$.
For $\SU(\nmodes)$, such decomposition can be computed in terms of Young diagrams with \emph{Littlewood-Richardson's rules} that we summarize in \cref{app:littlewood_richardson}.

In the context of second quantization, this decomposition can be interpreted as the generalization of the Clebsch-Gordan decomposition of sums of angular momenta in Quantum Mechanics \cite{alexNumericalAlgorithmExplicit2011, alexNonAbelianSymmetriesNumerical, mathurSUIrreducibleSchwinger2010, mathurInvariantsProjectionOperators2011}.
In particular, there exists a unitary matrix $\cg$ --here referred as the \emph{Clebsch-Gordan matrix}-- that realizes the isomorphism implicit in \cref{eq:maschke}: 
\begin{equation}
	\label{eq:cb_decomposition_matrix}
	\cg \left( \pi_1 \otimes \pi_2 \right) \cg^\ad = 
	\bigoplus_\lambda \lambda \otimes \1_{m_\lambda} \, ,
\end{equation}
where $\1_{m_{\lambda}}$ is the $m_\lambda \times m_\lambda$ identity matrix.
We explicitly define $C$ as the basis change matrix that takes the product \ac{GT} basis on the LHS of \cref{eq:cb_decomposition_matrix} to the union of \ac{GT} bases of every $\lambda$ on the RHS.
More precisely, for \ac{GT} patterns $M_1 \in \GT{\pi_1}, M_2 \in \GT{\pi_2}$, we have
\begin{equation}
	\ket{M_1, M_2} = \sum_{\lambda, r} \sum_{M \in \GT{\lambda}} C_{M_1, M_2}^{M, r} \ket{M, r} \, ,
\end{equation}
where $r \in [m_\lambda]$ denotes the $r$-th copy of $\lambda$ in $\pi_1 \otimes \pi_2$. \
The matrix coefficients $C_{M_1, M_2}^{M, r}$ of $\cg$ are called the \emph{Clebsch-Gordan coefficients}.
$\cg$ is uniquely defined up to global phases, and by convention it is chosen to be real.
Hence, we have the inverse transformation
\begin{equation}
	\ket{M, r} = \sum_{M_1 \in \GT{\pi_1}} \sum_{M_2 \in \GT{\pi_2}} C_{M_1, M_2}^{M, r} \ket{M_1, M_2} \, .
\end{equation}
By unitarity of $\cg$, the following orthogonality relations hold true:
\begin{align}
	\label{eq:CG_orthogonality}
	\sum_{\lambda, r} \sum_{M \in \GT{\lambda}} C_{M_1, M_2}^{M, r} C_{M_3, M_4}^{M, r}
	&=
	\delta_{M_1, M_3} \delta_{M_2, M_4} \, , \\
	\sum_{M_1 \in \GT{\pi_1}} \sum_{M_2 \in \GT{\pi_2}} C_{M_1, M_2}^{M, r} C_{M_1, M_2}^{M', r'} 
	&=
	\delta_{M, M'} \delta_{r, r'} \, .
\end{align}

As in the case of $\SU(2)$, selection rules for Clebsch-Gordan coefficients of $\SU(\nmodes)$ are available:
For \ac{GT} patterns $M_1 \in \GT{\pi_1}, M_2 \in \GT{\pi_2}, M \in \GT{\lambda}$, $C_{M_1, M_2}^{M, r} = 0$ if
\begin{equation}
	\label{eq:selection_rules_CGC}
	w_M \neq w_{M_1} + w_{M_2} \, ,
\end{equation}
where $w_{(\cdot)}$ is the weight defined in \cref{eq:weight_element}.

\subsection{Filtered randomized benchmarking}
\label{sec:filtered-RB-background}

\subsubsection{Filter functions.}
In \cref{sec:protocol}, we introduced the filter function, \cref{eq:filter_function}, to isolate and analyze the exponential decays associated with each \ac{irrep} of the reference representation $\omega_n^\nmodes$.
In this section, for the sake of completeness, we briefly motivate it in the bosonic case, before delving into the main technical results of this work.
For a comprehensive discussion, we refer to Ref.~\cite{heinrichRandomizedBenchmarkingRandom2023}.

In general, for an arbitrary compact group $G$ represented by $\omega$, an input state $\rho$ and a \ac{POVM} $\{ \ketbra{\vec x}{\vec x} \}_{\vec x \in \Omega}$, one defines the filter function for an irrep $\lambda\subset\omega$ as
\begin{equation}
	\label{eq:filter_function_general_POVM}
	f_\lambda(\vec x,  g) = \sandwich{\vec x}{P_\lambda \circ S^+ \circ \omega(g)^\ad (\rho)}{\vec x} \, ,
\end{equation}
where $S^+$ is the Moore-Penrose pseudo-inverse of the \emph{frame operator} $S$ defined as
\begin{equation}
	\label{eq:frame_operator}
	S
	\coloneqq
	\int_\Omega d\vec x \int_G dg \, \Tr[\omega(g)^\ad(\ketbra{\vec x}{\vec x}) (\cdot)] \,
	\omega(g) (\ketbra{\vec x}{\vec x})
	=
	\int_G dg \, \omega(g)^\ad \mathcal M \omega(g) \, ,
\end{equation}
where $\mathcal M = \int_{\Omega} d\vec x \, \Tr[\ketbra{\vec x}{\vec x} (\cdot)] \, \ketbra{\vec x}{\vec x}$ is the (possibly infinite dimensional) measurement channel associated with the \ac{POVM} $\{ \ketbra{\vec x}{\vec x} \}_{\vec x \in \Omega}$.

This choice of filter function is such that, in the ideal case of a noise-free, perfect implementation of the gates, the filtered \ac{RB} signal is of the form $F_{\lambda}(l) = \Tr[\rho P_\lambda \circ S^+ \circ S(\rho)]$, where $S^+ S$ coincides with the projector onto the span of the \ac{POVM}.
In particular, in the case of an informationally complete \ac{POVM}, $F_\lambda(l) = \Tr[\rho P_\lambda(\rho)]$, i.e.~the filtered signal is the overlap of $\rho$ with the filtering \ac{irrep}.

Then, \cref{eq:filter_function} follows from the following observation:
As the reference representation $\omega=\omega_n^\nmodes \coloneqq \tau_n^\nmodes (\cdot) \tau_n^{\nmodes \ad}$ of $G=\U(\nmodes)$ preserves the number of particles, $S$ acts non-trivially on the $n$-th Fock sector only, i.e.
\begin{equation}
	S = 
	\begin{pmatrix}
		\vec 0 &&&&& \\
		& \ddots &&&& \\
		&& \vec 0 &&& \\
		&&& S^{(n)} && \\
		&&&& \vec 0 & \\
		&&&&& \ddots 
	\end{pmatrix} \, ,
\end{equation}
where $S^{(n)}$ is obtained via the restriction of $\mathcal M$ to the subspace of $n$ particles.
Moreover, since $\omega_n^\nmodes$ decomposes as $\bigoplus_{k=0}^n \lambda_k$ (see \cref{lem:decomposition_conjugate_action}), Schur's lemma implies \cite{heinrichRandomizedBenchmarkingRandom2023}:
\begin{equation}
	\label{eq:frameop_irrep_subspace}
	S^{(n)} = \bigoplus_{\lambda} S^{(n)}_\lambda \, , \quad S_\lambda^{(n)} = s_\lambda \1_\lambda \, ,
\end{equation}
where the direct sum is over all \acp{irrep} of $\omega_n^\nmodes$ and, in general \cite{heinrichRandomizedBenchmarkingRandom2023},
\begin{equation}
	\label{eq:frameop_coefficients}
	s_\lambda = d_\lambda^{-1} \Tr[ P_\lambda \mathcal M]
	= d_\lambda^{-1} \int_\Omega d\vec x\,  \Tr\left[\ketbra{\vec x}{\vec x} P_\lambda(\ketbra{\vec x}{\vec x}) \right] \, .
\end{equation}
Here, $d_\lambda \equiv \dim \H_{\lambda}$, with $\lambda \in\{\lambda_k\}_{k=0}^n$, and $P_\lambda$ is the corresponding projector onto its carrier space.
In the second step, we used the fact that the Bochner integral commutes with the trace since the latter is a continuous linear operator in the trace norm and the trace of $[\ketbra{\vec x}{\vec x} P_\lambda(\ketbra{\vec x}{\vec x})]$ is finite.

\subsubsection{Non-uniform sampling}
\label{sec:non-unifrom-sampling}

The filtered \ac{RB} framework is not restricted to Haar-random sampling and was, in fact, intentionally designed for non-uniform sampling, e.g.~from generators of the considered group \cite{heinrichRandomizedBenchmarkingRandom2023}.
As this work builds on the filtered \ac{RB} framework, it is straightforward to replace the Haar-random sampling in the passive \ac{RB} protocol, \cref{sec:protocol}, with the sampling from some other probability measure $\nu$ on $\U(\nmodes)$.
For our results to uphold, it is sufficient that $\nu$ defines a random walk on $\U(\nmodes)$ that converges to the Haar measure $\mu_H$, that is $\nu^{\ast \seqlength}\rightarrow \mu_H$ with $\seqlength\rightarrow\infty$.
In fact, we only need that $\nu$ is suitably gapped w.r.t.~the reference representation $\omega_n^m$, this is that the \emph{moment operator}
\begin{equation}
 \mathsf{M}(\nu) \coloneqq  \int_{\U(\nmodes)} d\nu(g)  \, \omega_n^m(g)^\ad (\cdot) \omega_n^m(g) \,,
\end{equation}
has a \emph{spectral gap} $\Delta>0$ such that 
\begin{equation}
 \snorm{\mathsf{M}(\nu) - \mathsf{M}(\mu_H)} \leq 1 - \Delta\,.
\end{equation}
This is for instance the case if $\nu$ has support on generators of $\U(\nmodes)$.

Using such a gapped probability measure $\nu$ instead of the Haar measure, typically has some advantages in practice.
For instance, it might be easier to sample and to take some practical limitations into account (perhaps not all passive transformations are equally simple to implement).
In particular, one can simply generate passive transformations using random beam splitters and single-mode rotations.
Note that in such a case, the sequence length $\seqlength$ changes its interpretation from the number of passive transformation to be applied to the number of \emph{elementary} transformations.
The filtered \ac{RB} framework \cite{heinrichRandomizedBenchmarkingRandom2023} then predicts that the sequence of elementary transformations has to be sufficiently long in order to ``appear random enough'' and this threshold $\seqlength_0$ scales with the inverse spectral gap $\Delta^{-1}$.
In practice, $\seqlength_0$ seems to be reasonable small such that we expect that already very short random sequences can be used for passive \ac{RB}.
In particular, this should outperform the quadratic depth needed to decompose Haar-random passive transformations in terms of elementary transformations.

\subsection{Further notations}

As the Clebsch-Gordan decomposition is naturally related with the direct sum decomposition of an Hilbert space of the form $\H_1 \otimes \H_2$, it will be convenient to introduce a vectorized notation for operators and super-operators on $\H_n^\nmodes$.

We consider the basis of linear operators for $\bounded{\H_n^\nmodes}$ given by $\Phi = \{\ketbra{\vec n}{\vec m}\}_{\vec n, \vec m \in \NN^\nmodes}$ with $\sum_{i=1}^\nmodes n_i = \sum_{i=1}^\nmodes m_i = n$.
Any linear operator $A \in \bounded{\H_n^\nmodes}$ can be vectorized to an element $\ket{A} \in \H_n^\nmodes \otimes \H_n^\nmodes$ w.r.t.~$\Phi$ as
\begin{equation}
	\ket{A} = \sum_{\vec n, \vec m} \sandwich{\vec n}{A}{\vec m} \ket{\vec n, \vec m} \, , \quad \ket{\vec n, \vec m} \equiv \ket{\vec n} \otimes \ket{\vec m} \, .
\end{equation}
Under vectorization, we have the induced mapping $\tau_n^\nmodes (\cdot) \tau_n^{\nmodes \ad} \mapsto \tau_n^\nmodes \otimes \bar \tau_n^\nmodes$, where $\bar \tau_n^\nmodes$ denotes the complex conjugate representation of $\tau_n^\nmodes$ (in the Fock basis $\phi$).
Moreover, as long as it is clear from the context, we will not distinguish between super-operators and their corresponding quantities acting on $\H_1 \otimes \H_2$.

Then, the filter function defined in \cref{eq:filter_function} becomes
\begin{equation}
	f_\lambda(\vec n, g) = \frac{1}{s_\lambda} \sandwich{\ninput, \ninput}{P_\lambda (\tau_n^\nmodes \otimes \bar \tau_n^\nmodes)(g)^\ad}{\vec n, \vec n} \, ,
\end{equation}
where $\rho = \ketbra{\ninput}{\ninput} \cong \ket{\ninput, \ninput}$ is the input state and $s_{\lambda} = \left(d_\lambda\right)^{-1} \sum_{\vec n \in \NN^\nmodes} \sandwich{\vec n, \vec n}{P_\lambda}{\vec n, \vec n}$.

\section{Technical details and proofs}
\label{sec:passive_RB_Fock}

Here, we provide proofs for the theorems introduced in \cref{sec:results}:
In \cref{sec:filter_function_PNR} we prove \cref{thm:general_filter_informal} and in \cref{sec:moments_PNR} we prove \cref{thm:variance_bound_informal} based on notation and technical tools introduced in \cref{sec:preliminaries}.
Throughout this section we assume that the number of modes is $\nmodes\geq 2$.

\subsection{Clebsch-Gordan decomposition of the reference representation}
\label{sec:ref_repr}

In this section, we study the \ac{irrep} decomposition of $\omega_n^\nmodes$.
From \cref{sec:preliminaries}, we have
\begin{equation}
	\omega_n^\nmodes = \tau_n^{\nmodes} (\argdot) \tau_n^{\nmodes \ad} \cong \tau_n^\nmodes \otimes \bar \tau_n^\nmodes \, ,
\end{equation}
where again complex conjugate representation is taken in the Fock basis.
We can restrict our focus to the \acp{irrep} of $\SU(\nmodes)$ (or, equivalently, its corresponding Lie algebra $\mathfrak{su}(\nmodes)$) as $\tau_n^\nmodes$ can be extended to \acp{irrep} of $\U(\nmodes)$ using nontrivial characters of the unit circle group (roughly speaking, resulting in a multiplication by a global phase which vanishes in the conjugate action of $\mathcal B(\H_n^\nmodes)$).
The decomposition of $\omega_n^\nmodes$ into \acp{irrep} can be computed using \emph{Littlewood-Richardson's rules}, a general tool to classify the decomposition of tensor product representations.
We refer to \cref{app:littlewood_richardson} for a brief overview on how they can be employed in the context of $\SU(\nmodes)$.
\begin{lemma}
	\label{lem:decomposition_conjugate_action}
	Let $\tau_n^\nmodes : \SU(\nmodes) \rightarrow \U(\H_n^\nmodes)$ be the irreducible representation of $\SU(\nmodes)$ on the space of $n$ bosons distributed over $\nmodes$ modes as in \cref{eq:young_diagram_completely_symmetric}.
	Define the Young diagram
	\begin{equation}
		\label{eq:lambda_diagram}
		\lambda_k \equiv
		\begin{tikzpicture}
			[scale=.5,baseline={([yshift=-.5ex]current bounding box.center)}]
			\matrix (m) [matrix of math nodes,
			nodes={draw, minimum size=4mm, anchor=center},
			column sep=-\pgflinewidth,
			row sep=-\pgflinewidth
			]
			{
				\ & |[draw=none]|\dots & \ & \ & |[draw=none]|\dots & \ \\
				|[draw=none]|\vdots & |[draw=none]|\ddots & |[draw=none]|\vdots \\
				\ & |[draw=none]|\dots & \ \\
			};
			\draw[BC] (m-1-1.north) -- node[above=2mm] {$k$} (m-1-3.north);
			\draw[BC] (m-1-4.north) -- node[above=2mm] {$k$} (m-1-6.north);
			\draw[BC] (m-3-1.west) -- node[left=2mm] {$\nmodes-1$} (m-1-1.west);
		\end{tikzpicture} \, ,
	\end{equation}
	where $\lambda_0$  and $\lambda_1$ denote the trivial \ac{irrep} and the adjoint \ac{irrep} of 
	$\SU(\nmodes)$, 
	respectively.
	Then, for any $n, \nmodes \in \NN \setminus \{0\}$,
	\begin{equation}
		\label{eq:decomposition_conjugate_action}
		\omega_n^\nmodes = \bigoplus_{k=0}^n \lambda_k \, ,
	\end{equation}
	where each $\lambda_k, \, k = 0, \dots, n$, appears exactly one time.
\end{lemma}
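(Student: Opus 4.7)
The plan is to invoke the Pieri specialization of the Littlewood-Richardson rule (\cref{app:littlewood_richardson}). Since $\tau_n^\nmodes$ corresponds to the single-row Young diagram $(n, 0, \ldots, 0)$ and its dual $\bar\tau_n^\nmodes$ corresponds to the rectangular diagram $(n^{\nmodes-1}, 0)$, cf.~\cref{eq:young_diagram_completely_symmetric,eq:young_diagram_dual_completely_symmetric}, the tensor product falls into the Pieri regime: tensoring with a single-row diagram of size $n$ produces the sum over all partitions obtained from the other factor by adding $n$ boxes so that the added boxes form a horizontal strip, each such partition appearing exactly once.

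Concretely, I would write a general summand as $\nu = (n + a_1,\, n + a_2,\, \ldots,\, n + a_{\nmodes-1},\, a_\nmodes)$ with $a_i \geq 0$ and $\sum_i a_i = n$. The horizontal-strip condition $\mu_i \geq \nu_{i+1}$ applied to $\mu = (n^{\nmodes-1}, 0)$ forces $a_2 = a_3 = \cdots = a_{\nmodes-1} = 0$ and $0 \leq a_\nmodes \leq n$, so that $a_1 = n - a_\nmodes$. Parametrizing by $j := a_\nmodes \in \{0, 1, \ldots, n\}$ produces the one-parameter family
\begin{equation*}
\nu_j = (2n - j,\, n,\, n,\, \ldots,\, n,\, j), \qquad j = 0, 1, \ldots, n.
\end{equation*}
To pass to $\SU(\nmodes)$, I would subtract the minimum entry $j$ from every component of $\nu_j$, obtaining $(2n - 2j,\, n - j,\, \ldots,\, n - j,\, 0)$. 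Substituting $k := n - j$ yields exactly $\lambda_k$ as defined in \cref{eq:lambda_diagram}, and as $j$ runs over $\{0, \ldots, n\}$, so does $k$. Since distinct $\nu_j$'s give distinct $\lambda_k$'s and each occurs once in Pieri's formula, multiplicities are all one.

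There is no genuine obstacle; the proof is pure bookkeeping once Pieri is applied, the only subtle point being the $\SU(\nmodes)$ normalization (subtraction of the minimum entry), which flips the natural index $j$ into the claimed index $k = n - j$. As a sanity check I would verify that $\lambda_0$ is the trivial irrep, $\lambda_1$ is the adjoint of $\SU(\nmodes)$, and that the dimensions sum correctly: $\sum_{k=0}^n d_{\lambda_k} = \binom{n + \nmodes - 1}{n}^2 = (\dim \H_n^\nmodes)^2$, using the formula for $d_{\lambda_k}$ announced in \cref{sec:guarantees}.
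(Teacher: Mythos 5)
Your argument is correct, and it reaches the decomposition by a genuinely more direct route than the paper. The paper's proof establishes the recursion $\omega_n^\nmodes = \lambda_n \oplus \omega_{n-1}^\nmodes$ by induction on $n$: it peels the single-row factor off one box at a time, tracking a family of intermediate diagrams $\tilde\lambda_r^{(s)}$ and repeatedly invoking the merging rule of \cref{app:littlewood_richardson}. You instead apply the Pieri specialization of Littlewood--Richardson in one shot: since one tensor factor is a single row of $n$ boxes, the summands are exactly the partitions $\nu \supseteq (n^{\nmodes-1},0)$ obtained by adding a horizontal strip of size $n$, and the interlacing condition $\nu_i \geq \mu_i \geq \nu_{i+1}$ against the rectangle $\mu=(n^{\nmodes-1},0)$ pins down the one-parameter family $\nu_j = (2n-j,\,n,\dots,n,\,j)$, each with multiplicity one; subtracting the last entry and substituting $k=n-j$ gives \cref{eq:lambda_diagram}. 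Your bookkeeping is right (in particular the constraint $\nu_i = n$ for $2\le i\le \nmodes-1$ follows from squeezing $\nu_i$ between $\mu_i=n$ and $\mu_{i-1}=n$), and multiplicity-freeness is immediate from Pieri rather than requiring the merging argument. What the paper's recursive formulation buys is that the nesting $\omega_{n-1}^\nmodes \subset \omega_n^\nmodes$ is exhibited explicitly, a fact it alludes to elsewhere (each $\lambda_k$ appears in all $\omega_n^\nmodes$ with $n\geq k$); your closed-form answer of course implies this too, and your dimension-count sanity check $\sum_k d_{\lambda_k} = (\dim\H_n^\nmodes)^2$ is a worthwhile addition that the paper does not perform.
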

\begin{proof}
	We will prove the following equivalent fact by induction:
	\begin{equation}
		\label{eq:conjugate_action_decomposition_recursive}
		\omega_n^\nmodes = \lambda_n \oplus \omega_{n-1}^\nmodes \, , \quad \forall n \in \NN \setminus \{0\} \, .
	\end{equation}
	First, notice that 
	\begin{equation}
		\omega_1^\nmodes = 
		\begin{tikzpicture}
			[scale=.5,baseline={([yshift=-.5ex]current bounding box.center)}]
			\matrix (m) [matrix of math nodes,
			nodes={draw, minimum size=4mm, anchor=center},
			column sep=-\pgflinewidth,
			row sep=-\pgflinewidth
			]
			{
				\ & \ \\
				|[draw=none]|\vdots \\
				\ \\
			};
			\draw[BC] (m-3-1.west) -- node[left=2mm] {$\nmodes-1$} (m-1-1.west);
		\end{tikzpicture}
		\oplus 
		\mathbf{1}
		\equiv
		\lambda_1 \oplus \omega_0^\nmodes \, ,
	\end{equation}
	as $\omega_0^\nmodes = \mathbf{1}$ trivially.
	
	The conjugate representation is associated with the tensor product of Young diagrams
	\begin{equation}
		\begin{tikzpicture}
			[scale=.5,baseline={([yshift=-.5ex]current bounding box.center)}]
			\matrix (m) [matrix of math nodes,
			nodes={draw, minimum size=4mm, anchor=center},
			column sep=-\pgflinewidth,
			row sep=-\pgflinewidth
			]
			{
				\ & |[draw=none]|\dots & \ \\
				|[draw=none]|\vdots & |[draw=none]|\ddots & |[draw=none]|\vdots \\
				\ & |[draw=none]|\dots & \ \\
			};
			\draw[BC] (m-1-1.north) -- node[above=2mm] {$n$} (m-1-3.north);
			\draw[BC] (m-3-1.west) -- node[left=2mm] {$\nmodes-1$} (m-1-1.west);
		\end{tikzpicture}
		\otimes
		\begin{tikzpicture}
			[scale=.5,baseline={([yshift=1ex]current bounding box.center)}]
			\matrix (m) [matrix of math nodes,
			nodes={draw, minimum size=3mm, anchor=center},
			column sep=-\pgflinewidth,
			row sep=-\pgflinewidth
			]
			{
				\ & |[draw=none]|\dots & \ \\
			};
			\draw[BC] (m-1-3.south) -- node[below=2mm] {$n$} (m-1-1.south);
		\end{tikzpicture}
	\end{equation}
	(swapping tensor factors does not influence the result).
	By Littlewood-Richardson's rules, we first have
	\begin{equation}
		\begin{aligned}
			\begin{tikzpicture}
				[scale=.5,baseline={([yshift=-.5ex]current bounding box.center)}]
				\matrix (m) [matrix of math nodes,
				nodes={draw, minimum size=4mm, anchor=center},
				column sep=-\pgflinewidth,
				row sep=-\pgflinewidth
				]
				{
					\ & |[draw=none]|\dots & \ \\
					|[draw=none]|\vdots & |[draw=none]|\ddots & |[draw=none]|\vdots \\
					\ & |[draw=none]|\dots & \ \\
				};
				\draw[BC] (m-1-1.north) -- node[above=2mm] {$n$} (m-1-3.north);
				\draw[BC] (m-3-1.west) -- node[left=2mm] {$\nmodes-1$} (m-1-1.west);
			\end{tikzpicture}
			\otimes
			\begin{tikzpicture}
				[scale=.5,baseline={([yshift=-.5ex]current bounding box.center)}]
				\matrix (m) [matrix of math nodes,
				nodes={draw, minimum size=3mm, anchor=center},
				column sep=-\pgflinewidth,
				row sep=-\pgflinewidth
				]
				{
					a & |[draw=none]|\dots & a \\
				};
				\draw[BC] (m-1-1.north) -- node[above=2mm] {$n$} (m-1-3.north);
			\end{tikzpicture}
			& =
			\begin{tikzpicture}
				[scale=.5,baseline={([yshift=-.5ex]current bounding box.center)}]
				\matrix (m) [matrix of math nodes,
				nodes={draw, minimum size=4mm, anchor=center},
				column sep=-\pgflinewidth,
				row sep=-\pgflinewidth
				]
				{
					\ & |[draw=none]|\dots & \ & a \\
					|[draw=none]|\vdots & |[draw=none]|\ddots & |[draw=none]|\vdots \\
					\ & |[draw=none]|\dots & \ \\
				};
				\draw[BC] (m-1-1.north) -- node[above=2mm] {$n$} (m-1-3.north);
				\draw[BC] (m-3-1.west) -- node[left=2mm] {$\nmodes-1$} (m-1-1.west);
			\end{tikzpicture}
			\otimes
			\begin{tikzpicture}
				[scale=.5,baseline={([yshift=-.5ex]current bounding box.center)}]
				\matrix (m) [matrix of math nodes,
				nodes={draw, minimum size=4mm, anchor=center},
				column sep=-\pgflinewidth,
				row sep=-\pgflinewidth
				]
				{
					a & |[draw=none]|\dots & a \\
				};
				\draw[BC] (m-1-1.north) -- node[above=2mm] {$n-1$} (m-1-3.north);
			\end{tikzpicture} \\
			& \oplus
			\begin{tikzpicture}
				[scale=.5,baseline={([yshift=-.5ex]current bounding box.center)}]
				\matrix (m) [matrix of math nodes,
				nodes={draw, minimum size=4mm, anchor=center},
				column sep=-\pgflinewidth,
				row sep=-\pgflinewidth
				]
				{
					\ & |[draw=none]|\dots & \ \\
					|[draw=none]|\vdots & |[draw=none]|\ddots & |[draw=none]|\vdots \\
					\ & |[draw=none]|\dots & \ \\
				};
				\draw[BC] (m-1-1.north) -- node[above=2mm] {$n-1$} (m-1-3.north);
				\draw[BC] (m-3-1.west) -- node[left=2mm] {$\nmodes-1$} (m-1-1.west);
			\end{tikzpicture}
			\otimes
			\begin{tikzpicture}
				[scale=.5,baseline={([yshift=-.5ex]current bounding box.center)}]
				\matrix (m) [matrix of math nodes,
				nodes={draw, minimum size=4mm, anchor=center},
				column sep=-\pgflinewidth,
				row sep=-\pgflinewidth
				]
				{
					a & |[draw=none]|\dots & a \\
				};
				\draw[BC] (m-1-1.north) -- node[above=2mm] {$n-1$} (m-1-3.north);
			\end{tikzpicture}
			\, .
		\end{aligned}
	\end{equation}
	Notice that the second term in the r.h.s. is by definition $\omega_{n-1}^\nmodes$.
	Hence, we shall only prove that
	\begin{equation}
		\label{eq:decomposition_lemma_inductive_step}
		\begin{tikzpicture}
			[scale=.5,baseline={([yshift=-.5ex]current bounding box.center)}]
			\matrix (m) [matrix of math nodes,
			nodes={draw, minimum size=4mm, anchor=center},
			column sep=-\pgflinewidth,
			row sep=-\pgflinewidth
			]
			{
				\ & |[draw=none]|\dots & \ & a \\
				|[draw=none]|\vdots & |[draw=none]|\ddots & |[draw=none]|\vdots \\
				\ & |[draw=none]|\dots & \ \\
			};
			\draw[BC] (m-1-1.north) -- node[above=2mm] {$n$} (m-1-3.north);
			\draw[BC] (m-3-1.west) -- node[left=2mm] {$\nmodes-1$} (m-1-1.west);
		\end{tikzpicture}
		\otimes
		\begin{tikzpicture}
			[scale=.5,baseline={([yshift=-.5ex]current bounding box.center)}]
			\matrix (m) [matrix of math nodes,
			nodes={draw, minimum size=4mm, anchor=center},
			column sep=-\pgflinewidth,
			row sep=-\pgflinewidth
			]
			{
				a & |[draw=none]|\dots & a \\
			};
			\draw[BC] (m-1-1.north) -- node[above=2mm] {$n-1$} (m-1-3.north);
		\end{tikzpicture}
		=
		\oplus_{k=0}^{n} \lambda_k 
		\, .
	\end{equation}
	For this purpose, let us consider the factor
	\begin{equation}
		\tilde \lambda_r^{(s)}
		\coloneqq
		\begin{tikzpicture}
			[scale=.5,baseline={([yshift=-.5ex]current bounding box.center)}]
			\matrix (m) [matrix of math nodes,
			nodes={draw, minimum size=4mm, anchor=center},
			column sep=-\pgflinewidth,
			row sep=-\pgflinewidth
			]
			{
				\ & |[draw=none]|\dots & \ & a & |[draw=none]|\dots & a \\
				|[draw=none]|\vdots & |[draw=none]|\ddots & |[draw=none]|\vdots \\
				\ & |[draw=none]|\dots & \ \\
			};
			\draw[BC] (m-1-1.north) -- node[above=2mm] {$r$} (m-1-3.north);
			\draw[BC] (m-1-4.north) -- node[above=2mm] {$s$} (m-1-6.north);
			\draw[BC] (m-3-1.west) -- node[left=2mm] {$\nmodes-1$} (m-1-1.west);
		\end{tikzpicture}
		\, .
	\end{equation}
	Clearly, $\tilde \lambda_r^{(r)} = \lambda_r$ and $\tilde \lambda_0^{(0)} = \mathbf{1}$.
	Notice that
	\begin{equation}
		\begin{aligned}
			\tilde \lambda_r^{(s)} \otimes \tau_l^{\nmodes}
			&=
			\left(
			\begin{tikzpicture}
				[scale=.5,baseline={([yshift=-.5ex]current bounding box.center)}]
				\matrix (m) [matrix of math nodes,
				nodes={draw, minimum size=4mm, anchor=center},
				column sep=-\pgflinewidth,
				row sep=-\pgflinewidth
				]
				{
					\ & |[draw=none]|\dots & \ & a & |[draw=none]|\dots & a\\
					|[draw=none]|\vdots & |[draw=none]|\ddots & |[draw=none]|\vdots \\
					\ & |[draw=none]|\dots & \ \\
				};
				\draw[BC] (m-1-1.north) -- node[above=2mm] {$r$} (m-1-3.north);
				\draw[BC] (m-1-4.north) -- node[above=2mm] {$s+1$} (m-1-6.north);
				\draw[BC] (m-3-1.west) -- node[left=2mm] {$\nmodes-1$} (m-1-1.west);
			\end{tikzpicture}
			\oplus
			\begin{tikzpicture}
				[scale=.5,baseline={([yshift=-.5ex]current bounding box.center)}]
				\matrix (m) [matrix of math nodes,
				nodes={draw, minimum size=4mm, anchor=center},
				column sep=-\pgflinewidth,
				row sep=-\pgflinewidth
				]
				{
					\ & |[draw=none]|\dots & \ & a & |[draw=none]|\dots & a\\
					|[draw=none]|\vdots & |[draw=none]|\ddots & |[draw=none]|\vdots \\
					\ & |[draw=none]|\dots & \ \\
				};
				\draw[BC] (m-1-1.north) -- node[above=2mm] {$r-1$} (m-1-3.north);
				\draw[BC] (m-1-4.north) -- node[above=2mm] {$s$} (m-1-6.north);
			\end{tikzpicture}
			\right)
			\otimes
			\begin{tikzpicture}
				[scale=.5,baseline={([yshift=-.5ex]current bounding box.center)}]
				\matrix (m) [matrix of math nodes,
				nodes={draw, minimum size=4mm, anchor=center},
				column sep=-\pgflinewidth,
				row sep=-\pgflinewidth
				]
				{
					a & |[draw=none]|\dots & a \\
				};
				\draw[BC] (m-1-1.north) -- node[above=2mm] {$l-1$} (m-1-3.north);
			\end{tikzpicture}
			\\
			&=
			\left( \tilde \lambda_r^{(s+1)} \oplus \tilde \lambda_{r-1}^{(s)} \right) \otimes \tau_{l-1}^{\nmodes}
			\, .
		\end{aligned}
	\end{equation}
	With this notation, expanding the l.h.s. of \cref{eq:decomposition_lemma_inductive_step} we get
	\begin{equation}
		\begin{aligned}
			\tilde \lambda_n^{(1)} \otimes \tau_{n-1}^\nmodes
			&=
			\left(
			\begin{tikzpicture}
				[scale=.5,baseline={([yshift=-.5ex]current bounding box.center)}]
				\matrix (m) [matrix of math nodes,
				nodes={draw, minimum size=4mm, anchor=center},
				column sep=-\pgflinewidth,
				row sep=-\pgflinewidth
				]
				{
					\ & |[draw=none]|\dots & \ & a & a\\
					|[draw=none]|\vdots & |[draw=none]|\ddots & |[draw=none]|\vdots \\
					\ & |[draw=none]|\dots & \ \\
				};
				\draw[BC] (m-1-1.north) -- node[above=2mm] {$n$} (m-1-3.north);
				\draw[BC] (m-3-1.west) -- node[left=2mm] {$\nmodes-1$} (m-1-1.west);
			\end{tikzpicture}
			\oplus
			\begin{tikzpicture}
				[scale=.5,baseline={([yshift=-.5ex]current bounding box.center)}]
				\matrix (m) [matrix of math nodes,
				nodes={draw, minimum size=4mm, anchor=center},
				column sep=-\pgflinewidth,
				row sep=-\pgflinewidth
				]
				{
					\ & |[draw=none]|\dots & \ & a \\
					|[draw=none]|\vdots & |[draw=none]|\ddots & |[draw=none]|\vdots \\
					\ & |[draw=none]|\dots & \ \\
				};
				\draw[BC] (m-1-1.north) -- node[above=2mm] {$n-1$} (m-1-3.north);
			\end{tikzpicture}
			\right)
			\otimes
			\begin{tikzpicture}
				[scale=.5,baseline={([yshift=-.5ex]current bounding box.center)}]
				\matrix (m) [matrix of math nodes,
				nodes={draw, minimum size=4mm, anchor=center},
				column sep=-\pgflinewidth,
				row sep=-\pgflinewidth
				]
				{
					a & |[draw=none]|\dots & a \\
				};
				\draw[BC] (m-1-1.north) -- node[above=2mm] {$n-2$} (m-1-3.north);
			\end{tikzpicture}
			\\
			&=
			\left( \tilde \lambda_n^{(2)} \oplus \tilde \lambda_{n-1}^{(1)} \right) \otimes \tau_{n-2}^\nmodes \\
			&=
			\left( \tilde \lambda_n^{(3)} \oplus \tilde \lambda_{n-1}^{(2)} \oplus \tilde \lambda_{n-1}^{(2)} \oplus \tilde \lambda_{n-2}^{(1)} \right) \otimes \tau_{n-3}^\nmodes \\
			&=
			\left( \tilde \lambda_n^{(3)} \oplus \tilde \lambda_{n-1}^{(2)} \oplus \tilde \lambda_{n-2}^{(1)} \right) \otimes \tau_{n-3}^\nmodes \\
			& \ \, \vdots
			\\
			&=
			\left( \tilde \lambda_n^{(i)} \oplus \tilde \lambda_{n-1}^{(i-1)} \oplus \dots \oplus \tilde \lambda_{n-i+1}^{(1)} \right) \otimes \tau_{n-i}^\nmodes \\
			& \ \, \vdots
			\\
			&=
			\bigoplus_{i=0}^n \tilde \lambda_{n-i}^{(n-i)} \\
			&=
			\bigoplus_{k=0}^n \lambda_k
			\, .
		\end{aligned}
	\end{equation}
	In the latter, we used the merging rule for Young diagrams, see \cref{app:littlewood_richardson}.
	Finally, we have
	\begin{equation}
		\begin{aligned}
			\omega_n^\nmodes 
			=
			\sum_{k=0}^n \lambda_k \oplus \omega_{n-1}^\nmodes
			=
			\sum_{k=0}^n \lambda_k \oplus \sum_{l=0}^{n-1} \lambda_l
			=
			\lambda_n \oplus \omega_{n-1}^\nmodes
			\, ,
		\end{aligned}
	\end{equation}
	by the merging rule again.
\end{proof}
For instance, we have the following explicit decomposition for $n = \nmodes = 3$:
\begin{equation}
	\omega_3^3 =
	\mathbf{1} \oplus \ydiagram{2, 1} \oplus \ydiagram{4, 2} \oplus \ydiagram{6, 3} \,,
\end{equation}
since:
\begin{align}
	\ydiagram{3,3} \otimes \ytableaushort{a a a}*{3}
	&= 
	\left(
	\ytableaushort{\none \none \none a, \none \none \none}*{4,3}
	\oplus
	\ydiagram{2,2}
	\right) \otimes \ytableaushort{a a}*{2} \\
	& = 
	\left(
	\ytableaushort{\none \none \none a a, \none \none \none}*{5,3}
	\oplus
	\ytableaushort{\none \none a, \none \none}*{3,2}
	\oplus
	\ydiagram{1,1}
	\right) \otimes \ytableaushort{a}*{1} \\
	& = 
	\ytableaushort{\none \none \none a a a, \none \none \none}*{6,3}
	\oplus
	\ytableaushort{\none \none a a, \none \none}*{4,2}
	\oplus 
	\ytableaushort{\none a, \none}*{2,1}
	\oplus
	\mathbf{1} \,.
\end{align}
The dimension of $\lambda_k$ admits a nice closed-form expression in terms of the dimension of the number subspace,
\begin{equation}
	\label{eq:dim_number_space}
	\dim \H_k^\nmodes = \binom{k+\nmodes-1}{k}\,,
\end{equation}
as follows:
\begin{proposition}
	\label{prop:dimension_lambda_k}
	For any $k \in \NN$, set $d_{\lambda_k} \equiv \dim \lambda_k$.
	Then, the following holds:
	\begin{equation}
		\label{eq:dimension_lambda_k}
		d_{\lambda_k} = \left( 1 - \frac{k^2}{(k+\nmodes-1)^2} \right) \left( \dim \H_k^\nmodes \right)^2
		= \frac{2k+\nmodes-1}{\nmodes-1} \binom{k+\nmodes-2}{k}^2\,.
	\end{equation}
\end{proposition}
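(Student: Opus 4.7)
The plan is to apply the Weyl dimension formula stated as \cref{eq:dim_irrep_SU(N)} to the Young diagram $\lambda_k$ explicitly, and then to manipulate the resulting product into the two advertised forms. First, I would read off from the diagram \eqref{eq:lambda_diagram} that $\lambda_k$ corresponds to the partition
\begin{equation*}
  (M_{1,m}, M_{2,m},\dots,M_{m,m}) = (2k,\,\underbrace{k,\dots,k}_{m-2},\,0),
\end{equation*}
so that the differences $M_{i,m}-M_{j,m}$ appearing in \cref{eq:dim_irrep_SU(N)} take only a few distinct values. Concretely, setting $\Delta_{ij}\coloneqq M_{i,m}-M_{j,m}$, one has $\Delta_{1,j}=k$ for $2\leq j\leq m-1$, $\Delta_{1,m}=2k$, $\Delta_{i,j}=0$ for $2\leq i<j\leq m-1$, and $\Delta_{i,m}=k$ for $2\leq i\leq m-1$.

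With these values in hand, I would split the product in \cref{eq:dim_irrep_SU(N)} (understood over pairs $i<j$, since the $i=j$ factors are trivially $1$) into the four regions above. The middle-middle region contributes $1$. The pair $(i,j)=(1,m)$ contributes the single factor $\tfrac{2k+m-1}{m-1}$, which is the source of the non-binomial prefactor in the final answer. The remaining two regions contribute telescoping products:
\begin{equation*}
  \prod_{j=2}^{m-1}\frac{k+j-1}{j-1}
  \;=\;\prod_{j'=1}^{m-2}\frac{k+j'}{j'}
  \;=\;\binom{k+m-2}{k},
\end{equation*}
and analogously (via the substitution $i'=m-i$) the product over pairs $(i,m)$ with $2\leq i\leq m-1$ evaluates to the same binomial. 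Multiplying everything yields $d_{\lambda_k}=\tfrac{2k+m-1}{m-1}\binom{k+m-2}{k}^2$.

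To establish the alternative form, I would use the identity $\binom{k+m-2}{k}=\tfrac{m-1}{k+m-1}\binom{k+m-1}{k}=\tfrac{m-1}{k+m-1}\dim\H_k^m$, which recasts the prefactor as
\begin{equation*}
  \frac{(2k+m-1)(m-1)}{(k+m-1)^2}
  \;=\;\frac{(k+m-1)^2-k^2}{(k+m-1)^2}
  \;=\;1-\frac{k^2}{(k+m-1)^2},
\end{equation*}
by recognizing $2k+m-1=(k+m-1)+k$ and $m-1=(k+m-1)-k$. I do not anticipate any real obstacle: the argument is a direct application of Weyl's formula, and the only thing to watch is the bookkeeping in distinguishing the role of the last row ($\lambda_m=0$) from the $m-2$ identical middle rows, which is precisely what produces the asymmetric prefactor $\tfrac{2k+m-1}{m-1}$ rather than a third copy of the binomial.
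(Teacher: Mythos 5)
Your proposal is correct and follows essentially the same route as the paper's proof: a direct application of the Weyl dimension formula to the partition $(2k,k,\dots,k,0)$, splitting the product over index pairs into the same four regions and recognizing the two telescoping products as $\binom{k+\nmodes-2}{k}$, followed by the same identity $\binom{k+\nmodes-2}{k}=\tfrac{\nmodes-1}{k+\nmodes-1}\dim\H_k^\nmodes$ to obtain the alternative form. The only cosmetic difference is that the paper organizes the intermediate algebra via Pochhammer symbols, and you correctly note that the product in \cref{eq:dim_irrep_SU(N)} must be read over strict pairs $i<j$.
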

We prove this fact in \cref{app:proof_dim_irrep}.

We remark that for the one-to-one correspondence of \ac{RB} decays with irreps, it is important that the reference representation $\omega_n^\nmodes$ decomposes into \emph{multiplicity-free} \acp{irrep} \cite{helsenGeneralFrameworkRandomized2020,heinrichRandomizedBenchmarkingRandom2023}.
Moreover, these irreps should be of \emph{real type}.
This is the case for the \acp{irrep} $\lambda_{k}, k = 1, \dots, n$, because each term in the decomposition is self-dual and multiplicity free (this can also be checked using e.g. \cite[Proposition 26.24]{fultonRepresentationTheory2004}, where all the complex, real, and quaternionic irreps of $\SU(\nmodes)$ are classified).
In general, both conditions are not necessarily fulfilled, which can lead to more than one decay per irrep that may also be complex (i.e.~oscillating).

\subsection{Filter function for passive RB with PNR measurements}
\label{sec:filter_function_PNR}

As the \ac{irrep} decomposition of $\omega_n^\nmodes$ can be computed for any $n$ and $\nmodes$ (cf. \cref{lem:decomposition_conjugate_action}), we can evaluate explicit expressions for the filter function.
This will provide the proof of \cref{thm:general_filter_informal}.

By construction, $\omega_n^\nmodes = \tau_n^\nmodes (\cdot) \tau_n^{\nmodes \ad} \cong \tau_n^\nmodes \otimes \bar \tau_n^\nmodes$ acts on elements $\ket{\vec n, \vec n}$ (from here on referred as the uncoupled basis). 
However, as pointed out in the \cref{sec:GTpattern}, the second entry shall be suitably interpreted as a basis element of $\bar \tau_n^\nmodes$, which requires the specification of the relative phases between states of $\tau_n^\nmodes$ and its dual.
In particular, we have
\begin{equation}
	\ket{\vec n} = \ket{N} = (-1)^{\varphi(N)} \ket{\bar N} \, ,
\end{equation}
where $\ket{N}$ is the \ac{GT} pattern defined in \cref{eq:GTpattern_Fock_state}, $\ket{\bar N}$ is the dual state, cf. \cref{eq:GTpattern_dual}, and the relative phase is given in \cref{eq:phase_dual_GTpattern}.
For the rest of this work, we fix the following notation:
For each Fock state $\ket{\vec m}$, the corresponding \ac{GT} pattern will be denoted with the corresponding Latin capital letter, namely the identification $\ket{\vec m} \equiv \ket{M}$ holds.

From previous considerations, it follows
\begin{equation}
	\label{eq:cg_decomposition_vector}
	\rho \cong \ket{\ninput, \ninput} = \ket{\ninputGT, \ninputGT} = (-1)^{\varphi(\ninputGT)} \ket{\ninputGT, \ninputGTdual}
	= 
	(-1)^{\varphi(\ninputGT)} \sum_{k=0}^n \sum_{M \in \GT{\lambda_k}} C_{\ninputGT, \ninputGTdual}^{M}  \, \ket{M} \, ,
\end{equation}
with $\{\ket{M}\}_{M \in \GT{\lambda_k}}$ being an orthonormal basis (referred as the coupled basis from here on) for the carrier space of $\lambda_k$.
Notice that in \cref{eq:cg_decomposition_vector} we do not need to specify the multiplicity index of states and Clebsch-Gordan coefficients since $\lambda_k$ is multiplicity free for each $k = 0, \dots, n$.

Then, for a fixed \ac{irrep} $\lambda \in \{\lambda_k\}_{k=0}^n$, we have $P_\lambda = \sum_{M \in \GT{\lambda}} \ketbra{M}{M}$ and the following relation holds true:
\begin{equation}
	\label{eq:projection_fock_state}
	P_\lambda \ket{\ninputGT, \bar \ninputGT}
	=
	\sum_{M \in \GT{\lambda}} C_{\ninputGT, \ninputGTdual}^{M} \ket{M} \, ,
\end{equation}
We remark that -- by the selection rules of Clebsch-Gordan coefficients \cref{eq:selection_rules_CGC} -- the sum is restricted to all the basis vectors such that the associated weight corresponds to the sum of the weights of the states $\ket{\ninputGT}$ and $\ket{\ninputGTdual}$, Cf. \cref{eq:selection_rules_CGC}.
Specifically, by \cref{eq:GTpattern_dual}, $C_{\ninputGT, \ninputGTdual}^{M}$ is possibly nonzero provided that
\begin{align}
	\label{eq:weight_sum_GT+GTdual}
	w_j^{(M)} 
	&=
	w_j^{(\ninputGT)} + w_j^{(\ninputGTdual)} \nonumber \\
	&=
	2 \sum_{i=1}^{j} N_{i,j}^{(0)} - \left( \sum_{i=1}^{j-1} N_{i,j-1}^{(0)} + \sum_{i=1}^{j+1} N_{i,j+1}^{(0)} \right)
	+
	2 \sum_{i=1}^{j} \bar N_{i,j}^{(0)} - \left( \sum_{i=1}^{j-1} \bar N_{i,j-1}^{(0)} + \sum_{i=1}^{j+1} \bar N_{i,j+1}^{(0)} \right) \nonumber \\
	&=
	2 \sum_{i=1}^{j} N_{1,\nmodes}^{(0)} - \left( \sum_{i=1}^{j-1} N_{1,\nmodes}^{(0)} + \sum_{i=1}^{j+1} N_{1,\nmodes}^{(0)} \right) \nonumber \\
	&=
	0
\end{align}
for each $j = 1, \dots, \nmodes$, i.e.~$w_{\ninputGT} + w_{\ninputGTdual} = \vec 0$.

Moreover, for the $\vec 0$ weight spaces, the inner multiplicity $\gamma_{\lambda_k}(\vec 0)$ of $\vec 0$ in $\lambda_k$ can be easily computed, as we prove the following fact:
\begin{lemma}
	\label{lem:weight_space_zero}
	Let $\lambda_k$ be an \ac{irrep} of $\SU(\nmodes)$ as in \cref{eq:lambda_diagram} for any $k \in \NN$.
	Then,
	\begin{equation}
		\label{eq:weight_space_zero}
		\gamma_{\lambda_k}(\vec 0) = \binom{k + \nmodes - 2}{k} \, .
	\end{equation}
\end{lemma}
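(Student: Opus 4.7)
The plan is to compute $\gamma_{\lambda_k}(\vec 0)$ as a Kostka number and then evaluate it combinatorially. First, I would note that by the relation $w_j^{(M)} = w_{\U(\nmodes), j}^M - w_{\U(\nmodes), j+1}^M$ between $\SU(\nmodes)$- and $\U(\nmodes)$-weights (cf.~\cref{sec:GTpattern}), the zero $\SU(\nmodes)$-weight corresponds to a constant $\U(\nmodes)$-weight $(c, \dots, c)$. Since $|\lambda_k| = 2k + (\nmodes-2)k = \nmodes k$, necessarily $c = k$, so $\gamma_{\lambda_k}(\vec 0)$ equals the Kostka number $K_{\lambda_k, (k, \dots, k)}$ counting semi-standard Young tableaux of shape $\lambda_k = (2k, k, \dots, k, 0)$ in which each symbol $1, \dots, \nmodes$ appears exactly $k$ times.

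The heart of the argument is a column-wise parameterization of such tableaux. The shape $\lambda_k$ has $k$ ``tall'' columns of height $\nmodes-1$ (columns $1$ through $k$) and $k$ ``short'' columns of height $1$ (the overhang in the top row). Any strictly increasing filling of a tall column by elements of $\{1, \dots, \nmodes\}$ is determined by the unique omitted value $a_j \in \{1, \dots, \nmodes\}$, giving the row-$i$ entry of column $j \le k$ as $i + [i \ge a_j]$. A short case analysis of the weak-increase condition along each row then yields the constraint $a_1 \ge a_2 \ge \dots \ge a_k$: two adjacent tall columns with $a_j < a_{j+1}$ would violate monotonicity for any row $i$ with $a_j \le i < a_{j+1}$.

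Next, the content constraint fixes the multiset of overhang entries $b_1, \dots, b_k$ in the top row: letting $n_i = |\{j \le k : a_j = i\}|$, value $i$ appears $k - n_i$ times in the tall columns and must therefore appear exactly $n_i$ times among the $b_l$. Weak-increase along the top row then forces the $b_l$ to list these values in non-decreasing order, so they are uniquely determined by $(a_j)$. The boundary condition that $b_1$ is at least as large as the topmost entry of column $k$ translates to $a_k \ge 2$, since if $a_k = 1$ the top of column $k$ would be $2$ while $b_1 = 1$. Conversely, any weakly decreasing sequence in $\{2, \dots, \nmodes\}^k$ yields a valid SSYT via this canonical $b$-filling, establishing a bijection.

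Finally, weakly decreasing sequences $a_1 \ge \dots \ge a_k$ with entries in an $(\nmodes - 1)$-element set are in bijection with size-$k$ multisubsets of that set, and their number is $\binom{k + \nmodes - 2}{k}$, yielding the claim. I expect the main subtlety to lie in the boundary interaction between the tall columns and the overhang, i.e.~cleanly justifying the monotonicity $a_1 \ge \dots \ge a_k$ together with the sharp cutoff $a_k \ge 2$, and verifying that all row and column constraints reduce to these conditions on $(a_j)$ alone. The remainder is a standard column-tableau count.
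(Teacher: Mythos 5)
Your proof is correct, and it takes a genuinely different route from the paper's. Both arguments begin with the same reduction: a zero $\SU(\nmodes)$-weight vector corresponds to a semi-standard tableau of shape $\lambda_k=(2k,k,\dots,k,0)$ in which every symbol $1,\dots,\nmodes$ occurs exactly $k$ times. From there the paper argues by cases on the placement of the $\nmodes$-labelled boxes: it first counts the tableaux whose last row consists entirely of $\nmodes$'s (obtaining $\binom{k+\nmodes-3}{k}$), then sums over the number of $\nmodes$'s displaced into the first row, and combines the two contributions via Fermat's and Pascal's identities. Your argument instead exhibits an explicit bijection: each of the $k$ tall columns of height $\nmodes-1$ is encoded by its omitted value $a_j$, the row conditions force $a_1\ge\dots\ge a_k$, the content condition then determines the overhang of the first row uniquely, and the single boundary inequality between column $k$ and the overhang is exactly $a_k\ge 2$ — so the tableaux are in bijection with size-$k$ multisets of $\{2,\dots,\nmodes\}$. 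I checked the three points where this could go wrong (the necessity of $a_j\ge a_{j+1}$, the uniqueness of the overhang filling, and the sharpness of the cutoff $a_k\ge 2$) and all are handled correctly; the count also matches the paper's explicit $\SU(2)$ and $\SU(3)$ cases. What your approach buys is a uniform, recursion-free argument valid for all $\nmodes\ge 2$ that needs no binomial identities and makes the multiset structure behind the answer transparent; the paper's case decomposition is shorter to state but leans on sub-counts that are themselves instances of the formula being proved.
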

We prove this fact in \cref{app:weight_space_zero}.
This provides the number of (possibly) non-zero terms in \cref{eq:projection_fock_state} and is relevant for the evaluation of the eigenvalues of the frame operator of passive \ac{RB} with \ac{PNR} measurements, for which we obtain the following simple expression:
\begin{theorem}
	\label{thm:frame_operator_pnr}
	Let $\lambda_k$, $k=0, \dots, n$ be an \ac{irrep} in $\omega_n^{\nmodes}$.
	For a \ac{PNR} measurement setting, the eigenvalues of the frame operator of the passive \ac{RB} protocol are given by
	\begin{equation}
		\label{eq:frameop_pnr}
		s_{\lambda_k} = \frac{\nmodes-1}{2k+\nmodes-1} \gamma_{\lambda_k}(\vec 0)^{-1} \, ,
	\end{equation}
	where $\gamma_{\lambda_k}(\vec 0)$ is as in \cref{eq:weight_space_zero}.
\end{theorem}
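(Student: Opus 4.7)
The plan is to evaluate $s_{\lambda_k}$ directly from its definition in \cref{eq:frameop_coefficients} by expanding the measurement operators in the coupled basis and exploiting the orthogonality of Clebsch--Gordan coefficients. First I would substitute the PNR measurement channel, so that
\begin{equation*}
	s_\lambda \;=\; d_\lambda^{-1}\sum_{\vec n\in \H_n^\nmodes} \sandwich{\vec n,\vec n}{P_\lambda}{\vec n,\vec n}.
\end{equation*}
Next I would use the identification $\ket{\vec n,\vec n}=(-1)^{\varphi(N)}\ket{N,\bar N}$ from \cref{eq:cg_decomposition_vector} together with the resolution $P_\lambda=\sum_{M\in\GT{\lambda}}\ketbra{M}{M}$ expanded via Clebsch--Gordan coefficients, as in \cref{eq:projection_fock_state}. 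Since the overall phase squares to one, this yields
\begin{equation*}
	s_\lambda \;=\; d_\lambda^{-1}\sum_{\vec n}\sum_{M\in\GT{\lambda}}\abs{C_{N,\bar N}^{M}}^2.
\end{equation*}

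The key step is to swap the two sums and evaluate the inner sum $\sum_{\vec n}\abs{C_{N,\bar N}^{M}}^2$ for each fixed $M\in\GT{\lambda}$. The selection rule \eqref{eq:selection_rules_CGC} forces a non-vanishing coefficient only when $w_M = w_N+w_{\bar N}$; since $w_{\bar N}=-w_N$ (see \cref{eq:weight_sum_GT+GTdual}), only patterns $M$ of weight zero contribute. For such $M$, I would observe that (i) the weight spaces of the symmetric irrep $\tau_n^\nmodes$ and of its dual $\bar\tau_n^\nmodes$ are all one-dimensional (\cref{sec:symmetric_irrep}), and (ii) the conjugation $N\mapsto \bar N$ is a bijection between GT patterns of $\tau_n^\nmodes$ and of $\bar\tau_n^\nmodes$ that flips weights. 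Hence the selection rule already forces the second index of a non-zero Clebsch--Gordan coefficient to equal $\bar N$ whenever the first is $N$, so the constrained sum over Fock states coincides with the full sum over pairs in $\GT{\tau_n^\nmodes}\times \GT{\bar\tau_n^\nmodes}$. The orthogonality relation \eqref{eq:CG_orthogonality} then gives $\sum_{\vec n}\abs{C_{N,\bar N}^{M}}^2 = 1$ for every weight-zero $M$, and $0$ otherwise.

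Combining these observations yields the compact identity
\begin{equation*}
	s_\lambda \;=\; d_\lambda^{-1}\,\gamma_\lambda(\vec 0),
\end{equation*}
valid for any irrep $\lambda$ in $\omega_n^\nmodes$. Specializing to $\lambda=\lambda_k$ and inserting the dimension formula of \cref{prop:dimension_lambda_k} and the multiplicity \cref{lem:weight_space_zero} for $\gamma_{\lambda_k}(\vec 0)$ gives
\begin{equation*}
	s_{\lambda_k}
	\;=\;
	\frac{\nmodes-1}{2k+\nmodes-1}\,\binom{k+\nmodes-2}{k}^{-2}\cdot \binom{k+\nmodes-2}{k}
	\;=\; \frac{\nmodes-1}{2k+\nmodes-1}\,\gamma_{\lambda_k}(\vec 0)^{-1},
\end{equation*}
which is the claimed formula. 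The main conceptual obstacle is justifying the ``completion'' of the partial sum over Fock states to the full Clebsch--Gordan orthogonality sum; this is where the one-dimensional weight-space property of symmetric irreps is used in an essential way, and is the reason the argument does not generalize verbatim to non-symmetric reference representations.
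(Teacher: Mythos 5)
Your proposal is correct and follows essentially the same route as the paper's proof: expand the PNR measurement channel in the coupled basis, reduce to $\sum_{\vec n}\sum_{M}\abs{C_{N,\bar N}^{M}}^2$, and use the weight-zero selection rule together with the one-dimensionality of weight spaces of $\tau_n^\nmodes$ (and its dual) to complete the constrained sum to the full Clebsch--Gordan orthogonality sum, giving $s_{\lambda_k}=\gamma_{\lambda_k}(\vec 0)/d_{\lambda_k}$. The final substitution of \cref{prop:dimension_lambda_k} and \cref{lem:weight_space_zero} matches the paper's conclusion exactly.
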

\begin{proof}
	First, recall that a single mode (ideal) \ac{PNR} detector measures the number of particles in such mode \cite{provaznikBenchmarkingPhotonNumber2020}.
	In the case of $\nmodes$ modes, the (ideal) \ac{POVM} is therefore given by $\{\ketbra{\mathbf n}{\mathbf n}\}_{\mathbf n \in \NN^\nmodes}$
	and the (vectorized) measurement channel can be written as
	\begin{equation}
		\label{eq:measurement_channel_PNR}
		\mathcal M \coloneqq \sum_{\mathbf n \in \NN^\nmodes} \ketbra{\vec n, \vec n}{\vec n, \vec n} = \sum_{n = 0}^\infty \sum_{n \in \GT{\tau_n^\nmodes}} \ketbra{N, N}{N, N} \, .
	\end{equation}
	Denoting by $P_{\lambda_k}$ the projector onto ${\lambda_k} \in \hat \omega_n^\nmodes$, we have the following:
	\begin{align}
		s_{\lambda_k}
		&=
		\frac{1}{d_{\lambda_k}} \sum_{\vec n \in \NN^\nmodes} \sandwich{\vec n, \vec n}{P_{\lambda_k}}{\vec n, \vec n} \\
		&=
		\frac{1}{d_{\lambda_k}} \sum_{N \in \GT{\tau_n^\nmodes}} \sandwich{N, N}{P_{\lambda_k}}{N, N} \\
		&=
		\frac{1}{d_{\lambda_k}} \sum_{N \in \GT{\tau_n^\nmodes}} (-1)^{2\varphi(N)} \sandwich{N, \bar N}{P_{\lambda_k}}{N, \bar N} \\
		&=
		\frac{1}{d_{\lambda_k}} \sum_{N \in \GT{\tau_n^\nmodes}} \sum_{M \in \GT{\lambda_k}} \lvert C_{N, \bar N}^{M} \rvert^2 \, ,
	\end{align}
	where in the second step we used the fact that $P_{\lambda_k}$ acts non-trivially on the $n$ particle subspace, and in the fourth step we used the fact that the phases $\varphi(N)$ introduced in the labeling of \ac{GT} dual patterns are integers by construction, see \cref{eq:phase_dual_GTpattern}.
	Finally, recall that the non-trivial Clebsch-Gordan coefficients are determined by the selection rule $w_{M} = w_{N} + w_{\bar N} = 0$.
	Moreover, symmetric \acp{irrep} of $\SU(\nmodes)$ --and their dual-- preserve the 1-to-1 correspondence between weights and basis vectors, see \cref{sec:symmetric_irrep}.
	This implies
	\begin{equation}
		\label{eq:CGCs_symmetric_orthogonality}
		\sum_{N \in \GT{\tau_n^\nmodes}} \abs{C_{N, \bar N}^{M}}^2
		=
		\sum_{N\in \GT{\tau_n^\nmodes}} \sum_{\bar N' \in \GT{\bar \tau_n^\nmodes}}
		C_{N, \bar N'}^{M} C_{N, \bar N'}^{M}
		=
		1
	\end{equation}
	by orthogonality relations, cf. \cref{eq:CG_orthogonality}.
	Therefore, 
	\begin{equation}
		s_{\lambda_k} = \frac{\gamma_{\lambda_k}(\vec 0)}{d_{\lambda_k}}
	\end{equation}
	and the assertion follows from \cref{eq:weight_space_zero,prop:dimension_lambda_k}.
\end{proof}

Notice that $s_{\lambda_k}$ scales exponentially in both $k$ and $\nmodes$.
In particular, in the case $k = n = \nmodes$, it is proportional to the $\nmodes$-th Catalan number, and it scales as $\LandauO(4^\nmodes/\sqrt{\nmodes})$ for large values of $\nmodes$.

\begin{theorem}[Restatement of \cref{thm:general_filter_informal} - \ac{PNR} version]
	\label{thm:filter_function_PNR}
	Let $\rho = \ketbra{\ninput}{\ninput} \cong \ket{\ninput, \ninput} = \ket{\ninputGT, \ninputGT}$ be a $\nmodes$ modes state and let $\{ \ketbra{\vec n}{\vec n} \}_{\vec n \in \NN^\nmodes}$ be the Fock state \ac{POVM}.
	Let $N$ be the \ac{GT} pattern associated with $\vec n$.
	Then, for a given \ac{irrep} $\lambda_k \in$ in $\omega_n^\nmodes$, and assuming $s_{\lambda_k} \neq 0$,
	\begin{equation}
		\label{eq:filter_function_PNR}
		f_{\lambda_k}(\vec n, g) 
		=
		\frac{1}{s_{\lambda_k}}
		\frac{(-1)^{\varphi(\ninputGT)}}{\vec n!}
		\sum_{M \in \GT{\lambda_k}} C_{\ninputGT, \ninputGTdual}^{M} \sum_{N' \in \GT{\tau_n^\nmodes}} \frac{(-1)^{\varphi(N')}}{\vec n'!} C_{N', \bar N'}^{M} \abs{ \per(g_{\vec n, \vec n'}) }^2 
	\end{equation}
	where $s_{\lambda_k}$ is evaluated in \cref{thm:frame_operator_pnr}, the sums are restricted to all basis states such that \cref{eq:weight_sum_GT+GTdual} is satisfied and $\per(g_{\vec n, \vec m})$ denotes the permanent of the matrix obtained by $g$ by taking $m_j$ copies of the $j$-th column of $U$ and then by taking $n_i$ copies of the $i$-th row of the resulting matrix, and we used the multi-index notation $\vec n! \coloneqq n_1! \dots n_\nmodes!$.
\end{theorem}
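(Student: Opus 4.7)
The plan is to start from the vectorized expression for the filter function given at the end of \cref{sec:preliminaries}, namely
\begin{equation*}
f_{\lambda_k}(\vec n,g) = \frac{1}{s_{\lambda_k}} \sandwich{\ninput,\ninput}{P_{\lambda_k}(\tau_n^\nmodes\otimes\bar\tau_n^\nmodes)(g)^\ad}{\vec n,\vec n},
\end{equation*}
and to evaluate the bra and the ket sides separately, postponing the conversion to the coupled basis until it is actually needed. On the bra side, I would use the identification $\ket{\ninput,\ninput}=(-1)^{\varphi(\ninputGT)}\ket{\ninputGT,\ninputGTdual}$ to turn the second tensor factor into the dual \ac{GT} basis of $\bar\tau_n^\nmodes$, apply the Clebsch-Gordan expansion \eqref{eq:cg_decomposition_vector}, and then act with $P_{\lambda_k}$ to obtain $\bra{\ninput,\ninput}P_{\lambda_k}=(-1)^{\varphi(\ninputGT)}\sum_{M\in\GT{\lambda_k}}C_{\ninputGT,\ninputGTdual}^{M}\bra{M}$, where reality of the $\SU(\nmodes)$ Clebsch-Gordan coefficients and the integrality of $\varphi(\ninputGT)$ are used.

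On the ket side, instead of moving to the coupled basis immediately, I would act with $(\tau_n^\nmodes\otimes\bar\tau_n^\nmodes)(g)^\ad$ directly in the Fock basis via the permanent identity $\sqrt{\vec n!\,\vec m!}\sandwich{\vec n}{\tau_n^\nmodes(g)}{\vec m}=\per(g_{\vec n,\vec m})$. A short computation gives $\tau_n^\nmodes(g)^\ad\ket{\vec n}=\sum_{\vec n_1}\overline{\per(g_{\vec n,\vec n_1})}/\sqrt{\vec n!\,\vec n_1!}\,\ket{\vec n_1}$ and, using $\bar\tau_n^\nmodes(g)^\ad=\tau_n^\nmodes(g)^{T}$ in the Fock basis, $\bar\tau_n^\nmodes(g)^\ad\ket{\vec n}=\sum_{\vec n_2}\per(g_{\vec n,\vec n_2})/\sqrt{\vec n!\,\vec n_2!}\,\ket{\vec n_2}$. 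Only at this stage would I expand each product state via $\ket{\vec n_1,\vec n_2}=(-1)^{\varphi(N_2)}\sum_{M'}C_{N_1,\bar N_2}^{M'}\ket{M'}$ and pair with the bra above, producing a double sum over $\vec n_1,\vec n_2\in\H_n^\nmodes$ whose summand contains $C_{N_1,\bar N_2}^{M}\,\overline{\per(g_{\vec n,\vec n_1})}\,\per(g_{\vec n,\vec n_2})$ together with the factor $1/(\vec n!\sqrt{\vec n_1!\vec n_2!})$.

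The decisive step is to collapse this double sum to a diagonal one using the weight selection rule \eqref{eq:selection_rules_CGC}. Because $C_{\ninputGT,\ninputGTdual}^{M}$ is nonzero only when $w_M=\vec 0$ (as recorded in \cref{eq:weight_sum_GT+GTdual}), the same constraint propagates to $C_{N_1,\bar N_2}^{M}$ and forces $w_{N_1}=w_{N_2}$; here the crucial point is that the symmetric \ac{irrep} $\tau_n^\nmodes$ has one-dimensional weight spaces (\cref{sec:symmetric_irrep}), so equality of weights upgrades to equality of Fock labels $\vec n_1=\vec n_2=:\vec n'$. After this collapse, $\overline{\per(g_{\vec n,\vec n_1})}\per(g_{\vec n,\vec n_2})$ becomes $\abs{\per(g_{\vec n,\vec n'})}^2$ and $\sqrt{\vec n_1!\,\vec n_2!}$ becomes $\vec n'!$, so rearranging the surviving factors immediately reproduces the claimed expression \eqref{eq:filter_function_PNR}.

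The main conceptual obstacle is precisely this collapse: without combining the $\SU(\nmodes)$ weight selection rule with the multiplicity-freeness of weights in symmetric irreps, one would be stuck with a genuine double sum involving $\overline{\per(g_{\vec n,\vec n_1})}\per(g_{\vec n,\vec n_2})$ rather than the single sum with $\abs{\per(g_{\vec n,\vec n'})}^2$ stated in the theorem. The remaining work is careful phase bookkeeping for the two passages between the Fock basis of $\bar\tau_n^\nmodes$ and its dual \ac{GT} basis -- one for the input state on the left and one for the summation variable $\vec n'$ on the right -- producing exactly the signs $(-1)^{\varphi(\ninputGT)}$ and $(-1)^{\varphi(N')}$ appearing in \eqref{eq:filter_function_PNR}.
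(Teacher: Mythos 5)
Your proposal is correct and follows essentially the same route as the paper's proof: both expand the input state in the coupled basis with the phase $(-1)^{\varphi(\ninputGT)}$, express the matrix elements of $\tau_n^\nmodes\otimes\bar\tau_n^\nmodes$ as products of permanents, and collapse the resulting double sum to the diagonal $\vec n_1=\vec n_2$ by combining the weight selection rule with the one-dimensionality of weight spaces in the symmetric irrep. The only difference is the order in which you expand (acting on the ket in the Fock basis first rather than decomposing $\ket{M}$ into the uncoupled basis), which is a trivial reordering of the same computation.
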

\begin{proof}
	By \cref{eq:cg_decomposition_vector}, and denoting by $\ninputGT$ and $N$ the \ac{GT} patterns associated with $\ninput$ and $\vec n$, respectively, the filter function defined in \cref{eq:filter_function} becomes
	\begin{align}
		f_{\lambda_k}(N, g)
		&=
		\frac{1}{s_{\lambda_k}} \sandwich{\ninputGT, \ninputGT}{P_{\lambda_k} \tau_n^\nmodes \otimes \bar \tau_n^\nmodes(g)^\ad}{N, N} 
		\label{eq:filter_proof_2}\\
		&=
		\frac{1}{s_{\lambda_k}} (-1)^{\varphi(\ninputGT)} \sum_{M \in \GT{\lambda_k}} C_{\ninputGT, \ninputGTdual}^{M} \sandwich{M}{(\tau \otimes \bar \tau)(g)^\ad}{N, N} 
		\label{eq:filter_proof_3}\\
		&=
		\frac{1}{s_{\lambda_k}} (-1)^{\varphi(\ninputGT)} \sum_{M \in \GT{\lambda_k}} C_{\ninputGT, \ninputGTdual}^{M} 
		\sum_{N_1, N_2 \in \GT{\tau_n^\nmodes}}
		C_{N_1, \bar N_2}^{M} \sandwich{N_1, \bar N_2}{\tau_n^\nmodes \otimes \bar \tau_n^\nmodes(g)^\ad}{N, N}
		\label{eq:filter_proof_4} \\
		&=
		\frac{1}{s_{\lambda_k}} (-1)^{\varphi(\ninputGT)} \sum_{M \in \GT{\lambda_k}} C_{\ninputGT, \ninputGTdual}^{M} 
		\sum_{N_1, N_2 \in \GT{\tau_n^\nmodes}} (-1)^{\varphi(N_2)} C_{N_1, \bar N_2}^{M} \\
		&\times
		\sandwich{N_1, N_2}{\tau_n^\nmodes \otimes \bar \tau_n^\nmodes(g)^\ad}{N, N} \, , \nonumber
	\end{align}
	where in \cref{eq:filter_proof_2} we used \cref{eq:projection_fock_state} and relabeled the second entries as dual basis vectors by introducing the corresponding relative phases (Cf. \cref{eq:phase_dual_GTpattern}).
	In \cref{eq:filter_proof_3}, we used the Clebsch-Gordan decomposition $M = \sum_{N_1, N_2} C_{N_1, \bar N_2}^{M} \ket{N_1, \bar N_2}$, and in \cref{eq:filter_proof_4} we relabeled $\bar N_2$ as a basis vector for $\tau_n^\nmodes$.
	In particular, the latter implies that $f_{\lambda_k}$ is manifestly related to the computation of permanents \cite{scheelPermanentsLinearOptical2004}, as each inner product resembles the boson sampling problem when expressed in the Fock basis:
	\begin{align}
		\sandwich{N_1, N_2}{\tau_n^\nmodes \otimes \bar \tau_n^\nmodes(g)^\ad}{N, N}
		&=
		\sandwich{\vec n_1, \vec n_2}{\tau_n^\nmodes \otimes \bar \tau_n^\nmodes(g)^\ad}{\vec n, \vec n} \\
		&=
		\sandwich{\vec n_1}{\tau_n^\nmodes(g)^\ad}{\vec n} \sandwich{\vec n}{\tau_n^\nmodes(g)}{\vec n_2} \\
		&=
		\frac{1}{\vec n! \sqrt{\vec n_1! \vec n_2!}} \per(g^\ad_{\vec n_1, \vec n}) \per(g_{\vec n, \vec n_2}) \, ,
	\end{align}
	Finally, since weights for symmetric irreps of $\SU(\nmodes)$ uniquely identify basis vectors, by selection rules of Clebsch-Gordan coefficients we have that $N_2 = N_1$ (see also \cref{eq:CGCs_symmetric_orthogonality}), and the assertion follows by plugging in \cref{thm:frame_operator_pnr}.
\end{proof}

Clebsch-Gordan coefficients can be calculated in polynomial time \cite{alexNumericalAlgorithmExplicit2011} for $\approx 20$ modes before memory overhead limits the application of such algorithms \cite{alexSUClebschGordanCoefficients2012}.
We remark that --using exact formulae for CG coefficients for $\tau_n^\nmodes \otimes \bar{\tau}_n^\nmodes$ from \cite{vilenkinRepresentationLieGroups1992Vol3}-- one can compute the necessary Clebsch-Gordan coefficients for higher number of modes $m$.
Hence, the computational hardness of $f_{\lambda_k}$ comes from the evaluation of the permanents appearing in \cref{eq:filter_function_PNR}.

Alternatively, $f_{\lambda_k}$ can be expressed as a linear combination of matrix coefficients of $\lambda_k$ (in the corresponding \ac{GT} basis).
To show this fact, we introduce the following technical result that will be used extensively in the rest of this work:
\begin{lemma}
	\label{lem:sandwich_projector}
	Let $N, X$ be \ac{GT} patterns, and let $\bar N, \bar X$ be their dual, respectively.
	Let $\tau_n^\nmodes$ be the $n$-particles maximally symmetric \ac{irrep} of $\SU(\nmodes)$ and consider $\lambda_k \in \omega_n^\nmodes \cong \tau_n^\nmodes \otimes \bar \tau_n^{\nmodes}$.
	Let $P_{\lambda_k}$ be the projector onto $\lambda_k$.
	Then, the following holds:
	\begin{equation}
		\sandwich{N, \bar N}{P_{\lambda_k} (\tau_n^\nmodes \otimes \bar \tau_n^\nmodes)(g)^\ad}{X, \bar X}
		=
		\sum_{M \in \GT{\lambda_k}} C_{N, \bar N}^{M} \sum_{M' \in \GT{\lambda_k}} C_{X, \bar X}^{M'} \sandwich{M}{\lambda_k(g)^\ad}{M'} \, .
	\end{equation}
\end{lemma}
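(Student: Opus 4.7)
The plan is to reduce the left-hand side to a sum over the coupled basis by means of the Clebsch-Gordan decomposition and then exploit the irreducibility of $\lambda_k$.

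First, I would expand both $\ket{N,\bar N}$ and $\ket{X,\bar X}$ in the coupled GT basis associated to $\tau_n^\nmodes\otimes\bar\tau_n^\nmodes$. Since, by \cref{lem:decomposition_conjugate_action}, the decomposition $\omega_n^\nmodes\cong\bigoplus_{k'=0}^n\lambda_{k'}$ is multiplicity-free, no multiplicity label is needed and one simply gets
\begin{equation*}
	\ket{N,\bar N}=\sum_{k'=0}^n\sum_{M\in\GT{\lambda_{k'}}}C_{N,\bar N}^{M}\ket{M},\qquad \ket{X,\bar X}=\sum_{k'=0}^n\sum_{M'\in\GT{\lambda_{k'}}}C_{X,\bar X}^{M'}\ket{M'},
\end{equation*}
where the Clebsch-Gordan coefficients are chosen real by the standard convention.

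Next, I would use that $P_{\lambda_k}$ is an orthogonal projector and commutes with $(\tau_n^\nmodes\otimes\bar\tau_n^\nmodes)(g)$ (as $\lambda_k$ is an invariant subspace). Writing $P_{\lambda_k}=P_{\lambda_k}^{\,2}$ and moving one factor to the right past the representation lets me sandwich the operator between the two projected vectors. The projection selects only the $k'=k$ summands, giving
\begin{equation*}
	P_{\lambda_k}\ket{N,\bar N}=\sum_{M\in\GT{\lambda_k}}C_{N,\bar N}^{M}\ket{M},\qquad P_{\lambda_k}\ket{X,\bar X}=\sum_{M'\in\GT{\lambda_k}}C_{X,\bar X}^{M'}\ket{M'}.
\end{equation*}

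Finally, since $\ket{M'}\in\lambda_k$, the restriction of $(\tau_n^\nmodes\otimes\bar\tau_n^\nmodes)(g)^\ad$ to this subspace coincides with $\lambda_k(g)^\ad$, so $\bra{M}(\tau_n^\nmodes\otimes\bar\tau_n^\nmodes)(g)^\ad\ket{M'}=\bra{M}\lambda_k(g)^\ad\ket{M'}$ for $M,M'\in\GT{\lambda_k}$. Combining the two expansions with this identity yields exactly the claimed formula. There is no substantial obstacle here: the only care-point is that, in contrast to \cref{eq:cg_decomposition_vector,eq:filter_function_PNR} where the second slot is a Fock label requiring the phase $(-1)^{\varphi(N)}$ to convert it into the dual GT label, the lemma is already stated with $\bar N$ and $\bar X$, so no relative phases appear. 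The reality of the Clebsch-Gordan coefficients also means no complex conjugation on $C_{N,\bar N}^{M}$ is needed when passing to the bra.
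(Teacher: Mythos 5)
Your proof is correct and follows essentially the same route as the paper: expand both vectors via the Clebsch--Gordan decomposition, let the projector (equivalently, the block-diagonality of the decomposed representation) restrict everything to the $\lambda_k$ component, and identify the restricted matrix element with $\sandwich{M}{\lambda_k(g)^\ad}{M'}$. Your remarks about the absence of dualization phases and the reality of the Clebsch--Gordan coefficients are also accurate.
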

\begin{proof}
	By construction, $P_{\lambda_k}$ selects the $\lambda_k$-th component of $\ket{N, \bar N}$, which can be conveniently isolated by the Clebsch-Gordan decomposition of $\omega_n^\nmodes \cong \tau_n^\nmodes \otimes \bar \tau_n^\nmodes$.
	This implies
	\begin{equation}
		\sandwich{N, \bar N}{P_{\lambda_k} (\tau_n^\nmodes \otimes \bar \tau_n^\nmodes)(g)^\ad}{X, \bar X}
		=
		\sum_{M \in \GT{\lambda_k}} C_{N, \bar N}^{M} \sandwich{M}{\tau_n^\nmodes \otimes \bar \tau_n^\nmodes(g)^\ad}{X, \bar X} \, .
	\end{equation}
	To compute the inner product, recall that $\tau_n^\nmodes \otimes \bar \tau_n^\nmodes = \bigoplus_{l=0}^n \lambda_l$ (Cf. \cref{eq:decomposition_conjugate_action}).
	In particular, observe that $M$ is a basis element in $\lambda_k$, which implies the only non trivial contributions from $\tau_n^\nmodes \otimes \bar \tau_n^\nmodes$ are associated with its $\lambda_k$-th component.
	Likewise, the only relevant contributions to the inner product coming from $\ket{X, \bar X}$ are associated with its restriction to $\lambda_k$ that can be expressed as
	\begin{equation}
		\left. \ket{X, \bar X} \right|_{\lambda_k}
		=
		\sum_{M' \in \GT{\lambda_k}} C_{X, \bar X}^{M'} \ket{M'} \, .
	\end{equation}
	Hence,
	\begin{equation}
		\sandwich{M}{\tau_n^\nmodes \otimes \bar \tau_n^\nmodes(g)^\ad}{X, \bar X}
		=
		\sum_{M' \in \GT{\lambda_k}} C_{X, \bar X}^{M'} \sandwich{M}{\lambda_k(g)^\ad}{M'} \, ,
	\end{equation}
	from which the assertion follows.
\end{proof}
As mentioned before, a first consequence is the following explicit expression for the filter function:
\begin{corollary}
\label{cor:filter-function-matrix-coefficients}
	For a given \ac{irrep} $\lambda_k \in \hat \omega_n^\nmodes$, and assuming $s_{\lambda_k} \neq 0$, the following holds true:
	\begin{equation}
		\label{eq:filter_function_PNR_alternative}
		f_{\lambda_k}(\vec n, g) 
		=
		\frac{1}{s_{\lambda_k}} (-1)^{\varphi(\ninputGT) + \varphi(N)} \sum_{M \in \GT{{\lambda_k}}} C_{\ninputGT, \ninputGTdual}^{M} \sum_{M' \in \GT{{\lambda_k}}} C_{N, \bar N}^{M'} \sandwich{M}{\lambda_k(g)^\ad}{M'} \, ,
	\end{equation}
	where the sums are restricted to all basis states such that \cref{eq:weight_sum_GT+GTdual} is satisfied.
\end{corollary}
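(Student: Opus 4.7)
The plan is to reduce this corollary directly to \cref{lem:sandwich_projector}, performing only a cosmetic rewriting of the bras and kets first. Starting from the vectorized form of the filter function,
\begin{equation*}
f_{\lambda_k}(\vec n, g) = s_{\lambda_k}^{-1}\,\sandwich{\ninputGT, \ninputGT}{P_{\lambda_k}\,(\tau_n^\nmodes\otimes\bar\tau_n^\nmodes)(g)^\ad}{N, N},
\end{equation*}
the second tensor slot on both sides is expressed in the basis of $\tau_n^\nmodes$, whereas the Clebsch--Gordan decomposition $\omega_n^\nmodes\cong\tau_n^\nmodes\otimes\bar\tau_n^\nmodes$ underlying $P_{\lambda_k}$ couples $\tau_n^\nmodes$ with its \emph{dual} $\bar\tau_n^\nmodes$. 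Hence the first step is to relabel each second slot as a dual Gelfand--Tsetlin pattern: using \cref{eq:conjugation_relative_phase} once in the bra and once in the ket picks up a sign $(-1)^{\varphi(\ninputGT)+\varphi(N)}$. The phases $\varphi(\cdot)$ are integers by construction, so no complex-conjugation issue arises.

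After this relabeling, the matrix element becomes $\sandwich{\ninputGT,\ninputGTdual}{P_{\lambda_k}\,(\tau_n^\nmodes\otimes\bar\tau_n^\nmodes)(g)^\ad}{N,\bar N}$, which is exactly the quantity evaluated by \cref{lem:sandwich_projector}. Substituting its conclusion produces the advertised double sum over $M, M'\in\GT{\lambda_k}$ with the Clebsch--Gordan coefficients $C_{\ninputGT,\ninputGTdual}^{M}$ and $C_{N,\bar N}^{M'}$ and the matrix coefficient $\sandwich{M}{\lambda_k(g)^\ad}{M'}$. The restriction of both sums to $\vec 0$-weight basis vectors is automatic: the selection rule \cref{eq:selection_rules_CGC} forces both Clebsch--Gordan coefficients to vanish unless $w_M = w_{\ninputGT}+w_{\ninputGTdual} = \vec 0$, as computed explicitly in \cref{eq:weight_sum_GT+GTdual}, and analogously for $M'$.

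There is essentially no obstacle in this proof — it stops short at \cref{eq:filter_proof_3} in the proof of \cref{thm:filter_function_PNR} and packages the remainder via \cref{lem:sandwich_projector} instead of further expanding the inner $\ket{M}$ back into the uncoupled basis and producing permanents. The only point requiring mild care is bookkeeping of the two duality phases: they appear once in the bra and once in the ket (each of the form $\ket{\cdot,\cdot}\mapsto\ket{\cdot,\bar\cdot}$), so they combine additively in the exponent, and, being integers, no cancellation or reality subtlety intervenes.
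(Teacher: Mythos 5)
Your proposal is correct and follows essentially the same route as the paper's own proof: rewrite the second tensor slots as dual \ac{GT} patterns via \cref{eq:conjugation_relative_phase} to pick up the phase $(-1)^{\varphi(\ninputGT)+\varphi(N)}$, then invoke \cref{lem:sandwich_projector} to obtain the double sum over $M,M'$ with the matrix coefficient $\sandwich{M}{\lambda_k(g)^\ad}{M'}$. Your additional remark that the restriction to $\vec 0$-weight states is enforced automatically by the selection rule \cref{eq:selection_rules_CGC} is also consistent with the paper.
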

\begin{proof}
	By \cref{eq:cg_decomposition_vector}, and denoting by $\ninputGT$ and $N$ the \ac{GT} patterns associated with $\ninput$ and $\vec n$, respectively, the filter function defined in \cref{eq:filter_function} becomes
	\begin{align}
		f_{\lambda_k}(N, g)
		&=
		\frac{1}{s_{\lambda_k}} \sandwich{\ninputGT, \ninputGT}{P_{\lambda_k} (\tau_n^\nmodes \otimes \bar \tau_n^\nmodes)(g)^\ad}{N, N} \\
		&=
		\frac{1}{s_{\lambda_k}} (-1)^{\varphi(\ninputGT) + \varphi(N)}
		\sandwich{\ninputGT, \ninputGTdual}{P_{\lambda_k} (\tau_n^\nmodes \otimes \bar \tau_n^\nmodes)(g)^\ad}{N, \bar N} \\
		&=
		\frac{1}{s_{\lambda_k}} (-1)^{\varphi(\ninputGT) + \varphi(N)} 
		\sum_{M \in \GT{\lambda_k}} C_{\ninputGT, \ninputGTdual}^{M} \sum_{M' \in \GT{\lambda_k}} C_{N, \bar N}^{M'} \sandwich{M}{\lambda_k(g)^\ad}{M'} \, .
	\end{align}
	In the last line, we used \cref{lem:sandwich_projector}.
\end{proof}
Notably, explicit expressions for the matrix elements of \acp{irrep} of $\SU(\nmodes)$ are also available, see for instance \cite[Chapter 3]{vilenkinRepresentationLieGroupsSpecialVol1} for $\SU(2)$ and \cite[Chapter 9]{vilenkinRepresentationLieGroupsSpecialVol2} for $\SU(\nmodes)$.
Moreover, numerical implementations of the bosonic realization of $\mathfrak{su}(\nmodes)$ are available \cite{Dhand_2015} and matrix coefficients can be expressed as suitable permanents of $\lambda_k(g)$.

\subsection{Moments of the filter function for PNR measurement settings}
\label{sec:moments_PNR}

In this section, we provide explicit expressions for the first two moments of the filter function \eqref{eq:filter_function} w.r.t the ideal probability distribution $p(\vec n \vert g) = \sandwich{\vec n}{\omega_n^\nmodes(g)(\rho)}{\vec n}$, $\vec n \in \NN^\nmodes$.
In particular, the ideal second moment will provide an upper bound to the sampling complexity of the protocol, Cf. \cref{sec:guarantees}.
Throughout this section $dg$ denotes the Haar measure on $\SU(\nmodes)$.

The following technical result will be useful:
\begin{lemma}
	\label{lem:sandwich_no_projector}
	Let $N, X$ be \ac{GT} patterns, and let $\bar N, \bar X$ be their dual, respectively.
	Let $\tau_n^\nmodes$ be the $n$-particles maximally symmetric \ac{irrep} of $\SU(\nmodes)$ and let $\lambda_k$ be an \ac{irrep} in  $\omega_n^\nmodes$.
	Then, the following holds:
	\begin{equation}
		\sandwich{X, \bar X}{(\tau_n^\nmodes \otimes \bar \tau_n^\nmodes)(g)}{N, \bar N}
		=
		\sum_{j=0}^n \sum_{M, M' \in \GT{\lambda_j}} C_{X, \bar X}^{M} C_{N, \bar N}^{M'} \sandwich{M}{\lambda_j(g)}{M'} \, .
	\end{equation}
\end{lemma}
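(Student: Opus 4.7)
The plan is to mirror the strategy of \cref{lem:sandwich_projector}, but without the projector $P_{\lambda_k}$, so that we must sum over all irreps $\lambda_j$ appearing in the Clebsch-Gordan decomposition of $\tau_n^\nmodes\otimes\bar\tau_n^\nmodes$ rather than isolate a single one. By \cref{lem:decomposition_conjugate_action} we know that $\tau_n^\nmodes\otimes\bar\tau_n^\nmodes \simeq \bigoplus_{j=0}^n \lambda_j$ with each $\lambda_j$ appearing with multiplicity one, so no extra multiplicity labels are needed, which considerably simplifies bookkeeping.

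The first step is to expand both $\ket{X,\bar X}$ and $\ket{N,\bar N}$ from the uncoupled (product) basis to the coupled (Gelfand-Tsetlin) basis using the Clebsch-Gordan coefficients introduced in \cref{sec:clebsch-gordan}:
\begin{equation}
\ket{X,\bar X} = \sum_{j=0}^n \sum_{M\in\GT{\lambda_j}} C_{X,\bar X}^{M}\ket{M}, \qquad
\ket{N,\bar N} = \sum_{j'=0}^n \sum_{M'\in\GT{\lambda_{j'}}} C_{N,\bar N}^{M'}\ket{M'}.
\end{equation}
Substituting these expansions into the matrix element yields a double sum over $j,j'$ and over pairs $(M,M')$, with the inner quantity being $\sandwich{M}{(\tau_n^\nmodes\otimes\bar\tau_n^\nmodes)(g)}{M'}$.

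The second step is to use the fact that in the coupled basis the representation $\tau_n^\nmodes\otimes\bar\tau_n^\nmodes$ is block-diagonal, acting as $\lambda_j(g)$ on the carrier space of each $\lambda_j$. Equivalently, the Clebsch-Gordan matrix $C$ from \cref{eq:cb_decomposition_matrix} intertwines the two representations, so that $\sandwich{M}{(\tau_n^\nmodes\otimes\bar\tau_n^\nmodes)(g)}{M'} = \delta_{jj'}\sandwich{M}{\lambda_j(g)}{M'}$ whenever $M\in\GT{\lambda_j}$ and $M'\in\GT{\lambda_{j'}}$. Collapsing the Kronecker delta eliminates the sum over $j'$ and yields the claimed single sum over $j$.

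No step here is really an obstacle: the only subtlety is to observe that the multiplicity-freeness of the decomposition (already established in \cref{lem:decomposition_conjugate_action}) makes the use of Schur's lemma immediate, so no additional labels or off-diagonal intertwiner contributions appear. In particular, one could also derive this result as a direct corollary of \cref{lem:sandwich_projector} by writing the identity on $\tau_n^\nmodes\otimes\bar\tau_n^\nmodes$ as $\sum_{j=0}^n P_{\lambda_j}$ and inserting it to the left of $(\tau_n^\nmodes\otimes\bar\tau_n^\nmodes)(g)$, which reduces the statement to a sum of expressions already computed in the previous lemma.
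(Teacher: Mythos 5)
Your proposal is correct and follows essentially the same route as the paper's proof: expand both $\ket{X,\bar X}$ and $\ket{N,\bar N}$ in the coupled Gelfand--Tsetlin basis via Clebsch--Gordan coefficients, then use the block-diagonality of $\tau_n^\nmodes\otimes\bar\tau_n^\nmodes$ in that basis (equivalently, orthogonality of distinct irreps) to kill the off-diagonal terms and collapse the double sum over irrep labels to a single one. The multiplicity-freeness from \cref{lem:decomposition_conjugate_action} is indeed the only point needing mention, and your closing remark that the lemma also follows from \cref{lem:sandwich_projector} by inserting $\sum_{j}P_{\lambda_j}=\1$ is a valid, if redundant, alternative.
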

\begin{proof}
	The expression follows immediately from the Clebsch-Gordan decomposition.
	More specifically, we have
	\begin{align}
		\ket{N, \bar N} = \sum_{i=0}^n \sum_{M \in \GT{\lambda_i}} C_{N, \bar N}^{M} \ket{M} \, ,
		&
		\quad \ket{M, \bar M} = \sum_{j=0}^n \sum_{M' \in \GT{\lambda_j}} C_{X, \bar X}^{M'} \ket{M'} \, , \\
		\tau_n^\nmodes \otimes \bar \tau_n^\nmodes &= \bigoplus_{k=0}^n \lambda_k \, ,
	\end{align}
	Cf. \cref{eq:cg_decomposition_vector} and \cref{eq:decomposition_conjugate_action}.
	Hence, since $M$ and $M'$ are basis elements of $\lambda_i$ and $\lambda_j$, respectively,
	\begin{equation}
		\sandwich{M}{\lambda_k(g)}{M'} \neq 0
	\end{equation}
	only if $i = j = k$, from which the assertion follows.
\end{proof}
We will also need the following standard result in the representation theory of compact groups, often referred as Schur's orthogonality relations, see~\cite[Thm.~5.8]{follandCourseAbstractHarmonic2015}:
For a given \ac{irrep} $\lambda$ of $\SU(\nmodes)$ (or, in general, of any compact group $G$), if $M_1, M_2, M'_1, M'_2 \in \GT{\lambda}$, then the following relation holds true:
\begin{equation}
	\label{eq:orthogonality_relations}
	\int dg\, \sandwich{M_1}{\lambda(g)}{M'_1} \sandwich{M_2}{\lambda(g)^\ad}{M'_2}
	=
	\frac{1}{d_\lambda} \delta_{M_1, M_2} \delta_{M'_1, M'_2}
	\, ,
\end{equation}
where $dg$ denotes the Haar measure on $G$.

Before we prove the main results of this section, it is worth to quickly consider the first moment $\EE[f_{\lambda_k}]$, as the proof scheme is the same, but in the case of the second moment is hidden behind additional technical details concerning the representations involved.
\begin{lemma}
	\label{lem:first_moment_filter_function_pnr}
	For a \ac{PNR} measurement setting, 
	$\rho = \ketbra{\ninput}{\ninput}$ as input state, and an \ac{irrep} $\lambda_k$ of $\omega^{(n)} = \tau_n^\nmodes (\argdot) \tau_n^{\nmodes \ad}$, the following holds:
	\begin{equation}
		\label{eq:first_moment_filter_function_pnr_CG}
		\EE[f_{\lambda_k}] = 
		\sum_{M \in \GT{\lambda_k}} \abs{ C_{\ninputGT, \ninputGTdual}^{M} }^2 \, ,
	\end{equation}
	where $\ninputGT$ is the \ac{GT} pattern associated with $\ninput$, and $\ninputGTdual$ is its dual.
\end{lemma}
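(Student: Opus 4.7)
The plan is to compute the expectation by directly expanding both the filter function and the outcome probability in terms of matrix coefficients of $\lambda_k$ and Clebsch--Gordan coefficients, and then integrating over $g$ using Schur's orthogonality relations. By Haar invariance of the product $g_1\cdots g_l$, it suffices to evaluate $\EE[f_{\lambda_k}] = \int dg\, \sum_{\vec n} p(\vec n | g)\, f_{\lambda_k}(\vec n, g)$ for a single Haar-distributed $g$, where $p(\vec n | g) = \sandwich{\vec n}{\omega_n^\nmodes(g)(\rho)}{\vec n}$ and $dg$ denotes the Haar measure on $\SU(\nmodes)$.

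First, I would apply \cref{cor:filter-function-matrix-coefficients} to write $f_{\lambda_k}(\vec n, g)$ as a double sum over the Gelfand--Tsetlin basis of $\lambda_k$, with coefficients $C_{\ninputGT, \ninputGTdual}^{M}\, C_{N,\bar N}^{M'}\, \sandwich{M}{\lambda_k(g)^\ad}{M'}$ and an overall sign $(-1)^{\varphi(\ninputGT)+\varphi(N)}$. Dually, after relabelling $\ket{\vec n, \vec n} = (-1)^{\varphi(N)}\ket{N, \bar N}$ and $\ket{\ninput, \ninput} = (-1)^{\varphi(\ninputGT)}\ket{\ninputGT, \ninputGTdual}$, I would use \cref{lem:sandwich_no_projector} to expand $p(\vec n | g)$ as a sum over all irreps $\lambda_j\subset\omega_n^\nmodes$. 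The product of the two $(-1)^{\varphi(\ninputGT)+\varphi(N)}$ factors is $+1$, so the phases drop out before any integration.

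Next, I would integrate over $g$ and apply Schur's orthogonality relations \cref{eq:orthogonality_relations} to the product $\sandwich{M}{\lambda_j(g)}{M'}\,\sandwich{\tilde M}{\lambda_k(g)^\ad}{\tilde M'}$. Only the term $j=k$ survives, with the integral producing $d_{\lambda_k}^{-1}\delta_{M,\tilde M'}\delta_{M',\tilde M}$ so that the summand factorises into independent sums over input and output labels, leaving
\[
\EE[f_{\lambda_k}] = \frac{1}{s_{\lambda_k} d_{\lambda_k}} \left( \sum_{M' \in \GT{\lambda_k}} \abs{C_{\ninputGT, \ninputGTdual}^{M'}}^2 \right) \left( \sum_{M \in \GT{\lambda_k}} \sum_{\vec n} \abs{C_{N, \bar N}^M}^2 \right).
\]
The inner sum over $\vec n$ at fixed $M$ equals $1$ by the orthogonality relation \cref{eq:CGCs_symmetric_orthogonality} (restricted to weight-zero $M$ via the Clebsch--Gordan selection rules), so the rightmost factor equals the zero-weight multiplicity $\gamma_{\lambda_k}(\vec 0)$. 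Substituting $s_{\lambda_k} = \gamma_{\lambda_k}(\vec 0)/d_{\lambda_k}$ from \cref{thm:frame_operator_pnr} collapses the prefactor to $1$ and produces the claim.

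The main obstacle is the bookkeeping: keeping the index swaps in Schur's orthogonality straight, since the two matrix coefficients carry bra/ket indices in opposite order, and verifying that the dual-pattern phases cancel before the integral so that the two sums decouple cleanly. As a sanity check, the same result follows in one line from the general filtered-RB identity $\EE[f_\lambda] = \Tr[\rho\, P_\lambda \circ S^+ \circ S(\rho)]$: by \cref{eq:frameop_irrep_subspace} the operator $S^+ S$ restricts to the identity on every $\lambda$ with $s_\lambda > 0$, so $\EE[f_{\lambda_k}] = \sandwich{\ninputGT,\ninputGTdual}{P_{\lambda_k}}{\ninputGT,\ninputGTdual}$, which equals the claimed sum of $\abs{C}^2$ by definition of Clebsch--Gordan coefficients as overlaps with the coupled basis. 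I would nevertheless prefer the direct calculation, since its machinery extends verbatim to the second-moment analysis in \cref{sec:moments_PNR}.
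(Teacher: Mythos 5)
Your proposal is correct and follows essentially the same route as the paper's proof: expand both the filter function and the ideal outcome probability in Clebsch--Gordan coefficients and matrix coefficients of $\lambda_k$, cancel the dual-pattern phases, apply Schur orthogonality to kill all $j\neq k$ and factorize the sums, and then use $\sum_{N}\abs{C_{N,\bar N}^{M}}^2=1$ together with $s_{\lambda_k}=\gamma_{\lambda_k}(\vec 0)/d_{\lambda_k}$ to collapse the prefactor. Your closing one-line sanity check via $S^+S=\Id$ on the $n$-particle sector is also valid (given $s_{\lambda_k}>0$ for $\nmodes>1$) and consistent with \cref{eq:filtered_data_ideal}.
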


\begin{proof}
	Since $\rho$ is an $n$-particle state and $\omega_n^\nmodes$ is a passive transformation, the outcome of a PNR measurement must also be an $n$-particle Fock state.
	Hence, we have
	\begin{align}
		\EE[f_{\lambda_k}]
		&\coloneqq
		\frac{1}{s_{\lambda_k}} \sum_{\vec n \in \H_n^\nmodes} \int dg\,
		\sandwich{\ninput, \ninput}{P_{\lambda_k} (\tau_n^\nmodes \otimes \bar \tau_n^\nmodes)(g)^\ad}{\vec n, \vec n}
		\sandwich{\vec n, \vec n}{(\tau_n^\nmodes \otimes \bar \tau_n^\nmodes)(g)}{\ninput, \ninput} \\
		&=
		\frac{1}{s_{\lambda_k}} \sum_{N \in \GT{\tau_n^\nmodes}}
		\int dg\, \sandwich{\ninputGT, \ninputGTdual}{P_{\lambda_k}(\tau_n^\nmodes \otimes \bar \tau_n^\nmodes)(g)^\ad}{N, \bar N}
		\sandwich{N, \bar N}{(\tau_n^\nmodes \otimes \bar \tau_n^\nmodes)(g)}{\ninputGT, \ninputGTdual} \\
		&=
		\frac{1}{s_{\lambda_k}} \sum_{M \in \GT{\lambda_k}} C_{\ninputGT, \ninputGTdual}^{M} \sum_{N \in \GT{\tau_n^\nmodes}}
		\int dg\, \sandwich{M}{\lambda_k(g)^\ad}{N, \bar N} \sandwich{N, \bar N}{\bigoplus_{j=0}^n \lambda_j(g)}{\ninputGT, \ninputGTdual} \, .
	\end{align}
	The second line follows since the relative phases between $\ket{M}$ and $\ket{\bar M}$ highlighted in \cref{eq:conjugation_relative_phase} are integers and they appear an even number of times.
	In the third step, we projected $\ket{\ninputGT, \ninputGTdual}$ onto its $\lambda_k$-th component, see \cref{eq:projection_fock_state}.
	Accordingly, the only non-trivial contribution to the integral is determined by the $\lambda_k$-th component of $\tau_n^\nmodes \otimes \bar \tau_n^\nmodes$.
	Similarly, by orthogonality relations, the integral is non-zero only if $\lambda_j = \lambda_k$.
	Hence, it is enough to consider the restricted Clebsch-Gordan decomposition to the $\lambda_k$-th irrep, and the following holds:
	\begin{align}
		\EE[f_{\lambda_k}]
		&=
		\frac{1}{s_{\lambda_k}} \sum_{M \in \GT{\lambda_k}} C_{\ninputGT, \ninputGTdual}^{M} \sum_{N \in \GT{\tau_n^\nmodes}}
		\int dg\, \sandwich{M}{\lambda_k(g)^\ad}{N, \bar N} \sandwich{N, \bar N}{\lambda_k(g)}{\ninputGT, \ninputGTdual} \\
		&=
		\frac{1}{s_{\lambda_k}} \sum_{M \in \GT{\lambda_k}} C_{\ninputGT, \ninputGTdual}^{M} \sum_{N \in \GT{\tau_n^\nmodes}}
		\sum_{M_1, M_2, M_3 \in \GT{\lambda_k}}	C_{N, \bar N}^{M_1} C_{N, \bar N}^{M_2} C_{\ninputGT, \ninputGTdual}^{M_3} \\
		&\times
		\underbrace{\int dg\, \sandwich{M}{\lambda_k(g)^\ad}{M_1} \sandwich{M_2}{\lambda_k(g)}{M_3}}_{=\frac{1}{d_{\lambda_k}} \delta_{M, M_3} \delta_{M_1, M_2}}\\
		&=
		\frac{1}{d_{\lambda_k}} \frac{1}{s_{\lambda_k}}
		\sum_{M \in \GT{\lambda_k}} \abs{C_{\ninputGT, \ninputGTdual}^{M}}^2
		\sum_{N \in \GT{\tau_n^\nmodes}} \sum_{M_1 \in \GT{\lambda_k}} \abs{C_{N, \bar N}^{M_1}}^2 \\
		&=
		\sum_{M \in \GT{\lambda_k}} \abs{C_{\ninputGT, \ninputGTdual}^{M}}^2.
	\end{align}
	In the second step, we expanded $\ket{N, \bar N}, \ket{\ninputGT, \ninputGTdual}$ in the coupled basis, and restricted the decompositions to the $\lambda_k$-th components.
	In the third step, we used Schur's orthogonality relations to compute the integral, and in the final step we used the definition of the frame operator, and in particular the result in \cref{thm:frame_operator_pnr}.
\end{proof}

In general, finding explicit expressions for the second moment $\EE[f_\lambda^2]$ is more involved and the following technical result is necessary:
\begin{lemma}
	\label{lem:lambda_two_copies_decomp}
	Let $\lambda_k$ be a Young diagram as in \cref{eq:lambda_diagram} labeling an irrep of $\SU(\nmodes), \, \nmodes \geq 3$.
	Then,
	\begin{equation}
		\lambda_k \otimes \lambda_k = \bigoplus_{l=0}^k \lambda_l^{(l+1)} \oplus \bigoplus_{l=k+1}^{2k} \lambda_l^{(2k-l+1)} \oplus L \, ,
	\end{equation}
	where $\lambda_0 \equiv \mathbf{1}$, $\lambda_1 \equiv \Ad$, $\lambda_j^{(i)}$ denotes the $i$-th copy of $\lambda_j$ in $\lambda_l^{\otimes 2}$, and $L$ is a suitable direct sum of \acp{irrep} which are not of the form $\lambda_l$ for any $l \in \NN$.
\end{lemma}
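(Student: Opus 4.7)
The approach I would take is based on the Littlewood--Richardson (LR) rule applied to the explicit Young-diagram realization of $\lambda_k$. First I would fix $\lambda_k$ as the partition $\mu_k := (2k, k, k, \ldots, k, 0)$ of length $\nmodes$ (with $\nmodes - 2$ middle entries equal to $k$), which has highest weight $k(\omega_1 + \omega_{\nmodes-1})$ in terms of the fundamental weights. The target then becomes to decompose $V_{\mu_k} \otimes V_{\mu_k}$ into $\SU(\nmodes)$-irreducibles and to single out the summands whose highest weight is of the form $l(\omega_1 + \omega_{\nmodes-1})$ --- these are exactly the $\lambda_l$'s --- lumping all other summands into $L$ by definition.

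The key tool is the LR rule: the multiplicity of $V_\nu$ in $V_{\mu_k} \otimes V_{\mu_k}$ equals the number of LR skew tableaux of shape $\nu/\mu_k$ with content $\mu_k$, where $\nu$ is chosen in an $\SU(\nmodes)$-shift representative containing $\mu_k$. For each candidate $\lambda_l$ one uses the shifted representative $\nu_l := (2k + l, (2k)^{\nmodes - 2}, 2k - l)$, which is a legal (weakly decreasing) partition of size $2 |\mu_k|$ precisely for $0 \leq l \leq 2k$; for $l > 2k$, the standard highest-weight bound on summands of a tensor product already excludes $\lambda_l$. The skew shape $\nu_l / \mu_k$ then splits into (at most) three pieces: an $l$-box strip in row $1$, an $(\nmodes - 2) \times k$ rectangle in rows $2, \ldots, \nmodes - 1$, and a $(2k - l)$-box strip in row $\nmodes$.

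Enumerating LR fillings of this skew shape with content $(2k, k, \ldots, k, 0)$, subject to the column-strict, row-weak, and Yamanouchi (lattice word) conditions, is the heart of the argument. Thanks to the rectangular structure of $\mu_k$, the column-strict condition on the middle rectangle together with the available multiplicities of each letter forces the middle region to be filled in an essentially rigid way, leaving the top and bottom strips as the only genuine degrees of freedom. A case split according to whether $l \leq k$ or $l > k$ --- corresponding to whether the bottom strip is the longer or shorter one --- should yield the advertised multiplicities $l + 1$ and $2k - l + 1$, respectively. A useful sanity check is the dimension identity $\sum_l (\text{mult of } \lambda_l) \dim \lambda_l + \dim L = (\dim \lambda_k)^2$, which together with the explicit formula from \cref{prop:dimension_lambda_k} can be verified numerically for small $\nmodes$ and $k$.

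The main obstacle will be the combinatorial bookkeeping of the LR enumeration, in particular tracking the Yamanouchi condition on the reverse reading word across the three separated pieces of the skew shape. An arguably cleaner alternative, worth pursuing in parallel, is Klimyk's (Racah--Speiser) formula: the multiplicity of $V_\nu$ in $V_{\mu_k} \otimes V_{\mu_k}$ equals a signed sum of weight multiplicities of $\lambda_k$ at $\rho$-shifted points. Since weights of $\lambda_k$ are indexed by \ac{GT} patterns (cf.~\cref{sec:GTpattern}), this reduces the task to counting patterns whose $\rho$-shifts land in the Weyl orbit of $\lambda_l + \rho$, which circumvents the LR enumeration entirely and targets the $\lambda_l$-components directly.
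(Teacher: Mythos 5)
Your plan is in substance the same route the paper takes: both arguments are Littlewood--Richardson computations for $\lambda_k\otimes\lambda_k$ with $\lambda_k=(2k,k^{\nmodes-2},0)$. You phrase it in the skew-tableau form (counting LR fillings of $\nu_l/\mu_k$ with content $\mu_k$, where $\nu_l=(2k+l,(2k)^{\nmodes-2},2k-l)$ is the correct shift representative of $\lambda_l$ with $2|\mu_k|$ boxes), whereas the paper uses the equivalent box-attachment form (gluing the labelled rows $a_1,\dots,a_{\nmodes-1}$ of the second copy onto the first and counting admissible labelled diagrams of shape $\lambda_l$). Your preliminary reductions are correct: $l>2k$ is excluded by the highest-weight bound, the skew shape decomposes into a length-$l$ top strip, an $(\nmodes-2)\times k$ rectangle, and a length-$(2k-l)$ bottom strip, and the formula passes the sanity check $\lambda_1\otimes\lambda_1=\mathbf{8}\otimes\mathbf{8}$ for $\SU(3)$.

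The gap is that the enumeration itself --- which is the entire content of the lemma --- is asserted rather than carried out: you write that the case split $l\le k$ versus $l>k$ ``should yield'' multiplicities $l+1$ and $2k-l+1$ and explicitly defer the ``combinatorial bookkeeping.'' That bookkeeping is exactly where the paper's proof does its work, and it is not automatic. Two points in your sketch would need repair in the process. First, the top strip carries \emph{no} freedom (the lattice-word condition forces it to be all $1$'s), so the residual degrees of freedom do not sit in ``the top and bottom strips''; they sit jointly in the bottom strip and in those rectangle columns \emph{not} covered by the bottom strip. Second, the rectangle is rigid only in part: when $l<k$ the columns $k+1,\dots,2k-l$ acquire $\nmodes-1$ strictly increasing entries from the alphabet $\{1,\dots,\nmodes-1\}$ (no letter $\nmodes$ is available in the content) and are therefore forced to read $1,2,\dots,\nmodes-1$, but the remaining $\min(l,2k-l)$ columns each omit one letter, and it is the interplay of these omissions with the Yamanouchi condition and the leftover $2k-l$ ones that produces the count $\min(l,2k-l)+1$. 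Until that count is actually derived (by your LR route, by the Klimyk/Racah--Speiser alternative you mention, or otherwise), the proposal is a correct setup but not a proof.
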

Specifically, all irreps $\lambda_l$ in $\lambda_k^{\otimes 2}$ are computed by identifying all admissible ways of combining two copies of $\lambda_k$ to a fixed shape using Littlewood-Richardson's rules.
This result is proved in \cref{app:proof_lambda_two_copies}.

Hence, we derive an explicit expression for $\EE[f_{\lambda_k}^2]$ using \cref{eq:orthogonality_relations}:
\begin{theorem}
	\label{thm:second_moment_filter_function_pnr}
	For a \ac{PNR} measurement setting, 
	$\rho = \ketbra{\ninput}{\ninput}$ as input state, and an \ac{irrep} $\lambda_k$ of $\omega_n^\nmodes$, the following holds:
	\begin{equation}
		\begin{aligned}
			\EE[f_{\lambda_k}^2]
			=
			\frac{1}{s_{\lambda_k}^2} (-1)^{\varphi(\ninputGT)} \sum_{N \in \GT{\tau_n^\nmodes}} (-1)^{\varphi(N)} g_k(N, \ninputGT) \, ,
		\end{aligned}
	\end{equation}
	where $g_{k}(N, \ninputGT)$ is a function of Clebsch-Gordan coefficients of the representations $\tau_n^\nmodes \otimes \bar \tau_n^\nmodes$ and $\lambda_k^{\otimes 2}$ given by
	\begin{equation}
			g_{k}(N, \ninputGT)
			=
			\sum_{l=0}^{\min(n, 2k)} \frac{1}{d_{\lambda_l}} \sum_{r=1}^{m_l}
			\sum_{\substack{M, M' \in \GT{\lambda_k} \\ L, L' \in \GT{\lambda_k} \\ R, R' \in \GT{\lambda_l}}}
			C_{\ninputGT, \ninputGTdual}^{M}
			C_{\ninputGT, \ninputGTdual}^{M'}
			C_{\ninputGT, \ninputGTdual}^{R} 
			C_{N, \bar N}^{L} C_{N, \bar N}^{L'}
			C_{N, \bar N}^{R'}
			C_{M, M'}^{R, r}
			C_{L, L'}^{R', r}
	\end{equation}
	where $m_l$ is the multiplicity of $\lambda_l$ in $\lambda_k^{\otimes 2}$ as in \cref{lem:lambda_two_copies_decomp} and $d_{\lambda_l} \equiv \dim \lambda_l$.
\end{theorem}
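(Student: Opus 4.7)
The plan is to reduce the ideal second moment to a Haar integral of products of matrix coefficients, then apply Schur's orthogonality relations \cref{eq:orthogonality_relations} to evaluate the integral. First, I would write
\begin{equation*}
  \EE[f_{\lambda_k}^2] = \frac{1}{s_{\lambda_k}^2} \sum_{\vec n \in \H_n^\nmodes} \int dg \, p(\vec n \mid g) \, f_{\lambda_k}(\vec n, g)^2,
\end{equation*}
with the ideal Born probability $p(\vec n \mid g) = \sandwich{\vec n, \vec n}{(\tau_n^\nmodes \otimes \bar \tau_n^\nmodes)(g)}{\ninput, \ninput}$. The sum over $\vec n$ is restricted to the $n$-particle subspace since the ideal dynamics $\omega_n^\nmodes(g)$ preserves particle number. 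Relabeling the second tensor factor with dual \ac{GT} patterns via the phase relation \cref{eq:conjugation_relative_phase} then yields
\begin{equation*}
  p(\vec n \mid g) = (-1)^{\varphi(\ninputGT) + \varphi(N)} \sandwich{N, \bar N}{(\tau_n^\nmodes \otimes \bar \tau_n^\nmodes)(g)}{\ninputGT, \ninputGTdual}.
\end{equation*}

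Second, I would substitute the explicit form of $f_{\lambda_k}$ from \cref{cor:filter-function-matrix-coefficients} and square it. The resulting phase factors square to $+1$ since $\varphi$ is integer-valued, and the product of two matrix coefficients of $\lambda_k$ combines into a single matrix coefficient of $\lambda_k \otimes \lambda_k$,
\begin{equation*}
  \sandwich{M_1}{\lambda_k(g)^\ad}{M_1'}\sandwich{M_2}{\lambda_k(g)^\ad}{M_2'} = \sandwich{M_1, M_2}{(\lambda_k \otimes \lambda_k)(g)^\ad}{M_1', M_2'}.
\end{equation*}
Expanding this via the \ac{CG} decomposition of $\lambda_k \otimes \lambda_k$ from \cref{lem:lambda_two_copies_decomp} introduces the multiplicity index $r \in [m_l]$, and applying \cref{lem:sandwich_no_projector} to the probability factor expands the $(\tau_n^\nmodes \otimes \bar \tau_n^\nmodes)$ matrix coefficient in the \ac{CG} basis of $\omega_n^\nmodes = \bigoplus_{j=0}^n \lambda_j$.

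Third, I would evaluate the resulting Haar integral of a product of two matrix coefficients via \cref{eq:orthogonality_relations}. This forces $\lambda_j = \lambda_l$, so only the irreps common to both decompositions survive, restricting the summation to $l \in \{0, \dots, \min(n, 2k)\}$; within each surviving $\lambda_l$, the orthogonality relation collapses the \ac{GT} basis sums to Kronecker deltas that pair the $\tau_n^\nmodes \otimes \bar \tau_n^\nmodes$ basis indices with those of each multiplicity copy on the $\lambda_k \otimes \lambda_k$ side, producing the $1/d_{\lambda_l}$ prefactor. Factoring $(-1)^{\varphi(\ninputGT)}$ out of the outer sum over $N$ and relabeling the remaining dummy indices $(M_1, M_2, M_1', M_2') \mapsto (M, M', L, L')$ and the surviving \ac{GT} patterns in $\lambda_l$ then reproduces the stated expression for $g_k(N, \ninputGT)$.

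The main obstacle is the careful bookkeeping of the multiplicity index $r$ arising from \cref{lem:lambda_two_copies_decomp}: Schur's orthogonality acts independently on each copy of $\lambda_l \subset \lambda_k \otimes \lambda_k$, so the correspondence between these copies and the unique copy of $\lambda_l \subset \tau_n^\nmodes \otimes \bar \tau_n^\nmodes$ must be tracked consistently in the chosen \ac{CG} basis. Beyond this, the proof amounts to careful index manipulation and to tracing the phases from \cref{eq:conjugation_relative_phase} through each step, which ultimately consolidate into the overall factor $(-1)^{\varphi(\ninputGT) + \varphi(N)}$ that appears in the theorem statement.
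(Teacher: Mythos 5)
Your proposal is correct and follows essentially the same route as the paper's proof: reduce to a Haar integral, expand the squared projector term via \cref{lem:sandwich_projector} (equivalently \cref{cor:filter-function-matrix-coefficients}) and the probability factor via \cref{lem:sandwich_no_projector}, decompose $\lambda_k^{\otimes 2}$ using \cref{lem:lambda_two_copies_decomp}, and apply Schur orthogonality to force $l=j\leq\min(n,2k)$ with the $1/d_{\lambda_l}$ prefactor and the Kronecker-delta index pairings. The phase bookkeeping you describe (squared phases cancelling, leaving $(-1)^{\varphi(\ninputGT)+\varphi(N)}$ from the Born-probability factor) also matches the paper.
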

\begin{proof}
	For any \ac{irrep} $\lambda_k \in \hat \omega_n^\nmodes$, and by relabeling the second entries as basis elements of the dual \ac{irrep} $\bar \tau_n^\nmodes$, the second moment can be expressed as follows:
	\begin{equation}
		\begin{aligned}
			\EE[f_{\lambda_k}^2]
			&\coloneqq
			\frac{1}{s_{\lambda_k}^2}
			\sum_{\vec n \in \H_n^\nmodes} \int dg\, \sandwich{\ninput, \ninput}{P_{\lambda_k} (\tau_n^\nmodes \otimes \bar \tau_n^\nmodes)(g)^\ad}{\vec n, \vec n}^2 \sandwich{\vec n, \vec n}{\tau_n^\nmodes \otimes \tau_n^\nmodes(g)}{\ninput, \ninput} \\
			&=
			\frac{1}{s_{\lambda_k}^2}
			\sum_{N \in \GT{\tau_n^\nmodes}} \int dg\, 
			\sandwich{\ninputGT, \ninputGT}{P_{\lambda_k} (\tau_n^\nmodes \otimes \bar \tau_n^\nmodes)(g)^\ad}{N, N}^2
			\sandwich{N, N}{\tau_n^\nmodes \otimes \tau_n^\nmodes(g)}{\ninputGT, \ninputGT} \\
			&=
			\frac{1}{s_{\lambda_k}^2}
			\sum_{N \in \GT{\tau_n^\nmodes}} (-1)^{\varphi(\ninputGT) + \varphi(N)} \int dg\, 
			\sandwich{\ninputGT, \ninputGTdual}{P_{\lambda_k} (\tau_n^\nmodes \otimes \bar \tau_n^\nmodes)(g)^\ad}{N, \bar N}^2 \\
			&\times
			\sandwich{N, \bar N}{\tau_n^\nmodes \otimes \tau_n^\nmodes(g)}{\ninputGT, \ninputGTdual} \\
			&=
			\frac{1}{s_{\lambda_k}^2} (-1)^{\varphi(\ninputGT)} \sum_{N \in \GT{\tau_n^\nmodes}} (-1)^{\varphi(N)} g_{k}(N, \ninputGT) \, ,
		\end{aligned}
	\end{equation}
	where
	\begin{equation}
		g_{k}(N, \ninputGT) \equiv \int dg\, 
		\sandwich{\ninputGT, \ninputGTdual}{P_{\lambda_k} (\tau_n^\nmodes \otimes \bar \tau_n^\nmodes)(g)^\ad}{N, \bar N}^2 \sandwich{N, \bar N}{\tau_n^\nmodes \otimes \tau_n^\nmodes(g)}{\ninputGT, \ninputGTdual} \, .
	\end{equation}
	By \cref{lem:sandwich_no_projector,lem:sandwich_projector}, we obtain
	\begin{equation}
		\begin{aligned}
			g_{k,l}(N, \ninputGT)
			&=
			\sum_{M, M' \in \GT{\lambda_k}} C_{\ninputGT, \ninputGTdual}^{M} C_{\ninputGT, \ninputGTdual}^{M'}
			\sum_{L, L' \in \GT{\lambda_k}} C_{N, \bar N}^{L} C_{N, \bar N}^{L'}
			\sum_{j=0}^n \sum_{J, J' \in \GT{\lambda_j}} C_{N, \bar N}^{J} C_{\ninputGT, \ninputGTdual}^{J'} \\
			&\times
			\underbrace{\int dg\, \sandwich{M, M'}{\lambda_k(g)^{\ad \otimes 2}}{L, L'} \sandwich{J}{\lambda_j(g)}{J'}}_{\equiv I}
		\end{aligned}
	\end{equation}
	We compute the integral with Schur's orthogonality relations.
	Specifically, this requires the \ac{irrep} decomposition of $\lambda_k^{\otimes 2}$:
	By \cref{lem:lambda_two_copies_decomp}, we have
	\begin{equation}
		\begin{aligned}
			\sandwich{M, M'}{\lambda_k(g)^{\ad \otimes 2}}{L, L'}
			&=
			\sandwich{M, M'}{\bigoplus_{l=0}^{2k} \lambda_l^{\oplus m_l}(g)^{\ad}}{L, L'} \, ,
		\end{aligned}
	\end{equation}
	where $m_l \in \{0, 1, \dots, k\}$ is the multiplicity of $\lambda_l$ in $\lambda_k^{\otimes 2}$ worked out in \cref{lem:lambda_two_copies_decomp}.
	Then, consider the following Clebsch-Gordan decompositions:
	\begin{equation}
		\ket{M, M'} = \sum_{i=0}^{2k} \sum_{r_i=1}^{m_i} \sum_{R \in \GT{\lambda_i}} C_{M, M'}^{R, r_i} \ket{R, r_i} \, , \quad
		\ket{L, L'} = \sum_{h=0}^{2k} \sum_{r_h=1}^{m_h} \sum_{R' \in \GT{\lambda_h}} C_{L, L'}^{R', r_h} \ket{R', r_h} \, ,
	\end{equation}
	where $r_i, r_h$ denote the $r_i$-th and $r_h$-th copies of $\lambda_i$ and $\lambda_h$ in $\lambda_k^{\otimes 2}$, respectively.
	By orthogonality of irreps, it follows
	\begin{equation}
		\sandwich{M, M'}{\lambda_k(g)^{\ad \otimes 2}}{L, L'}
		=
		\sum_{l=0}^{2k} \sum_{r=1}^{m_l} \sum_{R, R' \in \GT{\lambda_l}} C_{M, M'}^{R, r} C_{L, L'}^{R', r} \,
		\sandwich{R, r}{\lambda_l^{(r)}(g)^\ad}{R', r} \, .
	\end{equation}
	By Schur's orthogonality relations, this implies that the only non trivial contributions in $I$ are associated with irreps $\lambda_l$ which appears in the intersection of the sets of irreps of $\tau_n^\nmodes \otimes \bar \tau_n^\nmodes$ and $\lambda_k \otimes \lambda_k$, i.e. $j=l$ provided that $\lambda_l$ appears in both decomposition.
	More specifically,
	\begin{equation}
		\begin{aligned}
			I
			&=
			\sum_{l=0}^{2k} \sum_{r=1}^{m_l} \sum_{R, R' \in \GT{\lambda_l}} C_{M, M'}^{R, r} C_{L, L'}^{R', r} 
			\int dg\, \sandwich{R, r}{\lambda_l^{(r)}(g)^\ad}{R', r} \sandwich{J}{\lambda_j(g)}{J'} \\
			&=
			\sum_{l=0}^{2k} \delta_{l, j} \sum_{r=1}^{m_l} \sum_{R, R' \in \GT{\lambda_l}} C_{M, M'}^{R, r} C_{L, L'}^{R', r} 
			\int dg\, \sandwich{R, r}{\lambda_l^{(r)}(g)^\ad}{R', r} \sandwich{J}{\lambda_l(g)}{J'} \\
			&=
			\sum_{l=0}^{2k} \frac{1}{d_{\lambda_l}} \delta_{l, j} \sum_{r=1}^{m_l} \sum_{R, R' \in \GT{\lambda_l}} C_{M, M'}^{R, r} C_{L, L'}^{R', r} 
			\delta_{R, J'} \delta_{R', J} \, .
		\end{aligned}
	\end{equation}
	Therefore, we have
	\begin{equation}
		\begin{aligned}
			g_{k}(N, \ninputGT)
			&=
			\sum_{M, M' \in \GT{\lambda_k}} C_{\ninputGT, \ninputGTdual}^{M} C_{\ninputGT, \ninputGTdual}^{M'}
			\sum_{L, L' \in \GT{\lambda_k}} C_{N, \bar N}^{L} C_{N, \bar N}^{L'}
			\sum_{j=0}^n \sum_{J, J' \in \GT{\lambda_j}} C_{N, \bar N}^{J} C_{\ninputGT, \ninputGTdual}^{J'} \\
			&\times
			\sum_{l=0}^{2k} \frac{1}{d_{\lambda_l}} \delta_{l, j} \sum_{r=1}^{m_l} \sum_{R, R' \in \GT{\lambda_l}} C_{M, M'}^{R, r} C_{L, L'}^{R', r} \delta_{R, J'} \delta_{R', J} \\
			&=
			\sum_{l=0}^{\min(n,2k)} \frac{1}{d_{\lambda_l}} \sum_{r=1}^{m_l}
			\sum_{M, M' \in \GT{\lambda_k}} C_{\ninputGT, \ninputGTdual}^{M} C_{\ninputGT, \ninputGTdual}^{M'}
			\sum_{L, L' \in \GT{\lambda_k}} C_{N, \bar N}^{L} C_{N, \bar N}^{L'} \\
			&\times
			\sum_{R, R' \in \GT{\lambda_l}} C_{\ninputGT, \ninputGTdual}^{R} C_{N, \bar N}^{R'} C_{M, M'}^{R, r} C_{L, L'}^{R', r} \, ,
		\end{aligned}
	\end{equation}
	from which the assertion follows.
\end{proof}

\subsection{A worked out example}
\label{sec:su2_example_PNR}

In the case of $2$ modes systems, Clebsch-Gordan coefficients reduce to the usual ones, and the analysis of the filter function and its moments drastically simplifies.
In this section, we show explicit expressions for such a case, which will highlight some technicalities implicit in the general case of $\SU(\nmodes)$.

In the $\SU(2)$ case, it is convenient to switch from the bosonic realization of the $\SU(2)$ algebra to its spin realization, where Clebsch-Gordan coefficients are naturally introduced.
This task is accomplished by the Jordan-Schwinger map \cite{anielloExploringRepresentationTheory2006}:
For given annihilation operators $a_1, a_2$ acting on a $2$ mode system and satisfying the \acp{CCR}, the Jordan-Schwinger map is such that
\begin{align}
	J_1 \coloneqq \frac{1}{2} \left( a_2^\ad a_1 + a_1^\ad a_2 \right) \, , \quad
	J_2 \coloneqq \frac{1}{2} \left( a_2^\ad a_1 - a_1^\ad a_2 \right) \, , \quad
	J_3 \coloneqq \frac{1}{2} \left(a_1^\ad a_1 - a_2^\ad a_2 \right) \, , 
\end{align}
where $[J_i, J_j] = i \epsilon_{ijk} J_k$, $\epsilon$ is the Levi-Civita's pseudo-tensor, and
\begin{equation}
	J^2 = J_1^2 + J_2^2 + J_3^2 = \frac{n}{2} \left( \frac{n}{2} + 1 \right) \, , \quad n = n_1 + n_2 \, , \, n_i = a_i^\ad a_i \, .
\end{equation}
This implies the normalized states $\ket{n_1, n_2}$ correspond to the eigenstates $\ket{j m}$ of $J^2$ and $J_3$, with the identification \cite{mathurSUIrreducibleSchwinger2010, mathurInvariantsProjectionOperators2011}
\begin{equation}
	n_1 = j + m \, , \quad n_2 = j - m \, ,
\end{equation}
hence, in this section, we will consider an input state $\rho = \ketbra{j m}{j m}$ and the Fock state \ac{POVM} becomes $\{\ketbra{j' m'}{j' m'}\}$, where $j' \in \frac{1}{2} \NN$ and $m' = -j', \dots, j'$.
A spin state $\ket{j m}$ and its dual are identified by the \ac{GT} patterns
\begin{equation}
	\begin{pmatrix}
		2j && 0 \\
		&j+m
	\end{pmatrix}
	\, , \quad
	\begin{pmatrix}
		2j && 0 \\
		&j-m
	\end{pmatrix} \, ,
\end{equation}
respectively, which implies the following relation:
\begin{equation}
	\ket{j m} = (-1)^{j-m} \ket{j -m} \, .
\end{equation}
Moreover, given any irrep $\lambda_J$ of $\SU(2)$, the following relations hold:
\begin{equation}
	P_J = \sum_{M = -J}^{J} \ketbra{J M}{J M} \, , \quad s_J = \frac{1}{2J+1} \, .
\end{equation}
In particular, the expression for $s_J$ follows from \cref{eq:frameop_pnr}, the fact that the inner multiplicities of $\SU(2)$ basis vectors are $1$ (or, equivalently, each weight is uniquely associated with a unique weight vector).

In this case, with the identification $\ket{x_1, x_2} \mapsto \ket{j \, l}$, \cref{eq:filter_function_PNR} becomes
\begin{equation}
	f_J(l, g) = \frac{1}{2J+1} (-1)^{2j-m-l} C_{jm, j-m}^{J 0} C_{j l, j -l}^{J 0} \sandwich{J 0}{\lambda_J(g)^\ad}{J 0} \, ,
\end{equation}
or, equivalently, it can be expressed as (Cf. \cref{eq:filter_function_PNR_alternative})
\begin{equation}
	\begin{aligned}
		f_J(l, g) 
		&=
		\frac{(-1)^{2j-m}}{2J+1} C_{jm,j-m}^{J 0} \sum_{m' = -j}^{j} (-1)^{-m'} C_{jm',j-m'}^{J 0} \sandwich{jm', j-m'}{(\tau_n^2 \otimes \bar \tau_n^2)(g)^\ad}{x_1, x_2} \\
		&=
		\frac{(-1)^{2j-m}}{2J+1} C_{jm,j-m}^{J 0} \sum_{m' = -j}^{j} (-1)^{-m'} C_{jm',j-m'}^{J 0}
		\abs{\sandwich{x_1, x_2}{\tau_n^2(g)}{n'_1, n'_2}}^2 \\
		&=
		\frac{1}{2J+1} \frac{(-1)^{2j-m}}{x_1! x_2!} C_{jm,j-m}^{J 0} \sum_{m' = -j}^{j} \frac{(-1)^{-m'}}{n'_1! n'_2!} C_{jm',j-m'}^{J 0}
		\abs{\per(g_{(n'_1,n'_2),(x_1,x_2)})}^2 \, ,
	\end{aligned}
\end{equation}
where we set $\ket{j m'} = \ket{n'_1, n'_2}$ by the inverse Jordan-Schwinger map.

The second moment expression also simplifies significantly.
First, notice that, for a given representation $\lambda_J \otimes \lambda_J$, each $\lambda_K$, with $K \in \{0, \dots, 2J\}$, is multiplicity free as all such irreps are clearly maximally symmetric.
This implies the decomposition of $\lambda_J^{\otimes 2}$ is formally the same as the one of $\tau_n^2 \otimes \bar \tau_n^2$, i.e.
\begin{equation}
	\begin{tikzpicture}
		[scale=.5,baseline={([yshift=1ex]current bounding box.center)}]
		\matrix (m) [matrix of math nodes,
		nodes={draw, minimum size=3mm, anchor=center},
		column sep=-\pgflinewidth,
		row sep=-\pgflinewidth
		]
		{
			\ & |[draw=none]|\dots & \ \\
		};
		\draw[BC] (m-1-3.south) -- node[below=2mm] {$K$} (m-1-1.south);
	\end{tikzpicture}
	\otimes
	\begin{tikzpicture}
		[scale=.5,baseline={([yshift=1ex]current bounding box.center)}]
		\matrix (m) [matrix of math nodes,
		nodes={draw, minimum size=3mm, anchor=center},
		column sep=-\pgflinewidth,
		row sep=-\pgflinewidth
		]
		{
			\ & |[draw=none]|\dots & \ \\
		};
		\draw[BC] (m-1-3.south) -- node[below=2mm] {$K$} (m-1-1.south);
	\end{tikzpicture}
	=
	1 \oplus
	\ydiagram{2} \oplus \ydiagram{4} \oplus \dots \oplus
	\begin{tikzpicture}
		[scale=.5,baseline={([yshift=1ex]current bounding box.center)}]
		\matrix (m) [matrix of math nodes,
		nodes={draw, minimum size=3mm, anchor=center},
		column sep=-\pgflinewidth,
		row sep=-\pgflinewidth
		]
		{
			\ & |[draw=none]|\dots & \ \\
		};
		\draw[BC] (m-1-3.south) -- node[below=2mm] {$2K$} (m-1-1.south);
	\end{tikzpicture}
\end{equation}
and the second moment expression of \cref{thm:second_moment_filter_function_pnr} simplifies to
\begin{equation}
	\begin{aligned}
		\EE[f_J^2]
		&=
		\frac{1}{s_J^2} (-1)^{2j} \abs{C_{jm, j-m}^{J 0}}^2
		\sum_{l = -j}^{j} (-1)^{m+l} \abs{C_{jl, j-l}^{J 0}}^2 \sum_{K=0}^{2 \min(J,j)} \frac{1}{2K+1} C_{jl,j-l}^{K 0} C_{jm,j-m}^{K 0} \abs{ C_{J0,J0}^{K 0} }^2 \, .
	\end{aligned}
\end{equation}

\newpage

\section*{Appendices}
\appendix
\setcounter{equation}{0}
\setcounter{figure}{0}
\setcounter{table}{0}
\setcounter{section}{0}
\setcounter{theorem}{0}
\makeatletter
\renewcommand{\theequation}{\thesection\arabic{equation}}
\renewcommand{\thefigure}{\thesection\arabic{figure}}
\renewcommand{\thetheorem}{\thesection\arabic{theorem}}

\section{Littlewood-Richardson's rules}
\label{app:littlewood_richardson}

In this section, summarize Littlewood-Richardson's rules for the decomposition into irreps of the tensor product of two irreps of $\SU(\nmodes)$.
For more details, see for instance \cite[Sec.~C.3]{sternbergGroupTheoryPhysics1994}.
In particular, let us consider the unitary \acp{irrep} $\pi_{\lambda_1}, \pi_{\lambda_2}$ of $\SU(\nmodes)$ associated with Young diagrams $\lambda_1, \lambda_2$.
Then, the representation
\begin{equation}
	\label{eq:tensor_product_diag_repr}
	\pi_{\lambda_1} \otimes \pi_{\lambda_2} : \SU(\nmodes) \ni g \mapsto \pi_{\lambda_1}(g) \otimes \pi_{\lambda_2}(g) \in \U(\H_{\lambda_1} \otimes \H_{\lambda_2})
\end{equation}
is in general reducible (in particular, it is completely reducible, since $\SU(\nmodes)$ is compact).

For instance, in standard \ac{RB} \cite{heinrichRandomizedBenchmarkingRandom2023}, one is interested in the \ac{irrep} $U \otimes \bar U$, where $U: \SU(\nmodes) \rightarrow \U(\nmodes)$ is the defining representation and $\bar U$ its dual.
Diagrammatically, they correspond to
\begin{equation}
	\lambda_U = 
	\begin{tikzpicture}
		[scale=.5,baseline={([yshift=-.5ex]current bounding box.center)}]
		\matrix (m) [matrix of math nodes,
		nodes={draw, minimum size=4mm, anchor=center},
		column sep=-\pgflinewidth,
		row sep=-\pgflinewidth
		]
		{
			\ \\
		};
	\end{tikzpicture} \, , \quad
	\lambda_{\bar U} = 
	\begin{tikzpicture}
		[scale=.5,baseline={([yshift=-.5ex]current bounding box.center)}]
		\matrix (m) [matrix of math nodes,
		nodes={draw, minimum size=4mm, anchor=center},
		column sep=-\pgflinewidth,
		row sep=-\pgflinewidth
		]
		{
			\ \\
			|[draw=none]|\vdots \\
			\ \\
		};
		\draw[BC] (m-1-1.east) -- node[right=2mm] {$\nmodes-1$} (m-3-1.east);
	\end{tikzpicture} \, ,
\end{equation}
It is well known that
\begin{equation}
	U \otimes \bar{U} = 1 \oplus \Ad ,
\end{equation}
where $1$ denotes the trivial \ac{irrep} and $\Ad: \SU(D) \ni g \mapsto \Ad_g \in \mathrm{Aut}(\mathfrak{su}(D))$ is the adjoint representation.
Roughly speaking, the decomposition is achieved by combining the two Young diagrams in all possible ways, and summing up the results.
In this case, there are only two possibilities that realize legal Young diagrams: 
$\lambda_U$ can be attached on the right of the top row of $\lambda_{\bar U}$, or on the bottom of the column, i.e.
\begin{equation}
	\label{def:conjugate_action_defining_irrep}
	\begin{tikzpicture}
		[scale=.5,baseline={([yshift=-.5ex]current bounding box.center)}]
		\matrix (m) [matrix of math nodes,
		nodes={draw, minimum size=4mm, anchor=center},
		column sep=-\pgflinewidth,
		row sep=-\pgflinewidth
		]
		{
			\ \\
			|[draw=none]|\vdots \\
			\ \\
		};
		\draw[BC] (m-3-1.west) -- node[left=2mm] {$\nmodes-1$} (m-1-1.west);
	\end{tikzpicture}
	\otimes
	\begin{tikzpicture}
		[scale=.5,baseline={([yshift=-.5ex]current bounding box.center)}]
		\matrix (m) [matrix of math nodes,
		nodes={draw, minimum size=4mm, anchor=center},
		column sep=-\pgflinewidth,
		row sep=-\pgflinewidth
		]
		{
			\ \\
		};
	\end{tikzpicture}
	=
	\begin{tikzpicture}
		[scale=.5,baseline={([yshift=-.5ex]current bounding box.center)}]
		\matrix (m) [matrix of math nodes,
		nodes={draw, minimum size=4mm, anchor=center},
		column sep=-\pgflinewidth,
		row sep=-\pgflinewidth
		]
		{
			\ \\
			|[draw=none]|\vdots \\
			\ \\
		};
		\draw[BC] (m-1-1.east) -- node[right=2mm] {$\nmodes$} (m-3-1.east);
	\end{tikzpicture}
	\oplus
	\begin{tikzpicture}
		[scale=.5,baseline={([yshift=-.5ex]current bounding box.center)}]
		\matrix (m) [matrix of math nodes,
		nodes={draw, minimum size=4mm, anchor=center},
		column sep=-\pgflinewidth,
		row sep=-\pgflinewidth
		]
		{
			\ & \ \\
			\ \\
			|[draw=none]|\vdots \\
			\ \\
		};
		\draw[BC] (m-2-1.east) -- node[right=2mm] {$\nmodes-2$} (m-4-1.east);
	\end{tikzpicture} \, .
\end{equation}
The first diagram on the r.h.s. is equivalent to the diagram with no boxes associated with the trivial \ac{irrep}, while the second Young diagram identifies the adjoint representation acting on traceless matrices.

In general, \emph{Littlewood-Richardson's rules} can be used to decompose the tensor product of two arbitrary irreps \cite{fultonYoungTableauxApplications1996, sternbergGroupTheoryPhysics1994, difrancescoConformalFieldTheory1997}.
To spell out such rules, let us consider two Young diagrams $\lambda_1, \lambda_2$ associated with \acp{irrep} of $\SU(\nmodes)$.
The tensor product representation $\lambda_1 \otimes \lambda_2$ can be evaluated algorithmically as follows \cite[Sec.~13.5.3]{difrancescoConformalFieldTheory1997} (or also \cite[Sec. C.3]{sternbergGroupTheoryPhysics1994}):
\begin{enumerate}
	\item Assign distinct labels to boxes in each row of the Young diagram $\lambda_2$.
	For instance, the boxes in the first row will be labeled by `a', the boxes in the second row by `b' and so on.
	\item Attach boxes labeled by $a$ to $\lambda_1$ in all possible ways such that no two $a$'s appear in the same column, and the result is still a proper Young diagram.
	\item Repeat the steps above for all rows of $\lambda_2$.
	\item Elimination rule:
	For each box, assign numbers $n_a = $ number of $a$'s above and right to it, $n_b = $ number of $b$'s above and right to it, and so on.
	If, at any point, the relations $n_a \geq n_b \geq n_c \geq n_d \geq \dots$ are not satisfied, discard this diagram.
	\item Merging rule.
	If two diagrams are the same, then they are counted as the same if the labels are the same.
	Otherwise, they refer to distinct \acp{irrep}.
	\item Cancel columns with $\nmodes$ boxes (since they correspond to constant shifts of the highest weight vector).
	\item Remove all the labels after the cancellation and the merging steps.
\end{enumerate}
\begin{example}
	Let us consider the following diagrams:
	\begin{equation}
		\lambda_1 =
		\ydiagram{2,1} \, , \quad
		\lambda_2 = 
		\ydiagram{2,1} \, .
	\end{equation}
	Assigning labels to $\lambda_2$ as in rule 1, and using the second and third rules, we get
	\begin{equation}
		\begin{aligned}
			\ydiagram{2,1} \otimes
			\begin{ytableau}
				a & a \\
				b
			\end{ytableau}
			& = 
			\left(
			\ytableaushort{\none \none a, \none}*{3,1}
			\oplus
			\ytableaushort{\none \none, \none a}*{2,2} 
			\oplus
			\ytableaushort{\none, \none, a}*{2,1,1}
			\right) \otimes
			\ytableaushort{a,b}*{1,1} \\
			& = 
			\left(
			\ytableaushort{\none \none a a, \none}*{4,1}
			\oplus
			\ytableaushort{\none \none a, \none a}*{3,2}
			\oplus
			\ytableaushort{\none \none a, \none, a}*{3,1,1}
			\right) \otimes \ytableaushort{b}*{1} \\
			& \oplus
			\left(
			\ytableaushort{\none \none a, \none a}*{3,2}
			\oplus
			\ytableaushort{\none \none, \none a, a}*{2,2,1}
			\oplus
			\ytableaushort{\none \none a, \none, a}*{3,1,1}
			\oplus
			\ytableaushort{\none \none, \none a, a}*{2,2,1}
			\right) \otimes \ytableaushort{b}*{1} \\
			& = 
			\ytableaushort{\none \none a a b, \none}*{5,1}
			\oplus
			\ytableaushort{\none \none a a, \none b}*{4,2}
			\oplus
			\ytableaushort{\none \none a a, \none, b}*{4,1,1} \\
			& \oplus
			\ytableaushort{\none \none a b, \none a}*{4,2}
			\oplus
			\ytableaushort{\none \none a, \none a b}*{3,3}
			\oplus
			\ytableaushort{\none \none a, \none a, b}*{3,2,1} \\
			& \oplus
			\ytableaushort{\none \none a b, \none, a}*{4,1,1}
			\oplus
			\ytableaushort{\none \none a, \none b, a}*{3,2,1}
			\oplus
			\ytableaushort{\none \none a, \none, a, b}*{3,1,1,1} \\
			& \oplus
			\ytableaushort{\none \none b, \none a, a}*{3, 2, 1}
			\oplus
			\ytableaushort{\none \none, \none a, a b}*{2,2,2}
			\oplus
			\ytableaushort{\none \none, \none a, a, b}*{2,2,1,1} \, . 
		\end{aligned}
	\end{equation}
	In the second step we got few equivalent diagrams with labels in the same positions, hence they have been merged according to the merging rule.
	Moreover, we ignored the diagrams with two $a$'s in the same column, in agreement with the symmetric constraint.
	
	By the elimination rule, all the diagrams with a $b$ box attached on the top right shall be eliminated, yielding the following decomposition:
	\begin{equation}
		\begin{aligned}
			\ydiagram{2,1} \otimes
			\begin{ytableau}
				a & a \\
				b
			\end{ytableau}
			& = 
			\ytableaushort{\none \none a a, \none b}*{4,2}
			\oplus
			\ytableaushort{\none \none a a, \none, b}*{4,1,1}
			\oplus
			\ytableaushort{\none \none a, \none a b}*{3,3}
			\oplus
			\ytableaushort{\none \none a, \none a, b}*{3,2,1} \\
			& \oplus
			\ytableaushort{\none \none a, \none b, a}*{3,2,1}
			\oplus
			\ytableaushort{\none \none a, \none, a, b}*{3,1,1,1}
			\oplus
			\ytableaushort{\none \none, \none a, a b}*{2,2,2}
			\oplus
			\ytableaushort{\none \none, \none a, a, b}*{2,2,1,1} \, . 
		\end{aligned}
	\end{equation}
	Finally, suppose for instance that these diagrams are associated with $\SU(3)$ \acp{irrep}.
	Then, all columns with three boxes can be omitted, while any diagram with more than $3$ boxes in a column is not allowed.
	This yields the following decomposition:
	\begin{equation}
		\begin{aligned}
			\ydiagram{2,1} \otimes
			\begin{ytableau}
				a & a \\
				b
			\end{ytableau}
			& = 
			\ytableaushort{\none \none a a, \none b}*{4,2}
			\oplus
			\ytableaushort{\none \none a, \none a b}*{3,3}
			\oplus
			\ytableaushort{\none a a}*{3}
			\oplus
			\ytableaushort{\none a, a}*{2,1}
			\oplus
			\ytableaushort{\none a, b}*{2,1}
			\oplus
			1 \, .
		\end{aligned}
	\end{equation}
	In the latter, notice that the diagram $\ydiagram{2,1}$ appears with multiplicity $2$ in the decomposition, as different labels are assigned to the two copies.
	
	In the high energy literature \cite{difrancescoConformalFieldTheory1997}, this decomposition is also written in terms of the dimension of the \ac{irrep} associated with each Young diagram as
	\begin{equation}
		\mathbf{
			8 \otimes \bar 8 = 27 \oplus 10 \oplus 10' \oplus 8 \oplus 8 \oplus 1
		} \, .
	\end{equation}
	Here, $\mathbf{10}, \, \mathbf{10'}$ indicates that the two irreps are inequivalent, while repeated dimensions denote the same irrep appears with a non-trivial multiplicity.
\end{example}

\section{Proof of Proposition \ref{prop:dimension_lambda_k}}
\label{app:proof_dim_irrep}

For convenience, we state again the proposition:
\begin{proposition}[restatement of \cref{prop:dimension_lambda_k}]
	For any $k \in \NN$, set $d_{\lambda_k} \equiv \dim \lambda_k$.
	\begin{equation}
		d_{\lambda_k} = \left( 1 - \frac{k^2}{(k+\nmodes-1)^2} \right) \left( \dim \H_k^\nmodes \right)^2 \, .
	\end{equation}
\end{proposition}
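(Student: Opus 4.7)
The plan is to exploit the recursive structure already encoded in Lemma~\ref{lem:decomposition_conjugate_action} rather than to evaluate the Weyl dimension formula \eqref{eq:dim_irrep_SU(N)} on the shape \eqref{eq:lambda_diagram} directly.

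First, I would apply Lemma~\ref{lem:decomposition_conjugate_action} at particle number $k$ and at $k-1$ to obtain
\[
\tau_k^\nmodes\otimes\bar\tau_k^\nmodes\simeq\bigoplus_{j=0}^{k}\lambda_j\,,\qquad
\tau_{k-1}^\nmodes\otimes\bar\tau_{k-1}^\nmodes\simeq\bigoplus_{j=0}^{k-1}\lambda_j\,.
\]
Taking dimensions and using $\dim(\tau_k^\nmodes\otimes\bar\tau_k^\nmodes)=(\dim\H_k^\nmodes)^2$, the two identities telescope to
\[
d_{\lambda_k}=(\dim\H_k^\nmodes)^2-(\dim\H_{k-1}^\nmodes)^2 \,,
\]
valid for $k\geq 1$ (and trivially for $k=0$, since $d_{\lambda_0}=1=(\dim\H_0^\nmodes)^2$).

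Second, I would invoke the elementary identity
\[
\dim\H_{k-1}^\nmodes=\binom{k+\nmodes-2}{k-1}=\frac{k}{k+\nmodes-1}\binom{k+\nmodes-1}{k}=\frac{k}{k+\nmodes-1}\dim\H_k^\nmodes\,,
\]
and factor $(\dim\H_k^\nmodes)^2$ out of the right-hand side above. This immediately yields the first displayed form of the proposition.

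Finally, for the alternative binomial form in \eqref{eq:dimension_lambda_k}, I would combine $\dim\H_k^\nmodes=\frac{k+\nmodes-1}{\nmodes-1}\binom{k+\nmodes-2}{k}$ with the factorization $(k+\nmodes-1)^2-k^2=(\nmodes-1)(2k+\nmodes-1)$ to conclude $d_{\lambda_k}=\frac{2k+\nmodes-1}{\nmodes-1}\binom{k+\nmodes-2}{k}^2$. There is no genuine obstacle here; the only nonroutine observation is that Lemma~\ref{lem:decomposition_conjugate_action} already supplies a telescoping identity that bypasses any direct manipulation of the Weyl dimension formula on the diagram \eqref{eq:lambda_diagram}.
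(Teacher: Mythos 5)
Your proof is correct, and it takes a genuinely different route from the paper. The paper's proof (\cref{app:proof_dim_irrep}) evaluates the Weyl dimension formula \eqref{eq:dim_irrep_SU(N)} directly on the shape $\lambda_k=(2k,k,\dots,k,0)$ and then massages the resulting products via Pochhammer symbols; you instead take dimensions on both sides of \cref{lem:decomposition_conjugate_action} at particle numbers $k$ and $k-1$ and subtract, obtaining $d_{\lambda_k}=(\dim\H_k^\nmodes)^2-(\dim\H_{k-1}^\nmodes)^2$, after which only the elementary ratio $\dim\H_{k-1}^\nmodes/\dim\H_k^\nmodes=k/(k+\nmodes-1)$ is needed. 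Your argument is non-circular, since \cref{lem:decomposition_conjugate_action} is established by Littlewood--Richardson combinatorics without any appeal to the dimension formula, and it crucially uses the multiplicity-freeness asserted there (otherwise the telescoping sum would over-count). The trade-off: your route is shorter and makes the structure of the answer transparent (a difference of squares of consecutive Fock-space dimensions, which also serves as a consistency check on the decomposition lemma), while the paper's computation is self-contained at the level of a single irrep and does not rest on the correctness of the tensor-product decomposition. Both yield the same two equivalent closed forms in \eqref{eq:dimension_lambda_k}.
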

\begin{proof}
	For any irrep $\lambda = (m_1, m_2, \dots, m_{\nmodes})$, the following fact holds: \cite{alexNumericalAlgorithmExplicit2011}
	\begin{equation}
		\dim \lambda = \prod_{1 \leq j < j' \leq \nmodes} \left( 1 + \frac{m_j - m_{j'}}{j' - j} \right) \, .
	\end{equation}
	Let us denote the \ac{irrep} defined in \cref{eq:lambda_diagram} as $\lambda_k = (2k, k, \dots, k, 0)$.
	Hence, notice the following facts:
	\begin{itemize}
		\item For $j=1$ and $j'=2, \dots, \nmodes-1$ we obtain the contribution $\prod_{j'=2}^{\nmodes-1} \left(1 + \frac{k}{j'-1} \right)$.
		\item For $j=1$ and $j'=\nmodes$ we obtain the contribution $1 + \frac{2k}{\nmodes-1}$.
		\item For $2 \leq j < j' \leq \nmodes -1$ all the products are equal to $1$.
		\item For $2 \leq j \leq \nmodes-1$ and $j' = \nmodes$ we obtain the contribution $\prod_{j=2}^{\nmodes -1} \left( 1 + \frac{k}{\nmodes - j} \right)$.
	\end{itemize}
	Using the latter facts, we have
	\begin{align}
		d_{\lambda_k} 
		&= 
		\frac{2k+\nmodes-1}{\nmodes-1} \prod_{j=2}^{\nmodes-1} \frac{k+\nmodes-j}{\nmodes-j} \prod_{l=2}^{\nmodes-1} \frac{k+l-1}{l-1} \\
		&=
		\label{eq:dim_proof_step_2}
		\frac{2k+\nmodes-1}{\nmodes-1} \frac{1}{\left( (\nmodes-2)! \right)^2} \prod_{j=2}^{\nmodes-1} (k+\nmodes-j) \prod_{l=2}^{\nmodes-1} (k+l-1) \\
		&=
		\label{eq:dim_proof_step_3}
		\frac{2k+\nmodes-1}{\nmodes-1} \frac{1}{\left( (\nmodes-2)! \right)^2} \left( \frac{1}{k} (k)_{\nmodes-1} \right)^2  \\
		&=
		\frac{1}{k^2} \frac{2k+\nmodes-1}{\nmodes-1} \left( \frac{(k)_{\nmodes-1}}{(\nmodes-2)!} \right)^2 \\
		&=
		\frac{1}{k^2} \frac{2k+\nmodes-1}{\nmodes-1} (\nmodes-1)^2 \left( \frac{(k)_{\nmodes-1}}{(\nmodes-1)!} \right)^2 \\
		&=
		\label{eq:dim_proof_step_5}
		\frac{(2k+\nmodes-1)(\nmodes-1)}{k^2} \left( \frac{(k+\nmodes-1)!}{k!(\nmodes-1)!} \frac{k}{k+\nmodes-1} \right)^2 \\
		&=
		\frac{(2k+\nmodes-1)(\nmodes-1)}{k^2} \frac{k^2(\nmodes-1)}{(k+\nmodes-1)^2} \binom{k+\nmodes-1}{\nmodes-1}^2 \\
		&=
		\frac{(2k+\nmodes-1)(\nmodes-1)}{(k+\nmodes-1)^2} \binom{k+\nmodes-1}{\nmodes-1}^2 \\
		&=
		\left( 1 - \frac{k^2}{(k+\nmodes-1)^2} \right) \left( \dim \H_k^\nmodes \right)^2
		\, .
	\end{align}
	In \cref{eq:dim_proof_step_2} we factorized the denominators and observed that the factors range between $1$ and $\nmodes-2$.
	In \cref{eq:dim_proof_step_3} we introduced the Pochhammer raising factorial symbol, defined as $(a)_k \coloneqq a(a+1) \dots (a+k-1)$ for $a, k \in \NN$.
	In \cref{eq:dim_proof_step_5} we recognized that, by definition,
	\begin{equation}
		\frac{(k)_{\nmodes-1}}{(\nmodes-2)!} = \binom{k+\nmodes-2}{\nmodes-1}
		=
		\frac{(k+\nmodes-2)!}{(k-1)!(\nmodes-1)!} \cdot \frac{k}{k} \, . 
	\end{equation}
	Finally, rearranging  the terms and by symmetry of the binomial coefficient, the assertion follows.
\end{proof}

\section{Proof of Lemma \ref{lem:weight_space_zero}}
\label{app:weight_space_zero}

In this section, for convenience, we will say that a box in a Young tableau is a $k$-box if it is labeled by $k \in [\nmodes]$.
Recall that for each $N \in \GT{\tau_n^\nmodes}$ we have
\begin{equation}
	w_j^{(N)} + w_j^{(\bar N)} = 0 \, , \quad \forall j = 1, \dots \nmodes-1 \, .
\end{equation}
\begin{lemma}[Restatement of \cref{lem:weight_space_zero}]
	Let $\lambda_k$ be an \ac{irrep} of $\SU(\nmodes)$ as in \cref{eq:lambda_diagram} for any $k \in \NN$.
	Then,
	\begin{equation}
		\gamma_{\lambda_k}(\vec 0) = \binom{k + \nmodes - 2}{k} \, .
	\end{equation}
\end{lemma}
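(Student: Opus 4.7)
The plan is to count directly the semi-standard Young tableaux of shape $\lambda_k$ whose content is $(k, k, \dots, k)$, since under the GT-pattern $\leftrightarrow$ SSYT correspondence the zero-weight condition $w_j^{(M)} = 0$ for all $j$ translates, via $w_j^{(M)} = w_j^T - w_{j+1}^T$, into the tableau weight being uniform and equal to $k$ in every coordinate. Recall that the shape $\lambda_k$ has $2k$ boxes in row~$1$ and $k$ boxes in each of the rows $2, \dots, m-1$, so its first $k$ columns have length $m-1$ while its last $k$ columns have length $1$.

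The first step is to parameterize such a tableau by the \emph{missing values} $v_j \in [m]$ of its first $k$ columns: a length-$(m-1)$ strictly increasing column over $[m]$ is determined by the one element of $[m]$ that it omits. A direct calculation then shows that the entry in row $i$ and column $j \leq k$ equals $i$ when $v_j > i$ and $i+1$ when $v_j \leq i$. Demanding that each row $i \in \{2, \dots, m-1\}$ be weakly increasing forces the indicator $[v_j > i]$ to be non-increasing in $j$, i.e.~the set $\{j : v_j > i\}$ must form an initial segment of $[k]$. Imposing this for all such $i$ is equivalent to the sequence $(v_1, \dots, v_k)$ being weakly decreasing.

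Next, I analyze row~$1$ together with the content constraint. Each value $p \in [m]$ appears $k - r_p$ times in the first $k$ columns, where $r_p \coloneqq \abs{\{j : v_j = p\}}$, so the remaining $s_p = r_p$ copies must fill row~$1$ in columns $k+1, \dots, 2k$. The entries there are weakly increasing and must be at least the row-$1$ entry of column $k$, which is $2$ if $v_k = 1$ and $1$ otherwise. A short case analysis shows that $v_k = 1$ is incompatible with $s_1 = r_1 > 0$, forcing $r_1 = 0$, hence $v_j \in \{2, \dots, m\}$ for every $j$. Once this holds, both the first $k$ columns and the remaining $k$ boxes of row~$1$ are uniquely determined (the latter as the weakly increasing listing of $s_2$ twos, $s_3$ threes, etc.).

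We therefore obtain a bijection between the tableaux in question and weakly decreasing sequences of length $k$ in $\{2, \dots, m\}$, i.e.~size-$k$ multisets drawn from an $(m-1)$-element set, whose number is $\binom{k+m-2}{k}$. The main obstacle I foresee is the careful bookkeeping needed to rule out the sub-case $v_j = 1$ in the presence of both the row~$1$ monotonicity and the content constraint. As an independent sanity check, one can also derive the formula by induction on $k$ from \cref{lem:decomposition_conjugate_action}, using that the zero-weight space of $\tau_n^\nmodes \otimes \bar\tau_n^\nmodes$ has dimension $\binom{n+m-1}{n}$ (since symmetric-irrep weight spaces are one-dimensional) and applying the hockey-stick identity $\sum_{j=0}^{k-1} \binom{j+m-2}{j} = \binom{k+m-2}{k-1}$ together with Pascal's rule.
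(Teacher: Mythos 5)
Your proof is correct, and it counts the same objects as the paper --- semi-standard Young tableaux of shape $\lambda_k$ with uniform content $(k,\dots,k)$ --- but by a genuinely different argument. The paper's proof is a recursive positional count: it pins down the $1$'s, considers where the $\nmodes$-labelled boxes may sit (all in the last row versus partly in the first row), evaluates the two resulting counts, and combines them via Fermat's (hockey-stick) identity and Pascal's rule. You instead exhibit an explicit bijection: each of the $k$ full-height columns is encoded by its missing value $v_j$, the semistandardness conditions collapse to $(v_1,\dots,v_k)$ being weakly decreasing, the content constraint rules out $v_j=1$, and the tail of row~$1$ is then forced, giving a bijection with size-$k$ multisets from an $(\nmodes-1)$-element set. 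This is cleaner and more transparently rigorous than the paper's count (whose intermediate claim that the tableaux with all $\nmodes$'s in the last row number $\binom{k+\nmodes-3}{k}$ is asserted rather than derived); your second, inductive derivation from \cref{lem:decomposition_conjugate_action} together with $\dim\H_n^\nmodes = \sum_{k=0}^n \gamma_{\lambda_k}(\vec 0)$ is in fact closer in flavour to the identities the paper uses, and serves as a good independent check. One small presentational point: the monotonicity of rows $2,\dots,\nmodes-1$ alone does \emph{not} force $(v_j)$ to be weakly decreasing (e.g.~$v=(3,1,2)$ passes all those tests); you also need the row-$1$ condition on the first $k$ columns, which is exactly what your subsequent case analysis supplies when it eliminates $v_j=1$. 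Reordering those two steps, or simply including $i=1$ in the monotonicity argument, closes this cosmetic gap.
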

\begin{proof}
	From the point of view of Young tableaux, we remark that a state with weight $\vec 0$ implies that in the corresponding Young tableau $T_M$ --where $M \in \GT{\lambda_k}$ satisfies the latter selection rules-- all the entries appear the same number of times.
	Let $\ssyt{\lambda_k}$ be the set of semi-standard Young tableaux of shape $\lambda_k$ and consider the set
	\begin{equation}
		\ytzero{\lambda_k} \coloneqq \{ T \mid T \in \ssyt{\lambda_k} \text{ s.t. } w_i^{T} = w_{i+1}^T \, \forall i \in [\nmodes-1] \} \, .
	\end{equation}
	It follows that $\gamma_{\lambda_k}(\vec 0) = \abs{ \ytzero{\lambda_k} }$ is the inner multiplicity of $\vec 0$ in $\lambda_k$.
	Clearly, $\gamma(w) = 1$ for each weight $w$ in $\SU(2)$, and \cref{eq:weight_space_zero} holds trivially.
	
	In a similar fashion, counting Young tableaux $T_{\lambda_k}$ for $\SU(3)$ is straightforward:
	any Young tableau $T \in \ytzero{\lambda_k}$ contains the labels $\{1,2,3\}$ exactly $k$ times, with the $1$'s forced to be placed in the first $k$ boxes of the first row, otherwise $T$ would not be a legal tableau.
	Then, if we consider a starting Young tableau of the form
	\begin{equation}
		\begin{tikzpicture}
			[every node/.style={scale=0.8},baseline={([yshift=-.5ex]current bounding box.center)}]
			\matrix (m) [matrix of math nodes,
			nodes={draw, minimum size=8mm, anchor=center},
			column sep=-\pgflinewidth,
			row sep=-\pgflinewidth
			]
			{
				1 & |[draw=none]|\dots & 1 & 2 & |[draw=none]|\dots & 2 \\
				3 & |[draw=none]|\dots & 3 \\
			};
			\draw[BC] (m-1-1.north) -- node[above=4mm] {$k$} (m-1-3.north);
			\draw[BC] (m-1-4.north) -- node[above=4mm] {$k$} (m-1-6.north);
		\end{tikzpicture}
	\end{equation}
	all remaining $T \in \ytzero{\lambda_l}$ can be obtained by permuting the last $2$-box in the first row with the first $3$-box in the second row.
	The total number of allowed swaps is $k$, which implies $\gamma_{\lambda_k}(\vec 0) = k+1$.
	
	Consider now $\nmodes > 3$.
	As in the previous case, the $1$-boxes are fixed to be placed at the beginning of the first row.
	Suppose the $\nmodes$-boxes are all placed in the $\nmodes-1$-th row, i.e. we consider 
	\begin{equation}
		\begin{tikzpicture}
			[every node/.style={scale=0.8},baseline={([yshift=-.5ex]current bounding box.center)}]
			\matrix (m) [matrix of math nodes,
			nodes={draw, minimum size=8mm, anchor=center},
			column sep=-\pgflinewidth,
			row sep=-\pgflinewidth
			]
			{
				1 & |[draw=none]|\dots & 1 & 2 & |[draw=none]|\dots & 2 \\
				3 & |[draw=none]|\dots & 3 & \\
				4 & |[draw=none]|\dots & 4 & \\
				|[draw=none]|\vdots & |[draw=none]|\ddots & |[draw=none]|\vdots \\
				\nmodes & |[draw=none]|\dots & \nmodes \\
			};
			\draw[BC] (m-1-1.north) -- node[above=4mm] {$k$} (m-1-3.north);
			\draw[BC] (m-1-4.north) -- node[above=4mm] {$k$} (m-1-6.north);
			\draw[BC] (m-5-1.west) -- node[left=4mm] {$\nmodes-1$} (m-1-1.west);
		\end{tikzpicture} \, .
	\end{equation}
	As long as the last row is fixed to contain $\nmodes$-boxes only, the total number of such Young tableaux is $\binom{k + \nmodes-3}{k}$.
	Then, we only have to count the remaining allowed configurations of $k$-boxes.
	For this purpose, observe that the remaining allowed positions for $\nmodes$-boxes are only in the first row, and there are $k$ such configurations.
	Hence, it is enough to count all possible configurations for each placement of $\nmodes$-boxes in the first row, which is given by
	\begin{equation}
		\binom{k-l + \nmodes - 2}{k-l} \, ,
	\end{equation}
	where $l$ is the number of free boxes in the first row of the tableau.
	Therefore, the total number of such configurations is
	\begin{equation}
		\begin{aligned}
			\sum_{l=1}^k \binom{(k-l) + \nmodes -2}{k - l}
			&=
			\sum_{j=0}^{k-1} \binom{j + \nmodes - 2}{j}
			=
			\sum_{j=0}^{k-1} \binom{j + \nmodes - 2}{\nmodes - 2} \\
			&=
			\binom{k + \nmodes - 3}{k-1} \, ,
		\end{aligned}
	\end{equation}
	where we used Fermat's identity \cite[Eq. 1.48]{gouldCombinatorialIdentities}
	\begin{equation}
		\sum_{j=0}^n \binom{j+a}{j} = \binom{a + n + 1}{n} \, .
	\end{equation}
	Therefore, by Pascal's identity, we have
	\begin{equation}
		\gamma_{\lambda_k}(\vec 0) = \binom{k + \nmodes - 3}{k} + \binom{k + \nmodes - 3}{k - 1} = \binom{k + \nmodes -2 }{k} \, .
	\end{equation}
	and the proof is complete.	
\end{proof}

\section{Proof of Lemma \ref{lem:lambda_two_copies_decomp}}
\label{app:proof_lambda_two_copies}

\begin{lemma}[Restatement of \cref{lem:lambda_two_copies_decomp}]
	Let $\lambda_k$ be a Young diagram as in \cref{eq:lambda_diagram} labeling an irrep of $\SU(\nmodes), \, \nmodes \geq 3$.
	Then,
	\begin{equation}
		\lambda_k \otimes \lambda_k = \bigoplus_{l=0}^k \lambda_l^{(l+1)} \oplus \bigoplus_{l=k+1}^{2k} \lambda_l^{(2k-l+1)} \oplus L \, ,
	\end{equation}
	where $\lambda_0 \equiv \mathbf{1}$, $\lambda_1 \equiv \Ad$, $\lambda_j^{(i)}$ denotes the $i$-th copy of $\lambda_j$ in $\lambda_l^{\otimes 2}$, and $L$ is a suitable direct sum of \acp{irrep} which are not of the form $\lambda_l$ for any $l \in \NN$.
\end{lemma}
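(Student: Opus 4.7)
Proof plan:

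The plan is to apply the Littlewood--Richardson rules (Appendix A) to $\lambda_k \otimes \lambda_k$ and identify which constituents are $\SU(\nmodes)$-equivalent to some $\lambda_l$. In partition notation $\lambda_k = (2k, k^{\nmodes-2}, 0)$, so every constituent partition $\mu$ has $|\mu| = 2k\nmodes$. Since two GL$(\nmodes)$-partitions define the same $\SU(\nmodes)$-irrep iff they differ by a full-height column, a size count forces the unique candidate equivalent to $\lambda_l$ to be $\mu_l = \lambda_l + \bigl((2k-l)^{\nmodes}\bigr) = (2k+l, (2k)^{\nmodes-2}, 2k-l)$, valid exactly for $l \in \{0, 1, \ldots, 2k\}$. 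Hence the multiplicity of $\lambda_l$ in $\lambda_k \otimes \lambda_k$ equals the Littlewood--Richardson coefficient $c_{\lambda_k \lambda_k}^{\mu_l}$, and all other constituents (partitions not of this form) constitute the remainder $L$.

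Next, I would describe the skew shape $\mu_l / \lambda_k$ as three disjoint pieces: a top strip of $l$ boxes in row $1$ (columns $2k+1, \ldots, 2k+l$); a middle $(\nmodes-2) \times k$ rectangle filling rows $2, \ldots, \nmodes-1$ in columns $k+1, \ldots, 2k$; and a bottom strip of $2k-l$ boxes in row $\nmodes$ (columns $1, \ldots, 2k-l$). For $l \leq k-1$, the bottom strip lies directly below the middle rectangle in the $k-l$ overlap columns $k+1, \ldots, 2k-l$, making the combined column a strictly increasing sequence of length $\nmodes-1$; for $l \geq k$ the bottom strip and middle rectangle are column-disjoint.

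The enumeration of LR fillings with content $\lambda_k$ (that is, $2k$ copies of $1$ and $k$ copies each of $2, \ldots, \nmodes-1$) then proceeds by pinning down the forced parts. First, the reverse reading word begins at the rightmost cell of the top strip, and the lattice condition forces every top-strip entry to equal $1$, consuming $l$ of the $2k$ available $1$'s. Second, each overlap column (present only when $l \leq k-1$) contains $\nmodes-1$ strictly increasing entries drawn from a size-$(\nmodes-1)$ alphabet, hence must read $1, 2, \ldots, \nmodes-1$ top-to-bottom, fixing $\max(0, k-l)$ columns' worth of the filling and using up the corresponding amounts of each label.

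The main obstacle is the residual enumeration on the non-overlapping pieces --- the rightmost columns of the middle rectangle (each column omitting exactly one label from $\{1, \ldots, \nmodes-1\}$) together with the leftmost cells of the bottom strip. Under row-weakly-increasing, column-strict, and lattice-word constraints, I would argue that the remaining freedom collapses to a single integer parameter counting how many of the residual $1$'s are pushed from the middle rectangle into the bottom strip, with the corresponding higher labels shifted consistently. Tracking this parameter gives exactly $l+1$ admissible fillings for $l \in \{0, \ldots, k\}$ and $2k-l+1$ admissible fillings for $l \in \{k, \ldots, 2k\}$ (the two formulas agreeing at $l = k$), matching the claimed multiplicities.
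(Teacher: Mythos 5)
Your proposal is correct and follows essentially the same route as the paper: both apply the Littlewood--Richardson rule to $\lambda_k \otimes \lambda_k$, identify for each $l\le 2k$ the unique $\GL(\nmodes)$-shape $\mu_l = \lambda_l + \bigl((2k-l)^{\nmodes}\bigr)$ that is $\SU(\nmodes)$-equivalent to $\lambda_l$, observe that most of the filling is forced, and reduce the count to a one-parameter family of size $\min(l,2k-l)+1$. The only difference is cosmetic---you count LR skew tableaux of shape $\mu_l/\lambda_k$ with content $\lambda_k$, whereas the paper attaches the labelled boxes of the second tensor factor to the first---and your final enumeration step is left as a sketch at roughly the same level of detail as the paper's own argument.
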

\begin{proof}
	Consider for any $k \in \NN$ the tensor product
	\begin{equation}
		\lambda_k \otimes \lambda_k =
		\ytableausetup{mathmode, boxframe=normal, boxsize=2.3em, centertableaux}
		\begin{ytableau}
			\ & \none[\dots] & \ & \ & \none[\dots] & \ \\
			\ & \none[\dots] & \ \\
			\none[\vdots] & \none[\dots] & \none[\vdots] \\
			\ & \none[\dots] & \
		\end{ytableau}
		\otimes
		\begin{ytableau}
			{a_1} & \none[\dots] & {a_1} & {a_1} & \none[\dots] & {a_1} \\
			{a_2} & \none[\dots] & {a_2} \\
			\none[\vdots] & \none[\dots] & \none[\vdots] \\
			{a_{\nmodes-1}} & \none[\dots] & {a_{\nmodes-1}}
		\end{ytableau}
		\, .
	\end{equation}
	By Littlewood-Richardson's rules, the number of Young diagrams $\lambda_l$ that can be constructed from $\lambda_k^{\otimes 2}$ is determined by all possible allowed configurations we can attach the $a_1$ boxes to the first $\lambda_k$, since the way the remaining $a_i$ boxes, $i = 2, \dots, \nmodes-1$, are attached must follow accordingly.
	First, notice that only the $a_1$ boxes can be attached to the first row of the first copy of $\lambda_k$ due to the elimination rule.
	Hence, we have two different `generating' Young diagrams conditioned by whether $l \leq k$ or $k+1 \leq l \leq 2k$.
	Suppose $l \leq k$ at first.
	The $a_1$ boxes are attached to the first copy of $\lambda_k$ as follows:
	The first $l$ boxes are attached to the first row, the next $k$ boxes are attached to the second row and the remaining $k-l$ boxes are attached to the $\nmodes$-th row.
	Then, all the $a_i$ boxes, for any $i = 2, 3, \dots, \nmodes-2$ are attached to the $i+1$-th row.
	Finally, the $a_{\nmodes-1}$ boxes are attached to the $\nmodes$-th row.
	The resulting Young diagram is given by
	\begin{equation}
		\begin{tikzpicture}
			[every node/.style={scale=0.8},baseline={([yshift=-.5ex]current bounding box.center)}]
			\matrix (m) [matrix of math nodes,
			nodes={draw, minimum size=10mm, anchor=center},
			column sep=-\pgflinewidth,
			row sep=-\pgflinewidth
			]
			{
				\ & |[draw=none]|\dots & \ & \ & |[draw=none]|\dots & \ & \ & |[draw=none]|\dots & \ & |[draw=none]|\dots & \ & a_1 & |[draw=none]|\dots &  a_1 \\
				\ & |[draw=none]|\dots & \ & \ & |[draw=none]|\dots & \ & a_1 & |[draw=none]|\dots & a_1 & |[draw=none]|\dots & a_1 \\
				\ & |[draw=none]|\dots & \ & \ & |[draw=none]|\dots & \ & a_2 & |[draw=none]|\dots & a_2 & |[draw=none]|\dots & a_2 \\
				|[draw=none]|\vdots & & |[draw=none]|\vdots & |[draw=none]|\vdots & & |[draw=none]|\vdots & |[draw=none]|\vdots & & |[draw=none]|\vdots & & |[draw=none]|\vdots \\
				\ & |[draw=none]|\dots & \ & \ & |[draw=none]|\dots & \ & a_{\scriptscriptstyle \nmodes-2} & |[draw=none]|\dots & a_{\scriptscriptstyle \nmodes-2} & |[draw=none]|\dots & a_{\scriptscriptstyle \nmodes-2} \\
				a_1 & |[draw=none]|\dots & a_1 & a_{\scriptscriptstyle \nmodes-1} & |[draw=none]|\dots & a_{\scriptscriptstyle \nmodes-1} & a_{\scriptscriptstyle \nmodes-1} & |[draw=none]|\dots & a_{ \scriptscriptstyle \nmodes-1} \\
			};
			\draw[BC] (m-1-1.north) -- node[above=4mm] {$k$} (m-1-6.north);
			\draw[BC] (m-1-7.north) -- node[above=4mm] {$k$} (m-1-11.north);
			\draw[BC] (m-1-12.north) -- node[above=4mm] {$l$} (m-1-14.north);
			\draw[BC] (m-6-9.south) -- node[below=4mm] {$k$} (m-6-4.south);
			\draw[BC] (m-6-3.south) -- node[below=4mm] {$k-l$} (m-6-1.south);
			
			\draw[BC] (m-6-1.west) -- node[left=4mm] {$\nmodes$} (m-1-1.west);
			\draw[BC] (m-2-11.east) -- node[right=4mm] {$\nmodes-2$} (m-5-11.east);
		\end{tikzpicture}	
	\end{equation}
	Suppose now $l \geq k + 1$.
	The $a_1$ boxes are attached to the first copy of $\lambda_k$ as follows:
	The first $l$ boxes are attached to the first row, while the remaining ones are attached to the second row of $\lambda_k$.
	Then, for the $a_i$ boxes, for any $i = 2, 3, \dots, \nmodes-2$, the first $2k-l$ are attached to the $i$-th row of $\lambda_k$, while the remaining ones are attached to the $i+1$-th row of $\lambda_k$.
	The first $a_{\nmodes-1}$ boxes are attached to the $\nmodes-1$-th row of $T_k$ and the remaining ones will form the $\nmodes$-th row of the diagram.
	The resulting Young diagram is
	\begin{equation}
		\begin{tikzpicture}
			[every node/.style={scale=0.8},baseline={([yshift=-.5ex]current bounding box.center)}]
			\matrix (m) [matrix of math nodes,
			nodes={draw, minimum size=10mm, anchor=center},
			column sep=-\pgflinewidth,
			row sep=-\pgflinewidth
			]
			{
				\ & |[draw=none]|\dots & \ & |[draw=none]|\dots & \ & \ & |[draw=none]|\dots & \ & \ & |[draw=none]|\dots & \ & a_1 & |[draw=none]|\dots &  a_1 \\
				\ & |[draw=none]|\dots & \ & |[draw=none]|\dots & \ & a_1 & |[draw=none]|\dots & a_1 & a_2 & |[draw=none]|\dots & a_2 \\
				\ & |[draw=none]|\dots & \ & |[draw=none]|\dots & \ & a_2 & |[draw=none]|\dots  & a_2 & a_3& |[draw=none]|\dots & a_3 \\
				|[draw=none]|\vdots & & |[draw=none]|\vdots & & |[draw=none]|\vdots & |[draw=none]|\vdots & & |[draw=none]|\vdots & |[draw=none]|\vdots & & |[draw=none]|\vdots \\
				\ & |[draw=none]|\dots & \ & |[draw=none]|\dots & \ & a_{\scriptscriptstyle \nmodes-2} & |[draw=none]|\dots & a_{\scriptscriptstyle \nmodes-2} & a_{\scriptscriptstyle \nmodes-1} & |[draw=none]|\dots & a_{\scriptscriptstyle \nmodes-1} \\
				a_{\scriptscriptstyle \nmodes-1} & |[draw=none]|\dots & a_{\scriptscriptstyle \nmodes-1} \\
			};
			\draw[BC] (m-1-1.north) -- node[above=4mm] {$k$} (m-1-5.north);
			\draw[BC] (m-1-6.north) -- node[above=4mm] {$k$} (m-1-11.north);
			\draw[BC] (m-1-12.north) -- node[above=4mm] {$l$} (m-1-14.north);
			
			\draw[BC] (m-6-3.south) -- node[below=4mm] {$2k-l$} (m-6-1.south);
			\draw[BC] (m-5-8.south) -- node[below=4mm] {$2k-l$} (m-5-6.south);
			\draw[BC] (m-5-11.south) -- node[below=4mm] {$l-k$} (m-5-9.south);
			
			\draw[BC] (m-6-1.west) -- node[left=4mm] {$\nmodes$} (m-1-1.west);
			\draw[BC] (m-2-11.east) -- node[right=4mm] {$\nmodes-2$} (m-5-11.east);
		\end{tikzpicture}	
	\end{equation}	
	For notation purpose, let us refer to the latter two Young diagrams as the generating Young diagrams.
	
	At this point, we can generate all the remaining copies of $\lambda_l$ in the following way:
	\begin{enumerate}
		\item For any $i = 1, \dots, \nmodes-2$, replace the last $a_i$ box in the $i+1$-th row with an $a_{i+1}$ box.
		\item Replace the first $a_{\nmodes-1}$ box in the $\nmodes$-th row of the diagram with $a_1$.
	\end{enumerate}
	It follows that the multiplicity of $\lambda_l$ in the decomposition of $\lambda_k^{\otimes 2}$ is determined by the number of $a_1$ boxes in the second row of the generating Young diagram.
\end{proof}

\section{Passive RB with heterodyne measurement}
\label{sec:passive_rb_Gaussian}

In this section, we prove \cref{thm:general_filter_informal,thm:variance_bound_informal} for passive \ac{RB} with (balanced) heterodyne measurements.
As before, the filter function \eqref{eq:filter_function_general_POVM} plays a central role and here assumes the following form:
\begin{equation}
	f_{\lambda_k} (\vec \alpha, g) = \frac{1}{s_{\lambda_k}} \sandwich{\ninput, \ninput}{P_{\lambda_k} \tau_n^\nmodes \otimes \bar \tau_n^\nmodes(g)^\ad}{\vec \alpha, \vec \alpha} \, ,
\end{equation}
where $\rho = \ketbra{\ninput}{\ninput}$ is the input state, $\lambda_k$ is an \ac{irrep} in $\tau_n^\nmodes \otimes \bar \tau_n^\nmodes$ and $\vec \alpha \in \CC^\nmodes$ denotes an $\nmodes$-modes coherent state.

Throughout this section, we will use the usual multi-index notation \cite[Sec.~9.1]{reed:simon:funcanalysis}:
For elements $\vec n_1, \vec n_2 \in \H_n^\nmodes$, $\vec n_1 + \vec n_2$ denotes the component-wise sum.
The multi-index factorial of $\vec n \in \H_n^\nmodes$ is defined as $\vec n! \coloneqq n_1! \dots n_\nmodes!$.
Also, for a given $\vec \alpha \in \CC^\nmodes$, we consider the power $\vec \alpha^{\vec n} \coloneqq \alpha_1^{n_1} \dots \alpha_\nmodes^{n_\nmodes}$, and we set $\abs{\vec \alpha}^p \coloneqq \alpha_1^p + \dots + \alpha_\nmodes^p$ for $p \geq 1$.
We also use the shorthand notation
\begin{equation}
	\int d^2\vec \alpha \equiv \int d^2\alpha_1 \dots \int d^2\alpha_\nmodes \, ,
\end{equation}
where $d^2\alpha_i$ is the complex measure on $\CC$.
With this notation, the multi-mode coherent state $\ket{\vec \alpha}$ can be expanded as
\begin{equation}
	\label{eq:multimode_cs}
	\ket{\vec \alpha} = e^{-\abs{\vec \alpha}^2/2} \sum_{\vec n \in \fock_\nmodes} \frac{\vec \alpha^{\vec n}}{\sqrt{\vec n!}} \ket{\vec n} \, .
\end{equation}

Consider now the following quantity for any $K \in 2\NN$:
\begin{equation}
	\label{eq:alpha_integral_multi-index}
	I( \{\vec n_i \}_{i=1}^K ) = \frac{1}{\sqrt{ \vec n_1 ! \vec n_2 ! \dots \vec n_K !}} \int d^2 \vec \alpha \, e^{-K/2 \abs{\vec \alpha}^2} \bar{\vec \alpha}^{\vec n_1 + \dots \vec n_{K/2}} \vec \alpha^{\vec n_{K/2+1} + \dots \vec n_K} \, . 
\end{equation}
The latter can be evaluated writing down the integral in polar coordinates and integrating by parts.
Specifically, for the single-mode integral, and for any $\eta > 0$, we have
\begin{equation}
	\label{eq:alpha_integral_solved_single}
	\begin{aligned}
		\int d^2\alpha \, e^{- \eta \abs{\alpha}^2} \alpha^{a + b} \bar{\alpha}^{c + d}
		&=
		\int_0^\infty dr \, e^{-\eta r^2} r^{a + b + c + d + 1} \int_0^{2\pi} d\theta \, e^{i \theta(a + b - c - d)} \\
		&=
		\pi \left(\frac{a+b+c+d}{2}\right)! \, \eta^{-\frac{a+b+c+d}{2}} \delta_{a+b, c+d} \, .
	\end{aligned}
\end{equation}
Notice that the expression in parenthesis is a proper factorial due to the $\delta$.
This implies
\begin{equation}
	\label{eq:alpha_integral_solved_multiindex}
	\begin{aligned}
		I(\{\vec n_i\}_{i=1}^K)
		&=
		\frac{\pi^\nmodes}{(K/2)^n} \frac{\left( \vec n_1 + \dots + \vec n_{K/2} \right)!}{\sqrt{\vec n_1! \dots \vec n_K!}} \, \delta_{\vec n_1 + \dots + \vec n_{K/2}, \vec n_{K/2+1} + \dots + \vec n_K} \, .
	\end{aligned}
\end{equation}
where 
\begin{equation}
	\delta_{\vec n_1 + \dots + \vec n_{K/2}, \vec n_{K/2+1} + \dots + \vec n_K} =
	\begin{cases}
		1, \, &\text{if } \sum_{i=1}^{K/2} \vec n_i = \sum_{i=K/2+1}^K \vec n_i \, , \\
		0, \, &\text{otherwise} \,
	\end{cases}
\end{equation}
and we used the fact that $\abs{\vec n_i} = n$ for each $i = 1, \dots K$.

Lastly, we recall the identification between Fock states and \ac{GT} patterns: we will write $\ket{\vec n_i} = \ket{N_i}$, meaning that $N_i = N_i(\vec n_i)$.

The coherent state \ac{POVM} $\{\ketbra{\vec \alpha}{\vec \alpha}\}_{\alpha \in \CC^\nmodes}$ is informationally complete \cite{darianoInformationallyCompleteMeasurements2004}, which implies $s_{\lambda_k} \neq 0$ for any $\lambda_k \in \hat \omega_n^\nmodes$ \cite{heinrichRandomizedBenchmarkingRandom2023}.
Specifically, we work out explicit formulae for the eigenvalues $s_{\lambda_k}$:
\begin{theorem}
	\label{thm:frame_operator_heterodyne}
	Let $\lambda_k$ an \ac{irrep} of $\tau_n^\nmodes \otimes \bar \tau_n^\nmodes$ as in \cref{eq:lambda_diagram}.
	For a balanced heterodyne measurement setting, the eigenvalues of the frame operator of the filtered \ac{RB} protocol are given by
	\begin{equation}
		\label{eq:frameop_cs}
		s_{\lambda_k}
		=
		\frac{\pi^\nmodes 2^{-n}}{d_{\lambda_k}}
		\sum_{\vec n_1, \vec n_2} (-1)^{\varphi(N_1) + \varphi(N_2)} \binom{\vec n_1 + \vec n_2}{\vec n_1}
		\sum_{M \in \GT{\lambda_k}} C_{N_1, \bar N_1}^{M} C_{N_2, \bar N_2}^{M} \, ,
	\end{equation}
	where $N_1, N_2 \in \GT{\tau_n^\nmodes}$ are the \ac{GT} patterns associated with $\vec n_1, \vec n_2$, respectively, and
	\begin{equation}
		\binom{\vec n_1 + \vec n_2}{\vec n_2}
		\equiv
		\binom{n_{1,1} + n_{2,1}}{n_{2,1}} \dots \binom{n_{1,\nmodes} + n_{2,\nmodes}}{n_{2, \nmodes}} \, ,
	\end{equation}
	where $\vec n_i = (n_{i,1}, \dots, n_{i,\nmodes})$.
\end{theorem}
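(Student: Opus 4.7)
The plan is to specialize the general frame-operator formula \eqref{eq:frameop_coefficients} to the balanced heterodyne POVM and evaluate the resulting Gaussian integral using \eqref{eq:alpha_integral_solved_multiindex}. Writing $s_{\lambda_k} = d_{\lambda_k}^{-1}\int d^2\vec\alpha\,\sandwich{X_\alpha}{P_{\lambda_k}}{X_\alpha}$ with $\ket{X_\alpha}$ the vectorization of $\ketbra{\vec\alpha}{\vec\alpha}$, I would expand the coherent state in the Fock basis via \eqref{eq:multimode_cs}, identify the second slot with the dual GT basis of $\bar\tau_n^\nmodes$ via \eqref{eq:conjugation_relative_phase} (incurring phases $(-1)^{\varphi(M)}$), and apply the Clebsch--Gordan expansion $P_{\lambda_k}\ket{N,\bar M}=\sum_{M'\in\GT{\lambda_k}}C^{M'}_{N,\bar M}\ket{M'}$. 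This yields the amplitude $\braket{M'}{X_\alpha}$ as a sum over Fock labels, so that $\sandwich{X_\alpha}{P_{\lambda_k}}{X_\alpha} = \sum_{M'}\abs{\braket{M'}{X_\alpha}}^2$ becomes a quadruple Fock-label sum weighted by phases, factorials and CG coefficients.

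Integrating over $\vec\alpha$ using \eqref{eq:alpha_integral_solved_multiindex} with $K=4,\eta=2$ extracts the prefactor $\pi^\nmodes 2^{-n}$ and produces the factor $(\vec n_1+\vec m_2)!\,\delta_{\vec n_1+\vec m_2,\vec m_1+\vec n_2}$. The Kronecker delta is automatically fulfilled by the CG selection rule \eqref{eq:selection_rules_CGC}, which forces $\vec n_i - \vec m_i$ to equal the $\U(\nmodes)$-weight $\vec d_{M'}$ of $M'$ inside $\omega_n^\nmodes$. Substituting $\vec m_i = \vec n_i - \vec d_{M'}$ then collapses the quadruple sum to a double sum over $\vec n_1,\vec n_2$ for each $M'$. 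When $M'$ lies in the zero-weight subspace ($\vec d_{M'}=\vec 0$), one has $\vec m_i=\vec n_i$ and $\varphi(M_i)=\varphi(N_i)$; the factorials combine into $\binom{\vec n_1+\vec n_2}{\vec n_1}$, the CG coefficients simplify to $C^{M'}_{N_i,\bar N_i}$, and swapping the sums over $M'$ and $(\vec n_1,\vec n_2)$ reproduces the claimed formula (extending back to all $M'\in\GT{\lambda_k}$ is harmless since CG selection forces non-zero-weight $M'$ to give $C^{M'}_{N,\bar N}=0$).

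The hard step is then to show that the non-zero-weight ($\vec d_{M'}\neq\vec 0$) contributions cancel in the full sum before this restriction. The natural strategy exploits the self-duality of $\lambda_k$---which supplies an antiunitary Weyl-type symmetry pairing weight-$\vec d$ and weight-$(-\vec d)$ basis vectors---combined with the joint substitution $\vec n_i\to\vec n_i - \vec d_{M'}$, under which the combinatorial factor is invariant while the phases $(-1)^{\varphi(M_i)}$ reorganize; verifying that the resulting involution is sign-cancelling reduces, via \eqref{eq:phase_dual_GTpattern}, to parity identities for the GT weight sums. An alternative route uses the character formula $P_{\lambda_k}=d_{\lambda_k}\int dg\,\overline{\chi_{\lambda_k}(g)}\,\omega(g)$ together with $\Tr[\omega(g)\mathcal M]=\pi^\nmodes/\det(2I-U-U^\ad)$ (evaluated by a standard complex Gaussian integral in $\vec\alpha$), and then matches the resulting Weyl-integral expression to the claimed combinatorial formula through the weight-space decomposition of $\chi_{\lambda_k}$.
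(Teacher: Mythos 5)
Your setup follows the paper's proof of \cref{thm:frame_operator_heterodyne} step for step up to the Gaussian integration: expand both copies of $\ketbra{\vec\alpha}{\vec\alpha}$ in the Fock basis truncated to $\H_n^\nmodes$, pass to the dual \ac{GT} basis with the phases $(-1)^{\varphi(M)}$, insert the Clebsch--Gordan expansion of $P_{\lambda_k}$, and integrate out $\vec\alpha$. The decisive point is what happens next. You correctly note that the Kronecker delta produced by the $\vec\alpha$-integral is $\delta_{\vec n_1+\vec m_2,\,\vec m_1+\vec n_2}$, i.e.\ $\vec n_1-\vec m_1=\vec n_2-\vec m_2$, which is exactly the Clebsch--Gordan selection rule and therefore imposes no additional constraint. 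Hence the terms with $\vec m_i\neq\vec n_i$ (the non-zero-weight components of $\lambda_k$) survive your reduction, and your argument stands or falls with the claimed cancellation of those contributions --- which you explicitly defer as the ``hard step'', offering only two candidate strategies. That is a genuine gap: neither the Weyl-involution argument nor the character-integral route is carried out, and the entire content of the theorem (that only the zero-weight coefficients $C^{M}_{N,\bar N}$ appear) rests on it.

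Moreover, this gap cannot be closed, because the non-zero-weight terms do not cancel. For $\nmodes=2$, $n=1$, $k=1$ one has $s_{\lambda_1}=\tfrac13\sum_M\int d^2\vec\alpha\,\abs{\sandwich{\vec\alpha}{E_M}{\vec\alpha}}^2$ with $\{E_M\}=\{\tfrac{1}{\sqrt2}(\ketbra{10}{10}-\ketbra{01}{01}),\ketbra{10}{01},\ketbra{01}{10}\}$ an orthonormal basis of the adjoint component; the two weight-$\pm1$ operators each contribute $\int d^2\vec\alpha\, e^{-2\abs{\vec\alpha}^2}\abs{\alpha_1}^2\abs{\alpha_2}^2=\pi^2/16\neq 0$, the same amount as the zero-weight operator, giving $s_{\lambda_1}=\pi^2/16$, whereas \cref{eq:frameop_cs} evaluates to $\pi^2/6$. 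The paper's own proof obtains the diagonal restriction $M_i=N_i$ by combining the selection rule with a delta of the form $\delta_{\vec n_1+\vec n_2,\,\vec m_1+\vec m_2}$, which arises from mis-pairing the conjugated and unconjugated Fock expansions in the arguments of \cref{eq:alpha_integral_multi-index}; with the correct pairing (yours) the two constraints coincide and the off-diagonal terms remain. So your more careful bookkeeping exposes a real error in the stated result rather than a step you failed to find. (Separately, the single-mode integral \eqref{eq:alpha_integral_solved_single} should carry $\eta^{-(k+1)}$ rather than $\eta^{-k}$ per mode, so the prefactor $\pi^\nmodes 2^{-n}$, which you inherit from the paper, is also not what a correct evaluation of the Gaussian integral yields.)
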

\begin{proof}
	For the balanced heterodyne measurement setting the corresponding (ideal) \ac{POVM} is $\{ \ketbra{\vec \alpha}{\vec \alpha} \equiv E_{\vec \alpha} \}_{\vec \alpha \in \CC^\nmodes}$, where $\ket{\vec \alpha} = \bigotimes_{i=1}^\nmodes \ket{\alpha_i}$ is an $\nmodes$ modes coherent state.
	The associated measurement channel is given by
	\begin{equation}
		\label{eq:measurement_channel_heterodyne}
		\mathcal M(\cdot) \coloneqq \int_{\CC^\nmodes} d^2\vec \alpha \Tr[\ketbra{\vec \alpha}{\vec \alpha} (\cdot)] \, \ketbra{\vec \alpha}{\vec \alpha} \, .
	\end{equation}
	
	To evaluate \cref{eq:frameop_irrep_subspace}, we use the multi-mode expansion defined in \cref{eq:multimode_cs}.
	Moreover, since $P_{\lambda_k}$ is defined onto a subspace of $\H_n^\nmodes$, such expansions of the copies of $\vec \alpha$ are truncated.
	Hence, by \cref{eq:alpha_integral_multi-index}, we have
	\begin{align}
		s_{\lambda_k}
		&=
		\frac{1}{d_{\lambda_k}} \int d^2\vec \alpha \,
		\sandwich{\vec \alpha}{P_{\lambda_k}(\ketbra{\vec \alpha}{\vec \alpha})}{\vec \alpha} \\
		&=
		\frac{1}{d_{\lambda_k}} \sum_{\substack{\vec n_1, \vec n_2 \in \H_n^\nmodes \\ \vec m_1, \vec m_2 \in \H_n^\nmodes}} I(\vec n_1, \vec n_2, \vec m_1, \vec m_2) 
		\sandwich{\vec n_1}{P_{\lambda_k}(\ketbra{\vec n_2}{\vec m_2})}{\vec m_1} \\
		&=
		\frac{1}{d_{\lambda_k}} \sum_{\substack{\vec n_1, \vec n_2 \in \H_n^\nmodes \\ \vec m_1, \vec m_2 \in \H_n^\nmodes}} I(\vec n_1, \vec n_2, \vec m_1, \vec m_2) \sandwich{\vec n_1,\vec m_1}{P_{\lambda_k}}{\vec n_2,\vec m_2} \\
		&=
		\frac{1}{d_{\lambda_k}} \sum_{\substack{\vec n_1, \vec n_2 \in \H_n^\nmodes \\ \vec m_1, \vec m_2 \in \H_n^\nmodes}} I(\vec n_1, \vec n_2, \vec m_1, \vec m_2) \sandwich{N_1, M_1}{P_{\lambda_k}}{N_2,M_2} \\
		&=
		\frac{1}{d_{\lambda_k}} \sum_{\substack{\vec n_1, \vec n_2 \in \H_n^\nmodes \\ \vec m_1, \vec m_2 \in \H_n^\nmodes}} 
		(-1)^{\varphi(M_1) + \varphi(M_2)} I(\vec n_1, \vec n_2, \vec m_1, \vec m_2) \sandwich{N_1, \bar M_1}{P_{\lambda_k}}{N_2,\bar M_2} \\
		&=
		\frac{1}{d_{\lambda_k}} \sum_{\substack{\vec n_1, \vec n_2 \in \H_n^\nmodes \\ \vec m_1, \vec m_2 \in \H_n^\nmodes}}
		(-1)^{\varphi(M_1) + \varphi(M_2)} I(\vec n_1, \vec n_2, \vec m_1, \vec m_2)
		\sum_{M \in \GT{\lambda_k}} C_{N_1, \bar M_1}^{M} C_{N_2, \bar M_2}^{M}
		\, ,
	\end{align}
	where the phase $\varphi(M)$ is defined in \cref{eq:phase_dual_GTpattern} and in the last step we used the definition of $P_{\lambda_k}$, cf. \eqref{eq:projection_fock_state}:
	\begin{equation}
		\begin{aligned}
			\sandwich{N_1, \bar M_1}{P_{\lambda_k}}{N_2,\bar M_2}
			=
			\sum_{M \in \GT{\lambda_k}} \braket{N_1, \bar M_1}{M} \braket{M}{N_2, M_2}
			=
			\sum_{M \in \GT{\lambda_k}} C_{N_1, \bar M_1}^{M} C_{N_2, \bar M_2}^{M} \, .
		\end{aligned}
	\end{equation}
	Since $N_1, N_2 \in \GT{\tau_n^\nmodes}$ and $M_1, M_2 \in \GT{\bar \tau_n^\nmodes}$, selection rules for Clebsch-Gordan coefficients imply $M_1 = N_1$ and $M_2 = N_2$.
	Hence,
	\begin{equation}
		s_{\lambda_k}
		=
		\frac{1}{d_{\lambda_k}} \sum_{\vec n_1, \vec n_2 \in \H_n^\nmodes} (-1)^{\varphi(N_1) + \varphi(N_2)} I(\vec n_1, \vec n_2)
		\sum_{M \in \GT{\lambda_k}} C_{N_1, \bar N_1}^{M} C_{N_2, \bar N_2}^{M} \, .
	\end{equation}
	By \cref{eq:alpha_integral_solved_multiindex}, we have
	\begin{equation}
		I(\vec n_1, \vec n_2) 
		=
		\frac{\pi^\nmodes}{2^n} \binom{\vec n_1 + \vec n_2}{\vec n_1} \, .
	\end{equation}
	with
	\begin{equation}
		\binom{\vec n_1 + \vec n_2}{\vec n_2}
		\equiv
		\binom{n_{1,1} + n_{2,1}}{n_{2,1}} \dots \binom{n_{1,\nmodes} + n_{2,\nmodes}}{n_{2, \nmodes}} \, ,
	\end{equation}
	from which \cref{eq:frameop_cs} follows.
\end{proof}

A result similar to \cref{eq:filter_function_PNR} is available for heterodyne detectors:
\begin{theorem}[Restatement of \cref{thm:general_filter_informal} - heterodyne version]
	\label{thm:filter_function_heterodyne}
	Let $\rho = \ketbra{\ninput}{\ninput}$ be a $\nmodes$ modes state and let $\{ \ketbra{\vec \alpha}{\vec \alpha} \}_{\vec \alpha \in \CC^\nmodes}$ be the coherent state \ac{POVM}.
	Let $\lambda_k$ an \ac{irrep} of $\tau_n^\nmodes \otimes \bar \tau_n^\nmodes$ as in \cref{eq:lambda_diagram}.
	Then, the filter function \eqref{eq:filter_function} is given by
	\begin{equation}
		\label{eq:filter_function_heterodyne}
		\begin{aligned}
			f_{{\lambda_k}}(\vec \alpha, g) 
			&=
			\frac{(-1)^{\varphi(N_0)}}{s_{\lambda_k}} \sum_{M \in \GT{\lambda_k}} C_{N_0, \bar N_0}^{M}
			\sum_{N' \in \GT{\tau_n^\nmodes}} (-1)^{\varphi(N')} C_{N', \bar N'}^{M}
			\abs{ \sandwich{\vec \alpha}{\tau_n^\nmodes(g)}{\vec n'} }^2 \, ,
		\end{aligned}
	\end{equation}
	where $\ket{\vec n'} = \ket{N'}$.
\end{theorem}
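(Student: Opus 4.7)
The proof will closely mirror that of \cref{thm:filter_function_PNR} for the \ac{PNR} case, the main novelty being that the measurement outcome is now a coherent state $\vec\alpha \in \CC^\nmodes$. The plan is to first recast the input state in the coupled (\ac{GT}) basis of $\omega_n^\nmodes \cong \tau_n^\nmodes \otimes \bar\tau_n^\nmodes$, and then to exploit the one-dimensional weight-space structure of the symmetric irrep $\tau_n^\nmodes$ in order to collapse a double Fock sum into the single sum over $N'$ advertised in the theorem.

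Concretely, starting from the vectorised form
\begin{equation*}
f_{\lambda_k}(\vec\alpha, g) = \frac{1}{s_{\lambda_k}} \sandwich{\ninput, \ninput}{P_{\lambda_k} (\tau_n^\nmodes \otimes \bar\tau_n^\nmodes)(g)^\ad}{\vec\alpha, \vec\alpha} \, ,
\end{equation*}
I would first relabel $\ket{\ninput, \ninput} = (-1)^{\varphi(\ninputGT)} \ket{\ninputGT, \ninputGTdual}$ and expand in the coupled basis to reach $\langle \ninput, \ninput | P_{\lambda_k} = (-1)^{\varphi(\ninputGT)} \sum_{M \in \GT{\lambda_k}} C^M_{\ninputGT, \ninputGTdual} \langle M|$. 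The selection rule $w_M = w_{\ninputGT} + w_{\ninputGTdual} = 0$ then forces every contributing $M$ to live in the zero-weight subspace of $\lambda_k$, exactly as in the \ac{PNR} derivation.

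The key technical step, and the main obstacle, is to re-express such a weight-zero $\ket{M}$ in the uncoupled basis via the inverse Clebsch-Gordan decomposition. Since $\tau_n^\nmodes$ has one-dimensional weight spaces (cf.~\cref{sec:symmetric_irrep}), the constraint $w_M = w_{N_1} - w_{N_2} = 0$ forces $N_1 = N_2 \equiv N'$, so that
\begin{equation*}
\ket{M} = \sum_{N' \in \GT{\tau_n^\nmodes}} C^M_{N', \bar N'} \ket{N', \bar N'} = \sum_{N'} (-1)^{\varphi(N')} C^M_{N', \bar N'} \ket{\vec n', \vec n'} \, ,
\end{equation*}
where the phase comes from the relabelling $\ket{N', \bar N'} = (-1)^{\varphi(N')} \ket{\vec n', \vec n'}$ of the dual basis. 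This collapse of the double Fock sum to a single sum over $N'$ is precisely what allows the coherent-state contributions to be packaged into one modulus-squared matrix element, instead of a naive term-by-term expansion of the coherent-state series.

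The rest is a routine computation. Using the adjoint relation $\sandwich{M}{A^\ad}{\vec\alpha, \vec\alpha} = \overline{\sandwich{\vec\alpha, \vec\alpha}{A}{M}}$ together with the vectorisation identity
\begin{equation*}
\sandwich{\vec\alpha, \vec\alpha}{(\tau_n^\nmodes \otimes \bar\tau_n^\nmodes)(g)}{\vec n', \vec n'} = \Tr\!\left[\ketbra{\vec\alpha}{\vec\alpha}\, \tau_n^\nmodes(g)\ketbra{\vec n'}{\vec n'}\tau_n^\nmodes(g)^\ad\right] = \abs{\sandwich{\vec\alpha}{\tau_n^\nmodes(g)}{\vec n'}}^2 \, ,
\end{equation*}
which is manifestly real and positive, I see that complex conjugation acts trivially on the remaining factors (the Clebsch-Gordan coefficients are real by the Condon-Shortley convention, and the phases $(-1)^{\varphi(\argdot)}$ are integers). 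Collecting the factors then produces exactly \cref{eq:filter_function_heterodyne}. All formal manipulations are justified by the fact that $P_{\lambda_k}$ restricts the infinite-dimensional vector $\ket{\vec\alpha, \vec\alpha}$ to the finite-dimensional sector $\H_n^\nmodes \otimes \H_n^\nmodes$, so every sum reduces to a finite one and no convergence issue arises.
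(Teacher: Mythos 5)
Your proposal is correct and follows essentially the same route as the paper's proof: relabel the input in the dual \ac{GT} basis with the phase $(-1)^{\varphi(N_0)}$, project onto $\lambda_k$ via Clebsch--Gordan coefficients, use the selection rule together with the one-dimensionality of the weight spaces of $\tau_n^\nmodes$ to collapse the uncoupled expansion of $\ket{M}$ to the diagonal terms $N_1=N_2=N'$, and identify the remaining matrix element with $\abs{\sandwich{\vec \alpha}{\tau_n^\nmodes(g)}{\vec n'}}^2$. Your explicit remark that only weight-zero $M$ contribute (which is what actually forces $N_1=N_2$) makes a step that the paper leaves implicit slightly more transparent, but the argument is the same.
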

\begin{proof}
	The proof is analogous to the \ac{PNR} case.
	By a slight generalization of \cref{lem:sandwich_projector} to include coherent state measurements, we have
	\begin{align}
		f_{\lambda_k}(\vec \alpha, g)
		&=
		\frac{1}{s_{\lambda_k}} \sandwich{\ninput, \ninput}{P_{\lambda_k} \tau_n^\nmodes \otimes \bar \tau_n^\nmodes(g)^\ad}{\vec \alpha, \vec \alpha} \\
		\label{filter_cs_proof_2}
		&=
		\frac{1}{s_{\lambda_k}} (-1)^{\varphi(N_0)} \sandwich{N_0, \bar N_0}{P_{\lambda_k} \tau_n^\nmodes \otimes \bar \tau_n^\nmodes(g)^\ad}{\vec \alpha, \vec \alpha} \\
		\label{filter_cs_proof_3}
		&=
		\frac{1}{s_{\lambda_k}} (-1)^{\varphi(N_0)} \sum_{M \in \GT{\lambda_k}} C_{N_0, \bar N_0}^{M} \sandwich{M}{\tau_n^\nmodes \otimes \bar \tau_n^\nmodes(g)^\ad}{\vec \alpha, \vec \alpha} \\
		&=
		\label{filter_cs_proof_4}
		\frac{1}{s_{\lambda_k}} (-1)^{\varphi(N_0)} \sum_{M \in \GT{\lambda_k}} C_{N_0, \bar N_0}^{M} \sum_{N_1, N_2 \in \GT{\tau_n^\nmodes}} C_{N_1, \bar N_2}^{M} \sandwich{N_1, \bar N_2}{\tau_n^\nmodes \otimes \bar \tau_n^\nmodes(g)^\ad}{\vec \alpha, \vec \alpha} \\
		\label{filter_cs_proof_5}
		&=
		\frac{1}{s_{\lambda_k}} (-1)^{\varphi(N_0)} \sum_{M \in \GT{\lambda_k}} C_{N_0, \bar N_0}^{M}
		\sum_{N_1, N_2 \in \GT{\tau_n^\nmodes}} (-1)^{\varphi(N_2)} C_{N_1, \bar N_2}^{M}
		\sandwich{N_1, N_2}{\tau_n^\nmodes \otimes \bar \tau_n^\nmodes(g)^\ad}{\vec \alpha, \vec \alpha} \\
		\label{filter_cs_proof_6}
		&=
		\frac{1}{s_{\lambda_k}} (-1)^{\varphi(N_0)} \sum_{M \in \GT{\lambda_k}} C_{N_0, \bar N_0}^{M}
		\sum_{N_1 \in \GT{\tau_n^\nmodes}} (-1)^{\varphi(N_1)} C_{N_1, \bar N_1}^{M}
		\sandwich{N_1, N_1}{\tau_n^\nmodes \otimes \bar \tau_n^\nmodes(g)^\ad}{\vec \alpha, \vec \alpha}
		 \, .
	\end{align}
	In \cref{filter_cs_proof_2}, we used the identification with \ac{GT} patterns and introduced the phases described in \cref{eq:phase_dual_GTpattern}.
	In \cref{filter_cs_proof_3}, we projected $\ket{N_0, \bar N_0}$ onto $\lambda_k$, c.f. \cref{eq:projection_fock_state}.
	In \cref{filter_cs_proof_4}, we applied the Clebsch-Gordan decomposition $\ket{M} = \sum_{N_1, N_2 \in \GT{\tau_n^\nmodes}} C_{N_1, \bar N_2}^{M} \ket{N_1, N_2}$.
	In \cref{filter_cs_proof_5}, we used again \cref{eq:phase_dual_GTpattern}, and in \cref{filter_cs_proof_6} we used selection rules for Clebsch-Gordan coefficients, since $N_1, N_2 \in \GT{\tau_n^\nmodes}$. 
	By the identification $\ket{\vec n_1} = \ket{N_1}$, we have
	\begin{equation}
		\sandwich{N_1, N_1}{\tau_n^\nmodes \otimes \bar \tau_n^\nmodes(g)^\ad}{\vec \alpha, \vec \alpha}
		=
		\abs{ \sandwich{\vec n_1}{\tau_n^\nmodes(g)}{\vec \alpha} }^2 \, ,
	\end{equation}
	and the assertion is proved.
\end{proof}	
We remark that an expression analogous to \cref{eq:filter_function_PNR_alternative} as a weighted sum of matrix coefficients of $\lambda_k$ can be worked out by considering the expansion of $\ket{\vec \alpha}$ in the Fock basis.

\subsection{Moments for the heterodyne measurement setting}

In this section, we provide explicit expressions for first two moments of probability of the filter function \eqref{eq:filter_function} in the case of heterodyne measurements, for which the ideal probability distribution is $p(\vec \alpha \lvert g) = \sandwich{\vec \alpha}{\omega_n^\nmodes(g)(\rho)}{\vec \alpha}$, $\vec \alpha \in \C^\nmodes$.
In particular, the ideal second moment will provide an upper bound to the sampling complexity of the protocol, Cf. \cref{sec:guarantees}.
As in \cref{sec:moments_PNR}, the proofs rely on the application of Schur's orthogonality relations \eqref{eq:orthogonality_relations}.

\begin{lemma}
	\label{lem:first_moment_filter_function_CS}
	For a heterodyne measurement setting, $\rho = \ketbra{\ninput}{\ninput}$ as input state, and an \ac{irrep} $\lambda_k$ of $\tau_n^\nmodes \otimes \bar \tau_n^{\nmodes}$, the following holds:
	\begin{equation}
		\label{eq:first_moment_filter_function_CS_CG}
		\EE[f_{\lambda_k}]
		=
		\frac{1}{d_{\lambda_k} s_{\lambda_k}} \sum_{M \in \GT{\lambda_k}} \abs{C_{N_0, \bar N_0}^{M}}^2
		\sum_{\substack{\vec n_1, \vec n_2 \in \H_n^\nmodes \\ \vec m_1, \vec m_2 \in \H_n^\nmodes}}
		(-1)^{\varphi(M_1) + \varphi(M_2)} I(\vec n_1, \vec n_2, \vec m_1, \vec m_2) \sum_{S \in \GT{\lambda_k}} C_{N_1, \bar M_1}^{S} C_{N_2, \bar M_2}^{S} \, ,
	\end{equation}
	where $\ket{\ninput} = \ket{\ninputGT}, \ket{\vec n_1} \equiv \ket{N_1}, \ket{\vec n_2} \equiv \ket{N_2}, \ket{\vec m_1} \equiv \ket{M_1}, \ket{\vec m_1} \equiv \ket{M_1}$, $I$ is as in \cref{eq:alpha_integral_multi-index} and $\varphi$ is defined in \cref{eq:phase_dual_GTpattern}.
\end{lemma}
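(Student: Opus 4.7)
The plan is to adapt the proof of Lemma \ref{lem:first_moment_filter_function_pnr} (the PNR case) to the heterodyne setting. The two essential differences are that (i) the measurement outcome label $\vec\alpha$ ranges continuously over $\CC^\nmodes$ instead of discretely over $\NN^\nmodes$, and (ii) coherent states are superpositions over all Fock states, which introduces the Gaussian integrals $I(\cdots)$ in \cref{eq:alpha_integral_multi-index} on top of the group integrals over $\SU(\nmodes)$. The structure of the argument otherwise carries over verbatim.

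I would begin by writing
\begin{equation*}
\EE[f_{\lambda_k}] = \frac{1}{s_{\lambda_k}}\int d^2\vec\alpha \int dg\, \sandwich{\ninputGT,\ninputGT}{P_{\lambda_k}(\tau_n^\nmodes\otimes\bar\tau_n^\nmodes)(g)^\ad}{\vec\alpha,\vec\alpha}\sandwich{\vec\alpha,\vec\alpha}{(\tau_n^\nmodes\otimes\bar\tau_n^\nmodes)(g)}{\ninputGT,\ninputGT},
\end{equation*}
and expand each of the four copies of $\ket{\vec\alpha}$ in the Fock basis using \cref{eq:multimode_cs}. Because $P_{\lambda_k}$ lives on $\H_n^\nmodes\otimes\H_n^\nmodes$, only terms with particle number equal to $n$ in each tensor slot survive; the expansions truncate to finite sums over $\vec n_1,\vec m_1,\vec n_2,\vec m_2\in\H_n^\nmodes$, which legitimizes interchanging summation and integration. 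Performing the $\vec\alpha$-integral then produces exactly $\sqrt{\vec n_1!\vec m_1!\vec n_2!\vec m_2!}\,I(\vec n_1,\vec n_2,\vec m_1,\vec m_2)$ by \cref{eq:alpha_integral_solved_multiindex} (with $K=4$), where the $\sqrt{}$-factors are absorbed into the normalization of the $I$ function.

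Next, I would identify Fock states with GT patterns $\ket{\vec n_i}=\ket{N_i}$, $\ket{\vec m_i}=\ket{M_i}$ and relabel second tensor factors as dual states using \cref{eq:conjugation_relative_phase}. The input state contributes a phase $(-1)^{2\varphi(\ninputGT)}=1$, while the two outcome states contribute $(-1)^{\varphi(M_1)+\varphi(M_2)}$ (the $N_i$ phases do \emph{not} appear since $\ket{N_i}$ stays in the first tensor slot). Applying \cref{lem:sandwich_projector} to the first factor and \cref{lem:sandwich_no_projector} to the second reduces the $g$-integral to a single Schur orthogonality application \cref{eq:orthogonality_relations}, which forces the irrep appearing in the Clebsch-Gordan decomposition of $\tau_n^\nmodes\otimes\bar\tau_n^\nmodes$ to coincide with $\lambda_k$, and collapses pairs of coupled-basis labels through Kronecker deltas. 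This produces the factor $d_{\lambda_k}^{-1}\sum_{M\in\GT{\lambda_k}}|C_{N_0,\bar N_0}^{M}|^2$ from the projector side and $\sum_{S\in\GT{\lambda_k}}C_{N_1,\bar M_1}^{S}C_{N_2,\bar M_2}^{S}$ from the other, exactly as required.

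The main obstacle is not any single nontrivial estimate but rather the bookkeeping: four independent Fock-space indices, four GT-pattern phases from the dual relabelings, two Clebsch-Gordan expansions, and the Schur orthogonality all have to line up consistently, and one must verify that the selection rule $w_{\ninputGT}+w_{\ninputGTdual}=\vec 0$ (inherited from \cref{eq:weight_sum_GT+GTdual}) restricts the sums correctly. Interchanging the integral over $\vec\alpha$ with the sums is justified by the absolute convergence of the Fock expansion of $\ket{\vec\alpha}$ combined with the finite-dimensional projection $P_{\lambda_k}$, so no additional analytic care beyond Fubini is needed.
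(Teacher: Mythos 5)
Your proposal follows essentially the same route as the paper's own proof: expand the four coherent-state factors in the Fock basis (truncated to $\H_n^\nmodes$ by particle-number preservation), pull out the Gaussian integral as $I(\vec n_1,\vec n_2,\vec m_1,\vec m_2)$, relabel the second tensor slots as dual GT states to collect the phases $(-1)^{\varphi(M_1)+\varphi(M_2)}$, and then reduce the group integral via \cref{lem:sandwich_projector}, \cref{lem:sandwich_no_projector}, and Schur orthogonality to obtain the stated Clebsch--Gordan expression. The argument is correct and the bookkeeping you describe matches the paper's computation step for step.
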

\begin{proof}
	As in the proof of \cref{thm:frame_operator_heterodyne}, considering the multi-mode expansion of $\vec \alpha$, only the $n$-particle component provides non-trivial contribution to the first moment, since $\omega_n^\nmodes$ acts non trivially on $\H_n^\nmodes$ only.
	Recalling \cref{eq:alpha_integral_multi-index}, it follows
	\begin{align}
		\EE[f_{\lambda_k}]
		&=
		\frac{1}{s_{\lambda_k}} \int d^2\vec \alpha \int dg\, 
		\sandwich{\ninput}{P_{\lambda_k} \circ \omega_n^\nmodes(g)^\ad(\ketbra{\vec \alpha}{\vec \alpha})}{\ninput} 
		\sandwich{\vec \alpha}{\omega_n^\nmodes(g)(\ketbra{\ninput}{\ninput})}{\vec \alpha} \\
		&=
		\frac{1}{s_{\lambda_k}} \sum_{\substack{\vec n_1, \vec n_2 \in \H_n^\nmodes \\ \vec m_1, \vec m_2 \in \H_n^\nmodes}}
		I(\vec n_1, \vec n_2, \vec m_1, \vec m_2)
		\int d^2\vec \alpha \int dg\, 
		\sandwich{\ninput}{P_{\lambda_k} \circ \omega_n^\nmodes(g)^\ad(\ketbra{\vec n_1}{\vec m_1})}{\ninput} 
		\sandwich{\vec n_2}{\omega_n^\nmodes(g)(\ketbra{\ninput}{\ninput})}{\vec m_2}
		\, .
	\end{align}
	In particular,
	\begin{align}
		H
		&\equiv
		\int d^2\vec \alpha \int dg\, 
		\sandwich{\ninput}{P_{\lambda_k} \circ \omega_n^\nmodes(g)^\ad(\ketbra{\vec n_1}{\vec m_1})}{\ninput} 
		\sandwich{\vec n_2}{\omega_n^\nmodes(g)(\ketbra{\ninput}{\ninput})}{\vec m_2} \\
		&=
		\int dg\, \sandwich{\ninput, \ninput}{P_{\lambda_k} \tau_n^\nmodes \otimes \bar \tau_n^\nmodes(g)^\ad}{\vec n_1, \vec m_1} \sandwich{\vec n_2, \vec m_2}{\tau_n^\nmodes \otimes \bar \tau_n^\nmodes(g)}{\ninput, \ninput} \\
		&=
		\int dg\, \sandwich{\ninputGT, \ninputGT}{P_{\lambda_k} \tau_n^\nmodes \otimes \bar \tau_n^\nmodes(g)^\ad}{N_1, M_1} \sandwich{N_2, M_2}{\tau_n^\nmodes \otimes \bar \tau_n^\nmodes(g)}{\ninputGT, \ninputGT} \\
		&=
		(-1)^{\varphi(M_1) + \varphi(M_2)}
		\int dg\, \sandwich{\ninputGT, \ninputGTdual}{P_{\lambda_k} \tau_n^\nmodes \otimes \bar \tau_n^\nmodes(g)^\ad}{N_1, \bar M_1} \sandwich{N_2, \bar M_2}{\tau_n^\nmodes \otimes \bar \tau_n^\nmodes(g)}{\ninputGT, \ninputGTdual} \\
		&=
		(-1)^{\varphi(M_1) + \varphi(M_2)} \sum_{M \in \GT{\lambda_k}} C_{N_0, \bar N_0}^{M}
		\int dg\,\sandwich{M}{\lambda_k(g)^\ad}{N_1, \bar M_1}
		\sandwich{N_2, \bar M_2}{\bigoplus_{j=0}^n \lambda_j(g)}{N_0, \bar N_0}
		\, .
	\end{align}
	In the last step, we projected $\ket{\ninputGT, \ninputGTdual}$ onto $\lambda_k$ and we used the \ac{irrep} decomposition of $\tau_n^\nmodes \otimes \bar \tau_n^\nmodes$, see \cref{lem:decomposition_conjugate_action}.
	The latter can be computed by slight modifications of \cref{lem:sandwich_projector,lem:sandwich_no_projector}.
	In particular, by orthogonality relations, the integral is non-zero only if $j=k$ and for basis vectors of $\lambda_k$.
	In other words, it is enough to restrict the Clebsch-Gordan decompositions to the $\lambda_k$-th components:
	\begin{align}
		\left. \ket{N_1, \bar M_1} \right|_{\lambda_k} &= \sum_{S_1 \in \GT{\lambda_k}} C_{N_1, \bar M_1}^{S_1} \ket{S_1} \, , \\
		\left. \ket{N_2, \bar M_2} \right|_{\lambda_k} &= \sum_{S_2 \in \GT{\lambda_k}} C_{N_2, \bar M_2}^{S_2} \ket{S_2} \, , \\
		\left. \ket{N_0, \bar N_0} \right|_{\lambda_k} &= \sum_{S_3 \in \GT{\lambda_k}} C_{N_0, \bar N_0}^{S_3} \ket{S_3} \, .		
	\end{align}
	Therefore, Schur's orthogonality relations \eqref{eq:orthogonality_relations}, we have
	\begin{align}
		H
		&=
		\frac{1}{d_{\lambda_k}} (-1)^{\varphi(M_1) + \varphi(M_2)}
		\sum_{M \in \GT{\lambda_k}} \abs{C_{N_0, \bar N_0}^{M}}^2 \sum_{S \in \GT{\lambda_k}} C_{N_1, \bar M_1}^{S} C_{N_2, \bar M_2}^{S}
	\end{align}
	from which the assertion follows .
\end{proof}

Lastly, we have the following explicit expression for the second moment:
\begin{theorem}
	\label{thm:second_moment_filter_function_cs}
	Consider a passive \ac{RB} experiment with balanced heterodyne measurement setting, initial Fock state $\rho = \ketbra{\ninput}{\ninput}$,
	and $\lambda$ is an irrep of $\omega_n^\nmodes = \tau_n^\nmodes (\argdot) \tau_n^{\nmodes \ad}$.
	Then, the following holds:
	\begin{equation}
		\begin{aligned}
			\EE[f_{\lambda_k}^2]
			=
			\frac{1}{s_{\lambda_k}^2} (-1)^{\varphi(N_0)}
			\sum_{\substack{\vec n_1, \vec n_2, \vec n_3,\\ \vec m_1, \vec m_2, \vec m_3}}
			(-1)^{\sum_{i=1}^3 \varphi(M_i)}
			I((\vec n_i)_{i=1}^3 (\vec m_i)_{i=1}^3) g_k(\vec N, \vec M, N_0) \, ,
		\end{aligned}
	\end{equation}
	where $\vec N \equiv (N_1, N_2, N_3), \vec M \equiv (M_1, M_2, M_3)$, $N_i \equiv N(\vec n_i), M_i \equiv M(\vec m_i)$ are \ac{GT} patterns associated with $\vec n_i, \vec m_i$, respectively, $I((\vec n_i)_{i=1}^3, (\vec m_i)_{i=1}^3) \equiv I(\vec n_1, \vec n_2, \vec n_3, \vec m_1, \vec m_2, \vec m_3)$ is as in \cref{eq:alpha_integral_solved_multiindex} and
	\begin{equation}
		\begin{aligned}
			g_k(\vec N, \vec M, N_0)
			&=
			\sum_{l=0}^{\min\{n, 2k\}} \frac{1}{d_{\lambda_k}} \sum_{r=1}^{m_l}
			\sum_{\substack{M, M' \in \GT{\lambda_k} \\ L, L' \in \GT{\lambda_k} \\ R, R' \in \GT{\lambda_l}}}
			C_{N_0, \bar N_0}^{M} C_{N_0, \bar N_0}^{M'} C_{M, M'}^{R, r} C_{N_0, \bar N_0}^{R} C_{N_3, \bar M_3}^{R'}
			C_{N_1, \bar M_1}^{L} C_{N_2, \bar M_2}^{L'} C_{L, L'}^{R', r} \, .
		\end{aligned}
	\end{equation}
\end{theorem}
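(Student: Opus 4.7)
The plan is to combine the coherent-state expansion technique used in the proofs of \cref{thm:frame_operator_heterodyne} and \cref{lem:first_moment_filter_function_CS} with the representation-theoretic argument of the \ac{PNR} second moment, \cref{thm:second_moment_filter_function_pnr}. Concretely, I would start from
\begin{equation*}
\EE[f_{\lambda_k}^2]
= \frac{1}{s_{\lambda_k}^2} \int d^2 \vec\alpha \int dg\,
\sandwich{\ninput,\ninput}{P_{\lambda_k} (\tau_n^\nmodes \otimes \bar\tau_n^\nmodes)(g)^\ad}{\vec\alpha,\vec\alpha}^2 \,
\sandwich{\vec\alpha,\vec\alpha}{(\tau_n^\nmodes \otimes \bar\tau_n^\nmodes)(g)}{\ninput,\ninput}
\end{equation*}
and insert the Fock expansion \eqref{eq:multimode_cs} into each of the six coherent-state factors (four kets from the two filter copies, two bras from the Born probability). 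Because $\omega_n^\nmodes$ preserves particle number, only the $n$-particle components $\vec n_i, \vec m_i \in \H_n^\nmodes$, $i = 1,2,3$, survive, and upon grouping the powers of $\vec\alpha$ and $\bar{\vec\alpha}$, the Gaussian integral over $\vec\alpha$ collapses into $I\big((\vec n_i)_{i=1}^3,(\vec m_i)_{i=1}^3\big)$ of \cref{eq:alpha_integral_solved_multiindex}.

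Next, I would relabel each $\bra{\vec m_i}=\bra{M_i}$ as a dual \ac{GT} state $\bra{\bar M_i}$ via \cref{eq:conjugation_relative_phase}, collecting the overall phase $(-1)^{\sum_{i=1}^3 \varphi(M_i)}$, and similarly rewrite one copy of $\bra{\ninputGT}$ as $(-1)^{\varphi(N_0)}\bra{\ninputGTdual}$. What remains is a Haar integral of matrix coefficients of $\tau_n^\nmodes \otimes \bar\tau_n^\nmodes$ with the same structure as in the proof of \cref{thm:second_moment_filter_function_pnr}; the only difference is that the Fock pairs $(\vec n,\vec n)$ appearing in the \ac{PNR} case are replaced by distinct pairs $(\vec n_i,\bar M_i)$. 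Projecting $\ket{\ninputGT,\ninputGTdual}$ onto $\lambda_k$ via \cref{eq:projection_fock_state} and invoking \cref{lem:sandwich_projector,lem:sandwich_no_projector} reduces the group integral to
\begin{equation*}
\int dg\, \sandwich{M, M'}{\lambda_k(g)^{\ad \otimes 2}}{L, L'} \sandwich{J}{\lambda_j(g)}{J'}
\end{equation*}
with $M,M',L,L' \in \GT{\lambda_k}$ and $J,J' \in \GT{\lambda_j}$, $j \in \{0,\dots,n\}$.

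Finally, I would apply \cref{lem:lambda_two_copies_decomp} to write $\lambda_k^{\otimes 2} = \bigoplus_{l=0}^{2k} \lambda_l^{\oplus m_l} \oplus L$ and decompose $\ket{M,M'}$ and $\ket{L,L'}$ via Clebsch-Gordan coefficients $C_{M,M'}^{R,r}$ and $C_{L,L'}^{R',r}$ with $R,R' \in \GT{\lambda_l}$, $r \in [m_l]$. Schur's orthogonality \eqref{eq:orthogonality_relations} then pins $j=l$ (so the sum over $j$ truncates to $0 \leq l \leq \min\{n, 2k\}$), produces the prefactor $1/d_{\lambda_l}$, and identifies $J'\leftrightarrow R$, $J\leftrightarrow R'$. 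Collecting the eight Clebsch-Gordan coefficients---namely $C_{N_0, \bar N_0}^{M}$ and $C_{N_0, \bar N_0}^{M'}$ from the two projections onto $\lambda_k$, $C_{N_1,\bar M_1}^L$ and $C_{N_2,\bar M_2}^{L'}$ from the Clebsch-Gordan decomposition of the filter kets, $C_{N_0, \bar N_0}^R$ and $C_{N_3, \bar M_3}^{R'}$ from the un-projected copy, and $C_{M,M'}^{R,r}$ and $C_{L,L'}^{R',r}$ from the $\lambda_k^{\otimes 2}$ decomposition---yields precisely the stated $g_k(\vec N, \vec M, N_0)$.

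The main obstacle is purely bookkeeping: correctly assigning the six Fock indices to the $\vec\alpha$- and $\bar{\vec\alpha}$-powers in the order required by $I$, and keeping track of the sign factors $\varphi(M_i),\varphi(N_0)$ arising from the dual-pattern relabeling. No new analytical idea is required beyond combining the techniques already developed for \cref{thm:frame_operator_heterodyne,thm:second_moment_filter_function_pnr}.
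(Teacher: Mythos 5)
Your proposal is correct and follows essentially the same route as the paper's own proof: Fock-expand the six coherent-state factors so the Gaussian integral collapses to $I\big((\vec n_i)_{i=1}^3,(\vec m_i)_{i=1}^3\big)$, relabel the bras as dual \ac{GT} states to collect the phases $(-1)^{\varphi(N_0)+\sum_i\varphi(M_i)}$, and then reduce the Haar integral exactly as in \cref{thm:second_moment_filter_function_pnr} via \cref{lem:lambda_two_copies_decomp} and Schur orthogonality. Incidentally, your prefactor $1/d_{\lambda_l}$ is what Schur orthogonality actually produces; the $1/d_{\lambda_k}$ appearing inside the paper's displayed formula for $g_k$ is a typo.
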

\begin{proof}
	By \cref{eq:multimode_cs,eq:alpha_integral_multi-index}, we have, for any $\lambda_k$,
	\begin{align}
		\EE[f_{\lambda_k}^2]
		&=
		\frac{1}{s_{\lambda_k}^2} \int d^2\vec \alpha \int dg \,
		\sandwich{\ninput}{P_{\lambda_k} \circ \omega_n^\nmodes(g)^\ad (\ketbra{\vec \alpha}{\vec \alpha})}{\ninput}^2
		\sandwich{\vec \alpha}{\omega_n^\nmodes(g) (\ketbra{\ninput}{\ninput})}{\vec \alpha} \\
		&=
		\frac{1}{s_{\lambda_k}^2} \sum_{\substack{\vec n_1, \vec n_2, \vec n_3 \in \H_n^\nmodes \\ \vec m_1, \vec m_2, \vec m_3 \in \H_n^\nmodes}}
		I((\vec n_i)_{i=1}^3, (\vec m_i)_{i=1}^3)
		\int dg\, 
		\sandwich{\ninput}{P_{\lambda_k} \circ \omega_n^\nmodes(g)^\ad (\ketbra{\vec n_1}{\vec m_1})}{\ninput} \\
		&\times
		\sandwich{\ninput}{P_{\lambda_k} \circ \omega_n^\nmodes(g)^\ad (\ketbra{\vec n_2}{\vec m_2})}{\ninput}
		\sandwich{\vec n_3}{\omega_n^\nmodes(g) (\ketbra{\ninput}{\ninput})}{\vec m_3} \\
		&\equiv
		\frac{1}{s_{\lambda_k}^2} \sum_{\substack{\vec n_1, \vec n_2, \vec n_3,\\ \vec m_1, \vec m_2, \vec m_3}}
		I((\vec n_i)_{i=1}^3, (\vec m_i)_{i=1}^3) \, G_k((\vec n_i)_{i=1}^3, (\vec m_i)_{i=1}^3, \ninput) \, ,
	\end{align}
	where
	\begin{equation}
		\begin{aligned}
			G_k((\vec n_i)_{i=1}^3, (\vec m_i)_{i=1}^3, \ninput)
			&\equiv
			\int dg\,
			\sandwich{\ninput}{P_{\lambda_k} \circ \omega_n^\nmodes(g)^\ad(\ketbra{\vec n_1}{\vec m_1})}{\ninput}
			\sandwich{\ninput}{P_{\lambda_k} \circ \omega_n^\nmodes(g)^\ad(\ketbra{\vec n_2}{\vec m_2})}{\ninput} \\
			&\times
			\sandwich{\vec n_3}{\omega_n^\nmodes(g) (\ketbra{\ninput}{\ninput})}{\vec m_3} \, .
		\end{aligned}
	\end{equation}
	Introducing \ac{GT} patterns, we have 
	\begin{equation}
		G_k((\vec n_i)_{i=1}^3, (\vec m_i)_{i=1}^3, \ninput) \equiv G_k(\vec N, \vec M, N_0) \, , \quad
		\vec N = (N_1, N_2, N_3) \, , \ \vec M = (M_1, M_2, M_3) \, .
	\end{equation}
	Then, by \cref{eq:phase_dual_GTpattern}, the latter assumes the following form:
	\begin{align}
		G_k(\vec N, \vec M, N_0)
		&=
		\int dg\, 
		\sandwich{\ninputGT, \ninputGT}{P_{\lambda_k} \tau_n^\nmodes \otimes \bar \tau_n^\nmodes(g)^\ad}{N_1, M_1}
		\sandwich{\ninputGT, \ninputGT}{P_{\lambda_k} \tau_n^\nmodes \otimes \bar \tau_n^\nmodes(g)^\ad}{N_2, M_2} \\
		&\times
		\sandwich{N_3, M_3}{\tau_n^\nmodes \otimes \bar \tau_n^\nmodes(g)}{\ninputGT, \ninputGT} \\
		&=
		(-1)^{\varphi(N_0) + \sum_{i=1}^3 \varphi(M_i)}
		\sum_{M, M' \in \GT{\lambda_k}} C_{N_0, \bar N_0}^{M} C_{N_0, \bar N_0}^{M'} \\
		&\times
		\int dg\,
		\sandwich{M, M'}{\lambda_k(g)^{\otimes 2 \ad}}{N_1, \bar M_1, N_2, \bar M_2}
		\sandwich{N_3, \bar M_3}{\bigoplus_{j=0}^n \lambda_j(g)}{N_0, \bar N_0} \\
		&\equiv
		(-1)^{\varphi(N_0) + \sum_{i=1}^3 \varphi(M_i)} g_k(\vec N, \vec M, N_0) \, .
	\end{align}
	We compute the integral 
	\begin{equation}
		g_k(\vec N, \vec M, N_0)
		\equiv
		\int dg\,
		\sandwich{M, M'}{\lambda_k(g)^{\otimes 2 \ad}}{N_1, \bar M_1, N_2, \bar M_2}
		\sandwich{N_3, \bar M_3}{\bigoplus_{j=0}^n \lambda_j(g)}{N_0, \bar N_0} 
	\end{equation}
	as in the proof of \cref{thm:second_moment_filter_function_pnr}:
	Consider the \ac{irrep} decomposition of $\lambda_k^{\otimes 2}$ as in \cref{lem:lambda_two_copies_decomp}.
	Then, by orthogonality of the matrix coefficients, the non-trivial contributions to the integral come from \acp{irrep} that appear --with their multiplicities-- in both $\omega_n^\nmodes$ and $\lambda_k^{\otimes 2}$.
	This implies
	\begin{equation}
		g_k(\vec N, \vec M, N_0)
		=
		\sum_{l = 0}^{\min\{n, 2k\}} \sum_{r=1}^{m_l}
		\int dg\, \sandwich{M, M'}{\lambda_l^{(r)}(g)^\ad}{N_1, \bar M_1, N_2, \bar M_2} \sandwich{N_3, \bar M_3}{\lambda_l(g)}{N_0, \bar N_0} \, .
	\end{equation}
	Therefore, by the Clebsch-Gordan decompositions
	\begin{align}
		\ket{M, M'}
		&=
		\sum_{R\in \GT{\lambda_l}} C_{M, M'}^{R, r} \ket{R, r} \, , \\
		\ket{N_3, \bar M_3}
		&=
		\sum_{J \in \GT{\lambda_l}} C_{N_3, \bar M_3}^{J} \ket{J} \, , \\
		\ket{N_0, \bar N_0}
		&=
		\sum_{J' \in \GT{\lambda_l}} C_{N_0, \bar N_0}^{J'} \ket{J'} \, , \\
		\ket{N_1, \bar M_1, N_2, \bar M_2}
		&=
		\sum_{L, L' \in \GT{\lambda_k}} C_{N_1, \bar M_1}^{L} C_{N_2, \bar M_2}^{L'} \ket{L, L'} \, , \\
		\ket{L, L'}
		&=
		\sum_{R' \in \GT{\lambda_l}} C_{L, L'}^{R' r} \ket{R', r} \, ,
	\end{align}
	it follows
	\begin{equation}
		\int dg\, \sandwich{R,r}{\lambda_l^{(r)}(g)^\ad}{R',r} \sandwich{J}{\lambda_l(g)}{J'} = \frac{1}{d_{\lambda_l}} \delta_{R,J'} \delta_{R',J}
	\end{equation}
	by Schur's orthogonality relations \eqref{eq:orthogonality_relations}.
	Hence,
	\begin{equation}
		\begin{aligned}
			g_k(\vec N, \vec M, N_0)
			&=
			\sum_{l=0}^{\min\{n, 2k\}} \frac{1}{d_{\lambda_k}} \sum_{r=1}^{m_l}
			\sum_{\substack{M, M' \in \GT{\lambda_k} \\ L, L' \in \GT{\lambda_k} \\ R, R' \in \GT{\lambda_l}}}
			C_{N_0, \bar N_0}^{M} C_{N_0, \bar N_0}^{M'} C_{M, M'}^{R, r} C_{N_0, \bar N_0}^{R} C_{N_3, \bar M_3}^{R'}
			C_{N_1, \bar M_1}^{L} C_{N_2, \bar M_2}^{L'} C_{L, L'}^{R', r} \, ,
		\end{aligned}
	\end{equation}
	and the assertion follows from a suitable sorting of all the terms. 
\end{proof}

\section{Additional details on numerical experiments}
\label{sec:numerics}

Here we provide additional details and plots for the numerical experiment discussed in \cref{fig:decays_plot} and the numerical implementation of the filter function \eqref{eq:filter_function_PNR}.
Computation of the filter function was performed on a MacBook Pro 2020 with Intel Quad-Core i5 (1,4 GHz), 8 GB LPDDR3 (2133 MHz).

The specifics of the simulation are the following:
We consider the collision-free input state $\ket{\ninput} = \ket{\vec 1_4} \equiv \ket{1111} \in \H_n^4$ and filter onto the $\lambda_k$ \ac{irrep} for $k=0, \dots, 4$.
Random unitaries $g_j^{(i)} \in \SU(4), i \in [\numsamples], j \in [\seqlength]$ are drawn from the Haar probability measure according to the procedure described in \cite{mezzadri_how_2006, ozols_how_2009}, the sequence lengths considered consisting of $\seqlength=1, \dots, 10$ Haar random unitaries.
We assume each gate comes with a probability $1-p$ of losing a particle on each mode.
In particular, we consider experiments with gate independent noise, where $\sqrt{p} = 0.95, 0.975, 0.99$ and simulate a gate dependent noise experiment where the transmittivity $\sqrt{p_j}$ of the $j$-th unitary in the sequence is drawn uniformly at random from the interval $[0.9, 1]$.
Concretely, denoting with $\mathcal L(g)$ the single-mode lossy channel, this means that the noisy gate is modeled as $\omega_n^\nmodes(g) \circ \mathcal L(g)^{\otimes \nmodes}$ for each gate $g$ in the sequence.
Lastly, random Fock states are drawn performing a boson sampling simulation using the Python module Piquasso \cite{kaposiPolynomialSpeedupTorontonian2022, kolarovszki2024piquassophotonicquantumcomputer}.
Additionally, we collected noiseless samples for evaluation of $\EE[f_{\lambda_k}^2]_\mathrm{ideal}$ by simply letting $p = 1$.
We collect $\numsamples=10000$ sampled pairs $( (g_j^{(i)})_{j=1}^{\seqlength}, \vec n^{(i)})$ and store them for post-processing.
We remark that the unitaries $g_1^{(i)}, \dots, g_\seqlength^{(i)}$ are used to collect exactly one state from the boson sampler, as throughout this work we consider the so-called \emph{single-shot estimator} \eqref{eq:filtered_data_estimator} \cite{heinrichRandomizedBenchmarkingRandom2023}.

By post-selecting on the outcome of the boson sampling experiment, we capture the estimation of particle loss rates as described in \cref{sec:particle_loss}.
The results, which employ the indicator filter function \eqref{eq:filter_indicator} are shown in \cref{fig:decays_indicator} (the estimator is again the empirical average of the filter function).
Since we run simulations under particle loss only, the estimated decay rates are similar to the ones shown in \cref{fig:decays_plot}. 

Next, we evaluate the filter function \eqref{eq:filter_function_PNR} for each sampled pair.
Specifically, we compute Clebsch-Gordan coefficients using the SUNRepresentations Julia library \cite{SUNrepresentations.jl}, which implements Alex's et al. algorithm for the computation of $\SU(\nmodes)$ Clebsch-Gordan coefficients \cite{alexNumericalAlgorithmExplicit2011}.
The algorithm can be sketched as follows:
For each \ac{irrep} $\lambda_k$ in $\tau_n^\nmodes \otimes \bar \tau_n^\nmodes$ (or --in the case of the second moment-- for each \ac{irrep} in $\lambda_k^{\otimes 2}$) one shall first find the Clebsch-Gordan coefficients of the highest weight state of $\lambda_k$ (and possibly resolve the ambiguity on the multiplicities by a suitable Gaussian elimination, in the case of $\lambda_k^{\otimes 2}$).
Lower weight states are obtained by repeated application of ladder operators.
This implies the calculation of all remaining Clebsch-Gordan coefficients by solving linear systems of equations.
We remark the computation of Clebsch-Gordan coefficients can be sped up by exploiting the symmetries of the weight spaces under the action of the Weyl group \cite{alexSUClebschGordanCoefficients2012} or using analytic expressions for the coefficients $C_{N, \bar N}^{M}$, with $N \in \GT{\tau_n^\nmodes}, M \in \GT{\lambda_k}$ \cite{vilenkinRepresentationLieGroups1992Vol3}.

Lastly, we analyze the signal form for the \acp{irrep} $\lambda_0, \dots, \lambda_4$ (we do not include $\lambda_1$ as there is no overlap with the chosen input state).
The results are shown in \cref{fig:decays_t0.95}.
\begin{figure}
	\centering
	\begin{subfigure}[ht]{.4\linewidth}
		\includegraphics[width=1\linewidth]{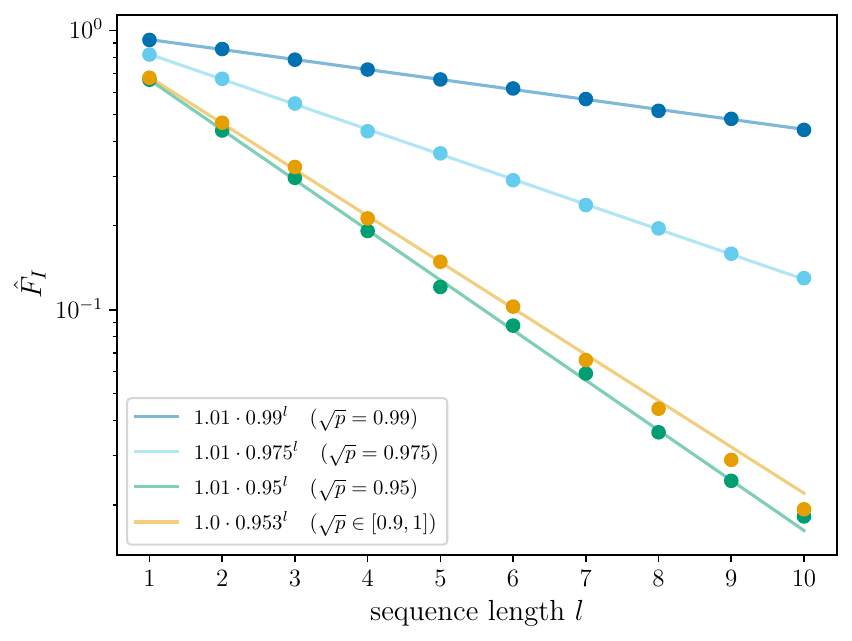}
		\caption{
			Post-selection on particle loss events under different values of transmittivity.
		}
		\label{fig:decays_indicator}	
	\end{subfigure}
	\begin{subfigure}[ht]{.4\linewidth}
		\includegraphics[width=1\linewidth]{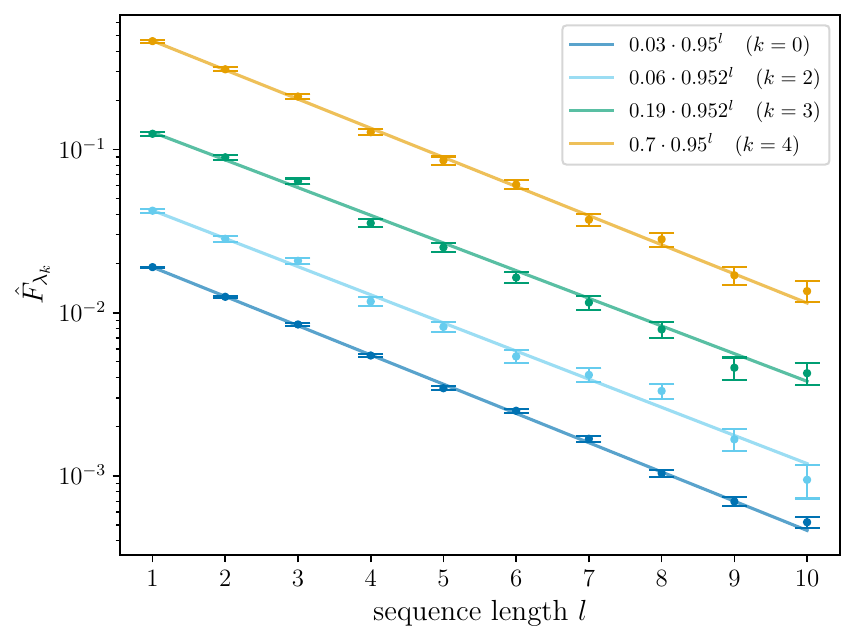}
		\caption{
			Passive \ac{RB} signal in the presence of particles loss with transmittivity $\sqrt{p} = 0.95$.
		}
		\label{fig:decays_t0.95}	
	\end{subfigure}
	\caption{
		Signal forms of passive \ac{RB} with $n = \nmodes = $ and input state $\ket{\vec 1_4}$ using $\numsamples=10000$ samples.
	}
\end{figure}

\begin{acronym}[POVM]\itemsep.5\baselineskip
\acro{ACES}{averaged circuit eigenvalue sampling}
\acro{AGF}{average gate fidelity}

\acro{BS}{Boson Sampling}
\acro{BOG}{binned outcome generation}
\acro{BW}{brickwork}

\acro{CCR}{canonical commutation relation}
\acro{CP}{completely positive}
\acro{CPT}{completely positive and trace preserving}
\acro{CV}{continuous variable}

\acro{DFE}{direct fidelity estimation} 

\acro{FT}{Fourier transform}

\acro{GKP}{Gottesmann-Knill-Preskill}
\acro{GST}{gate set tomography}
\acro{GTM}{gate-independent, time-stationary, Markovian}
\acro{GBS}{Gaussian Boson Sampling}
\acro{GT}{Gelfand--Tsetlin}

\acro{HOG}{heavy outcome generation}

\acro{irrep}{irreducible representation}

\acro{LOP}{linear optical passive}
\acro{LRC}{local random circuit}

\acro{MUBs}{mutually unbiased bases} 
\acro{MW}{micro wave}

\acro{NISQ}{noisy and intermediate scale quantum}

\acro{OVM}{operator-valued measure}

\acro{PNR}{particle number resolving}
\acro{POVM}{positive operator-valued measure}
\acro{PVM}{projector-valued measure}

\acro{QAOA}{quantum approximate optimization algorithm}

\acro{RB}{randomized benchmarking}

\acro{SFE}{shadow fidelity estimation}
\acro{SIC}{symmetric, informationally complete}
\acro{SPAM}{state preparation and measurement}

\acro{QST}{quantum state tomography}
\acro{QPT}{quantum process tomography}

\acro{rf}{radio frequency}

\acro{TT}{tensor train}
\acro{TV}{total variation}

\acro{VQE}{variational quantum eigensolver}

\acro{XEB}{cross-entropy benchmarking}

\end{acronym}

\bibliographystyle{myapsrev4-2}
\bibliography{CVs,Math,RB,mk}

\end{document}